\newenvironment{proofenum}{\begin{enumerate}[leftmargin=0pt,itemindent=2.5\parindent,labelsep=0pt,labelwidth=2.5\parindent,align=left]}{\end{enumerate}}
\numberwithin{equation}{section}
\numberwithin{theorem}{section}
\newcommand{\N}{\mathbb{N}}
\renewcommand{\P}{\mathbb{P}}
\newcommand{\Q}{\mathbb{Q}}
\newcommand{\R}{\mathbb{R}}
\renewcommand{\S}{\mathbb{S}}
\newcommand{\W}{\mathcal{W}}
\newcommand{\olW}{\W}
\newcommand{\lrparen}[1]{\left(#1\right)}
\newcommand{\lparen}[1]{\left(#1\right.}
\newcommand{\rparen}[1]{\left.#1\right)}
\newcommand{\lrsquare}[1]{\left[#1\right]}
\newcommand{\lsquare}[1]{\left[#1\right.}
\newcommand{\rsquare}[1]{\left.#1\right]}
\newcommand{\lrcurly}[1]{\left\{#1\right\}}
\newcommand{\lcurly}[1]{\left\{#1\right.}
\newcommand{\rcurly}[1]{\left.#1\right\}}
\newcommand{\inlrparen}[1]{(#1)}
\newcommand{\inlrsquare}[1]{[#1]}
\newcommand{\inlrcurly}[1]{\{#1\}}
\newcommand{\Ft}[1]{\mathcal{F}_{#1}}
\newcommand{\Ldpshort}[2]{L^{#1}_{#2}}
\newcommand{\LdpF}[1]{\Ldpshort{p}{#1}}
\newcommand{\LdqF}[1]{\Ldpshort{q}{#1}}
\newcommand{\LdiF}[1]{\Ldpshort{\infty}{#1}}
\newcommand{\LdoF}[1]{\Ldpshort{1}{#1}}
\newcommand{\LdpK}[3]{\Ldpshort{#1}{#2}(#3)}
\newcommand{\Lp}[2]{L^{#1}({#2})}
\newcommand{\Lpshort}[2]{L^{#1}_{#2}(\R)}
\newcommand{\LpF}[1]{\Lpshort{p}{#1}}
\newcommand{\LiF}[1]{\Lpshort{\infty}{#1}}
\newcommand{\E}[1]{\mathbb{E}\lrsquare{#1}}
\newcommand{\EQ}[1]{\mathbb{E}^{\mathbb{Q}}\lrsquare{#1}}
\newcommand{\ER}[1]{\mathbb{E}^{\mathbb{R}}\lrsquare{#1}}
\newcommand{\ES}[1]{\mathbb{E}^{\mathbb{S}}\lrsquare{#1}}
\newcommand{\Et}[2]{\E{\left.#1 \right| \mathcal{F}_{#2}}}
\newcommand{\EQt}[2]{\EQ{\left.#1 \right| \mathcal{F}_{#2}}}
\newcommand{\ERt}[2]{\ER{\left.#1 \right| \mathcal{F}_{#2}}}
\newcommand{\ESt}[2]{\ES{\left.#1 \right| \mathcal{F}_{#2}}}
\newcommand{\inE}[1]{\mathbb{E}\inlrsquare{#1}}
\newcommand{\inEQ}[1]{\mathbb{E}^{\mathbb{Q}}\inlrsquare{#1}}
\newcommand{\inES}[1]{\mathbb{E}^{\mathbb{S}}\inlrsquare{#1}}
\newcommand{\inEt}[2]{\inE{#1 \mid \mathcal{F}_{#2}}}
\newcommand{\inEQt}[2]{\inEQ{#1 \mid \mathcal{F}_{#2}}}
\newcommand{\inESt}[2]{\inES{#1 \mid \mathcal{F}_{#2}}}
\newcommand{\dQdP}{\frac{d\mathbb{Q}}{d\mathbb{P}}}
\newcommand{\dQndP}[1]{\frac{d\mathbb{Q}_{#1}}{d\mathbb{P}}}
\newcommand{\dQidP}{\dQndP{i}}
\newcommand{\dRdP}{\frac{d\mathbb{R}}{d\mathbb{P}}}
\newcommand{\dSdP}{\frac{d\mathbb{S}}{d\mathbb{P}}}
\newcommand{\dSidP}{\frac{d\mathbb{S}_i}{d\mathbb{P}}}
\newcommand{\genseq}[4]{(#1_#4)_{#4=#2}^{#3}}
\newcommand{\seq}[1]{\genseq{#1}{0}{T}{t}}
\newcommand{\FYvblank}{\tilde{F}_{(Y,v)}^{t}}
\newcommand{\FYv}[1]{\FYvblank[#1]}
\newcommand{\FYvz}[1]{\tilde{F}_{(Y,v)}^{0}[#1]}
\newcommand{\FQwblank}{F_{(\mathbb{Q},w)}^{t}}
\newcommand{\FQw}[1]{F_{(\mathbb{Q},w)}^{t}[#1]}
\newcommand{\FQwz}[1]{F_{(\mathbb{Q},w)}^{0}[#1]}
\newcommand{\trans}[1]{#1^{\mathsf{T}}}
\newcommand{\transp}[1]{\trans{(#1)}}
\newcommand{\prp}[1]{#1^{\perp}}
\newcommand{\plus}[1]{#1^+}
\newcommand{\plusp}[1]{\plus{(#1)}}
\newcommand{\diag}[1]{#1}
\newcommand{\cl}{\operatorname{cl}}
\newcommand{\co}{\operatorname{co}}
\newcommand{\as}{\text{a.s.}}
\newcommand{\Pas}{\;\P\text{-}\as}
\newcommand{\1}{\mathbf{1}}
\DeclareMathOperator*{\esssup}{ess\,sup}
\DeclareMathOperator*{\essinf}{ess\,inf}
\title{Multiportfolio time consistency for set-valued convex and coherent risk measures}
\author{Zachary Feinstein \and Birgit Rudloff}
\journalname{Finance and Stochastics}
\date{Received: date / Accepted: date}
\institute{Z. Feinstein \at
              Department of Electrical and Systems Engineering, Washington University in St. Louis, St. Louis, MO 63130\\
              Research supported by NSF RTG grant 0739195\\
              \email{zfeinstein@ese.wustl.edu}
           \and
           B. Rudloff \at
              Department of Operations Research and Financial Engineering; Bendheim Center for Finance, \\Princeton University,
              Princeton, NJ 08544\\
              Research supported by NSF award DMS-1007938\\
              \email{brudloff@princeton.edu}
}
\begin{document}
\maketitle
\abstract{
Equivalent characterizations of multiportfolio time consistency are deduced for closed convex and coherent set-valued risk measures on $L^p(\Omega,\Ft{},\P;\R^d)$
with image space in the power set of $L^p(\Omega,\Ft{t},\P;\R^d)$.  In the convex case, multiportfolio time consistency is equivalent to a cocycle condition on the sum
of minimal penalty functions.  In the coherent case, multiportfolio time consistency is equivalent to a generalized version of stability of the dual variables.
As examples, the set-valued entropic risk measure with constant risk aversion coefficient is shown to satisfy the cocycle condition for its minimal penalty functions,
the set of superhedging portfolios in markets with proportional transaction costs is shown to have the stability property and in markets with convex transaction costs
is shown to satisfy the composed cocycle condition, and a multiportfolio time consistent version of the set-valued average value at risk, the composed AV@R, is given
and its dual representation deduced.
\keywords{dynamic risk measures \and transaction costs \and set-valued risk measures \and time consistency \and multiportfolio time
consistency \and stability}
\subclass{91B30 \and 46A20 \and 46N10 \and 26E25}
\noindent {\bf JEL Classification} G32 $\cdot$ C61$\cdot$ G15 $\cdot$ G28\\~\\
\noindent The final publication is available at Springer via http://dx.doi.org/10.1007/s00780-014-0247-6.}

\section{Introduction}
The use of risk measures to calculate capital requirements has been widely studied, beginning with the seminal work on coherent risk measures by Artzner et al.~\cite{AD97,AD99}.  In \cite{FS02,FG02} the axioms of coherency have been relaxed to define convex risk measures.

Dynamic risk measures arise in a multi-period setting where risk is defined conditionally on information known at time $t$ described by a
filtration $\seq{\mathcal{F}}$.  Time consistency is a useful property for dynamic risk measures; it gives a relation between risks at
different times.  Conceptually, a risk measure is time consistent if, a priori, it is known that at a future time one portfolio is more risky
than another then at any prior time the same relation holds as well.  For dynamic risk measures and time consistency in the scalar setting, we
refer to~\cite{AD07,R04,DS05,CDK06,RS05,BN04,FP06,CS09,CK10,AP10,FS04} for the discrete time case and \cite{FG04,D06,DPRG10}  for the
continuous time case.  In particular, an equivalent property for time consistency in the coherent case is given by the stability of the dual
probability measures as seen in~\cite{AP10,CDK06,FS04,FP06,AD07}.  In the convex case a property on the sum of penalty functions was deduced
in~\cite{FP06,CDK06,BN08,BN09,AP10}.  This property is referred to as the cocycle property in~\cite{BN08,BN09}.

When multivariate random variables, or markets with transaction costs, are considered it becomes natural to work with set-valued risk measures; in this way capital requirements can be made in a basket of currencies or assets rather than a chosen num\'{e}raire.  In the static single period framework set-valued risk measures have been studied in~\cite{JMT04,HR08,HH10,HHR10}.  The dynamic version of set-valued risk measures were studied in~\cite{FR12,TL12,FR13-survey}.  We will take our setting from~\cite{FR12}.  In that paper, set-valued dynamic risk measures were discussed, and a set-valued version of time consistency, called multiportfolio time consistency, was defined.  In the scalar framework, time consistency is equivalent to the recursive form $\rho_t(X) = \rho_t(-\rho_s(X))$ ($0 \leq t < s \leq T$), and in the set-valued framework it was proven that multiportfolio time consistency is equivalent to the set-valued recursive form $R_t(X) = R_t(-R_s(X)) := \bigcup_{Z \in R_s(X)} R_t(-Z)$.
Up to this point multiportfolio time consistency was considered only for general risk measures, but not specifically for the convex or coherent cases.  The main results for this paper are to describe equivalent properties of multiportfolio time consistency for (conditionally) convex and coherent risk measures.

In section~\ref{sec_prelim}, we will review the basic results of \cite{FR12} that are needed for the present paper. We deduce several generalizations of the dual representation results of \cite{FR12}.
Section~\ref{sec_convex} deduces an equivalent characterization of multiportfolio time consistency for set-valued normalized closed (conditionally) convex
risk measures.  This is given by a property on the sum of minimal penalty functions, called the cocycle property in the scalar case in
\cite{BN08,BN09}, and is the extension of the scalar result of~\cite{FP06,CDK06,BN08,BN09,AP10}. The proof of this result is entirely different from the proof in the scalar case as the scalar method leads to difficulties in the set-valued case, due to the union in the set-valued recursive form. As an example we consider the set-valued entropic risk measure.
Section~\ref{sec_coherent} discusses two equivalent characterizations of multiportfolio time consistency for set-valued normalized
closed (conditionally) coherent risk measures.  The first is the result for convex risk measures applied to the coherent case. This characterization
has not been explicitly stated in the scalar case, but is useful for generating multiportfolio time consistent risk measures (see
e.g.~\cite{CK10}).
The second property is the set-valued generalization of stability of the dual variables, and generalizes the work in~\cite{AP10,CDK06,FS04,FP06,AD07}. The set of superhedging portfolios in markets with proportional transaction costs will serve as an example.
Section~\ref{sec_composition} gives a method for composing a risk measure backwards in time to create a multiportfolio time consistent
version of this risk measure.  Special attention is given to the composed form for (conditionally) convex and coherent risk measures. As examples, we will study superhedging under convex transaction costs in the convex case, and the composed AV@R in the coherent case.

\section{Set-valued dynamic risk measures}
\label{sec_prelim}
In this section, we will introduce some notations and, for easing the readability of the present paper, review basic definitions and main results about duality and multiportfolio time consistency of set-valued dynamic risk measures from \cite{FR12}.

Consider a filtered probability space $\inlrparen{\Omega,\Ft{},\seq{\mathcal{F}},\P}$ satisfying the usual conditions
where $\Ft{T} = \Ft{}$.  One can consider either discrete time $\{0,1,...,T\}$ or continuous time $[0,T]$.  Let $d\geq 1$ be the number of assets under consideration. Let $|\cdot|$ denote an arbitrary norm in $\R^d$ and let $\LdpF{t} :=
\Lp{p}{\Omega,\Ft{t},\P;\R^d}$ for any $p \in \inlrsquare{1,\infty}$ (with $\LdpF{} := \LdpF{T}$).  $\LdpF{t}$ denotes the linear space of
the equivalence classes of $\Ft{t}$-measurable functions $X: \Omega \to \R^d$ such that $\|X\|_p^p =
\inlrparen{\int_{\Omega}|X(\omega)|^p d\P} < +\infty$ for any $p \in [1,\infty)$, and $\|X\|_{\infty} = \esssup_{\omega \in \Omega} |X(\omega)| < +\infty$ for $p = \infty$.
We consider the dual pair $\inlrparen{\LdpF{}, \LdqF{}}$ with $p \in \inlrsquare{1,+\infty}$ and $q$ is such that
$\frac{1}{p}+\frac{1}{q} = 1$, and endow it with the norm topology, respectively the $\sigma \inlrparen{\LdiF{},\LdoF{}}$-topology on $\LdiF{}$ in the case $p = +\infty$.

We denote by $\LdpK{p}{t}{D_t} := \inlrcurly{Z \in \LdpF{t}: \; Z \in D_t \Pas}$ those random vectors in
$\LdpF{t}$ that take $\P$-a.s. values in $D_t$.  To distinguish the spaces of random vectors from those of random variables, we will write $\LpF{t} := \Lp{p}{\Omega,\Ft{t},\P;\R}$ for the linear space of the equivalence classes of $p$ integrable $\Ft{t}$-measurable random variables $X: \Omega \to \R$. Note that an element $X \in \LdpF{t}$ has components $X_1,...,X_d$ in
$\LpF{t}$. (In-)equalities between random vectors are always understood componentwise in the
$\P$-a.s. sense.  The multiplication between a random variable $\lambda\in \LiF{t}$ and a set of random vectors
$D\subseteq\LdpF{}$ is understood in the elementwise sense, i.e. $\lambda D=\{\lambda Y:Y\in D\}\subseteq\LdpF{}$ with $(\lambda
Y)(\omega)=\lambda(\omega)Y(\omega)$. The multiplication and division between (random) vectors is understood in the componentwise sense,
i.e. $\diag{x}y := \transp{x_1 y_1,...,x_d y_d}$ and $x/y := \transp{x_1/y_1,...,x_d/y_d}$ for $x,y \in \R^d$ ($x,y \in
\LdpF{t}$) and with $y_i \neq 0$ (almost surely) for every index $i \in \{1,...,d\}$ for division.

Let $\LdpF{t,+} := \inlrcurly{X \in \LdpF{t}: \; X \in \R^d_+ \Pas}$ denote the closed convex cone of
$\R^d$-valued $\Ft{t}$-measurable random vectors with $\P$-a.s. non-negative components.  Additionally let $\LdpF{t,++}
:= \inlrcurly{X \in \LdpF{t}: \; X \in \R^d_{++} \Pas}$ be the $\Ft{t}$-measurable random vectors which are
$\P$-a.s. positive.  Similarly define $\LdpF{+} := \LdpF{T,+}$ and $\LdpF{++} := \LdpF{T,++}$.
Let $X \succeq Y$ for $X,Y \in \LdpF{}$ denote $X-Y \in \LdpF{+}$.

As in \cite{K99} and discussed in \cite{S04,KS09}, the portfolios in this paper are in ``physical units'' of an asset rather than the value in a fixed num\'{e}raire via some price.  That is, for a portfolio $X \in \LdpF{t}$, the values of $X_i$ (for $1 \leq i \leq d$) are the number of units of asset $i$ in the portfolio at time $t$.

Let us assume $m$ of the $d$ assets are eligible ($1\leq m\leq d$), that is, they can be used  to compensate for the risk of a portfolio. Without loss of generality we can assume these are the first $m$ assets, then
$M = \R^m \times \{0\}^{d-m}$ denotes the subspace of eligible assets. By section~5.4
and proposition~5.5.1 in \cite{KS09}, $M_t := \LdpK{p}{t}{M}$  is a closed (weak* closed if
$p = +\infty$) linear subspace of $\LdpF{t}$.
Let us denote $M_{t,+} := M_t \cap \LdpF{t,+}$ and $M_{t,-} := -M_{t,+}$.

A conditional risk measure is a function which maps a $d$-dimensional random variable $X$ into \[\mathcal{P}\lrparen{M_t;M_{t,+}} := \lrcurly{D \subseteq M_t: D = D + M_{t,+}},\]  which is a subset of the power set $2^{M_t}$.
Conceptually, the value of a risk measure $R_t(X)$ is the collection of eligible portfolios at time $t$ which cover the risk of the portfolio $X$.
\begin{definition}
\label{defn_conditional}
A function $R_t: \LdpF{} \to \mathcal{P}\inlrparen{M_t;M_{t,+}}$ is a \textbf{\emph{normalized (conditional) risk measure}} at time $t$ if it is
\begin{enumerate}
\item $M_t$-translative: for every $m_t \in M_t: R_t\inlrparen{X + m_t} = R_t(X) - m_t$;
\item $\LdpF{+}$-monotone: $Y \succeq X$ implies $R_t(Y) \supseteq R_t(X)$;
\item finite at zero: $\emptyset \neq R_t(0) \neq M_t$;
\item normalized: for every $X \in \LdpF{t}: R_t(X) = R_t(X) + R_t(0)$.
\end{enumerate}

Additionally, a conditional risk measure at time $t$ is \textbf{\emph{(conditionally) convex}} if for all $X,Y \in \LdpF{}$,
for all $ 0 \leq \lambda \leq 1$ ($\lambda \in \LiF{t}$ such that $0 \leq \lambda \leq 1$)
\[R_t(\lambda X + (1-\lambda)Y) \supseteq \lambda R_t(X) + (1-\lambda)R_t(Y),\]
is \textbf{\emph{(conditionally) positive homogeneous}} if for all $X \in \LdpF{}$,
for all $\lambda >0$ ($\lambda \in L^\infty_t(\R_{++})$)
\[R_t(\lambda X) = \lambda R_t(X),\]
and is \textbf{\emph{(conditionally) coherent}} if it is (conditionally) convex and (conditionally) positive homogeneous.

A conditional risk measure at time $t$ is \textbf{\emph{closed}} if the graph of the risk measure, \[\operatorname{graph}R_t = \lrcurly{(X,u) \in \LdpF{} \times M_t: \; u \in R_t(X)},\] is closed in the product topology.

A conditional risk measure at time $t$ is \textbf{\emph{convex upper continuous (c.u.c.)}} if \[R_t^{-1}(D) := \lrcurly{X \in \LdpF{}: R_t(X) \cap D
\neq \emptyset}\] is closed for any closed set $D \in \mathcal{G}(M_t;M_{t,-}) := \inlrcurly{D \subseteq M_t: D = \cl\co(D +
M_{t,-})}$. It is called \textbf{\emph{conditionally c.u.c.}} if $R_t^{-1}(D) $  is closed for any conditionally convex closed set $D \in \mathcal{G}(M_t;M_{t,-})$.
\end{definition}

The properties given in definition~\ref{defn_conditional} and their interpretations are discussed in detail in~\cite{HHR10,FR12}.
$M_t$-translativity ensures that a risk measure can be interpreted as a `capital requirement' to cover risk.  Monotonicity means that if one portfolio dominates another (almost surely) then its risk should be lower.
The normalization property (with closedness) ensures that the zero portfolio compensates the risk of the zero payoff.
Additionally, convexity (and coherence) are useful properties for measuring diversification effects of portfolios.
Clearly, a conditionally convex (conditionally positive homogeneous) function is also convex (positive homogeneous).

The image space of a closed convex conditional risk measure is given by
\[\mathcal{G}(M_t;M_{t,+}) = \lrcurly{D \subseteq M_t: D = \cl \co \lrparen{D + M_{t,+}}}.\]

Note that any c.u.c. risk measure is closed.  This follows from $(X,u) \in \operatorname{graph}R_t$ if and only if $X
+ u \in R_t^{-1}(M_{t,-})$.  In the literature, upper continuity is defined analogously to c.u.c., but with respect to all closed sets $D \subseteq M_t$ rather than the subset $\mathcal{G}(M_t;M_{t,-})$, see~\cite{HS12-uc,GRTZ03,AF90,HP97} (in the latter two
references upper continuity is called upper semicontinuity, we follow the naming practiced by~\cite{HS12-uc,GRTZ03} because upper semicontinuity can also refer to a different property for set-valued functions). As we do not need the upper continuity property for all closed sets $D \subseteq M_t$, but only for $D \in \mathcal{G}(M_t;M_{t,-})$, we labeled the corresponding property convex upper continuity.

A \textbf{dynamic risk measure} $\seq{R}$ is a sequence of conditional risk measures.  It is said to have one of the properties given in definition~\ref{defn_conditional} if $R_t$ has this property for every $t \in [0,T]$.

Instead of considering risk measures directly, a portfolio manager might be interested in the set of portfolios which have an ``acceptable" level of risk, called an acceptance set.
\begin{definition}
\label{defn_acceptance}
$A_t \subseteq \LdpF{}$ is a \textbf{\emph{conditional acceptance set}} at time $t$ if it satisfies the conditions $A_t \cap M_t \not\in \{\emptyset,M_t\} $, and $Y \succeq X$ with $X \in A_t$ implies $Y \in A_t$.
A conditional acceptance set is \textbf{\emph{normalized}} if it satisfies $A_t + A_t \cap M_t = A_t$.
\end{definition}

There is a one-to-one correspondence between risk measures and acceptance sets, see remark~2 and proposition~2.11 in~\cite{FR12}. The acceptance set associated with a conditional risk measure $R_t$ is defined by \[A_t := \lrcurly{X \in \LdpF{}: 0 \in R_t(X)}\] and the risk measure associated with an acceptance set is defined by \[R_t(X) := \lrcurly{u \in M_t: X + u \in A_t}.\] An acceptance set is called convex upper continuous (c.u.c.) if the associated risk measure is c.u.c.
Further, we will define the stepped acceptance set (from time $t$ to $s>t$) as \[A_{t,s} := \lrcurly{X \in M_{s}: 0 \in R_t(X)} = A_t \cap M_{s}.\]
For a thorough discussion of stepped risk measures see section~\ref{sec_stepped}.

%%%%%%%%%%%%%%%%%%%%%%%%%%%%%%%%%%%%%%%%
\subsection{Dual representation}
\label{subsection_dual}
Let $\mathcal{M}$ denote the set of $d$-dimensional probability measures absolutely continuous with respect to $\P$, and let $\mathcal{M}^e$ denote the set of $d$-dimensional probability measures equivalent to $\P$.
We will use a $\P$-almost sure version of the $\Q$-conditional expectation of $X \in \LdpF{}$ (for $\Q := \transp{\Q_1,...,\Q_d}\in \mathcal{M}$) given by
\[\EQt{X}{t} := \Et{\diag{\xi_{t,T}(\Q)} X}{t},\] where $\xi_{t,s}(\Q) =
\transp{\bar{\xi}_{t,s}(\Q_1),...,\bar{\xi}_{t,s}(\Q_d)}$ for any times $0 \leq t \leq s \leq T$ with \[\bar{\xi}_{t,s}(\Q_i)[\omega] :=
\begin{cases}\frac{\Et{\dQidP}{s}(\omega)}{\Et{\dQidP}{t}(\omega)} & \text{on } \Et{\dQidP}{t}(\omega) > 0\\ 1 & \text{else}
\end{cases}\] for every $\omega \in \Omega$, see e.g.~\cite{CK10,FR12}.  For any probability measure $\Q_i \ll \P$ and any times $0 \leq t \leq r \leq s \leq T$, it follows that $\dQidP = \bar{\xi}_{0,T}(\Q_i)$,
$\bar{\xi}_{t,s}(\Q_i) = \bar{\xi}_{t,r}(\Q_i) \bar{\xi}_{r,s}(\Q_i)$, and $\inEt{\bar{\xi}_{t,s}(\Q_i)}{t} = 1$ almost surely.
%%%%%%%
The half-space in $\LdpF{t}$ with normal direction $w \in \LdqF{t}\backslash \{0\}$ is denoted by
\[G_t(w) := \lrcurly{u \in \LdpF{t}: 0 \leq \E{\trans{w}u}}.\]
We will define the set of dual variables to be
\begin{align*}
\olW_t &= \lrcurly{(\Q,w) \in \mathcal{M} \times \lrparen{\plus{M_{t,+}} \backslash \prp{M_t}}:
    w_t^T(\Q,w) \in \LdqF{+}, \;\Q= \P|_{\Ft{t}}}
\end{align*}
with \[w_t^{s}(\Q,w) = \diag{w} \xi_{t,s}(\Q)\]
for any $0 \leq t \leq s \leq T$.
\[C^+=\lrcurly{v \in \LdqF{t}: \forall u \in C: \E{\trans{v}u} \geq 0}\] is the positive dual cone of a cone $C\subseteq \LdpF{t}$ for any time $t$, and
\[\prp{M_t} = \lrcurly{v \in \LdqF{t}: \forall u \in M_t: \E{\trans{v}u} = 0}.\]

In the following we
extend the duality results from~\cite{FR12} by showing  that it is sufficient to only consider
the set of probability measures which are equal to the physical probability measure $\P$ up to
time $t$, that is, to use the smaller set $\olW_t$ in the dual representation as opposed to
$\inlrcurly{(\Q,w) \in \mathcal{M} \times \inlrparen{\plus{M_{t,+}} \backslash \prp{M_t}}: w_t^T(\Q,w) \in \LdqF{+}}$
as in \cite{FR12}. It is vital for the proof of theorem~\ref{thm_mptc_stable} below to
work with this smaller set.
This result is an extension of the scalar dual representation given in~\cite{DS05,RS05,KS07}.
We will say $\Q = \P|_{\Ft{t}}$ for vector probability measures $\Q$ and some time $t\in [0,T]$ if for every $D \in\Ft{t}$ it follows that $\Q_i(D) = \P(D)$ for all $i = 1,...,d$.  In the appendix (lemma~\ref{lemma_stepped_rep} and corollary~\ref{cor_stepped_rep}) we provide the stepped version of this result.
\begin{theorem}
\label{thm_probability_equal}
A function $R_t: \LdpF{} \to \mathcal{G}(M_t;M_{t,+})$ is a \textbf{\emph{closed convex  risk measure}} if and only if
\begin{equation}
\label{convex_dual}
R_t(X) = \bigcap_{(\Q,w) \in \olW_t} \lrsquare{-\beta_t^{\min}(\Q,w) + \lrparen{\EQt{-X}{t} + G_t\lrparen{w}} \cap M_t},
\end{equation}
where $-\beta_t^{\min}$ is the minimal penalty function given by
\begin{equation}
\label{min penalty}
-\beta_t^{\min}(\Q,w) = \operatorname{cl}\bigcup_{Z \in A_t} \lrparen{\EQt{Z}{t} + G_t(w)} \cap M_t.
\end{equation}

$R_t$ is additionally coherent if and only if
\begin{equation*}
%\label{coherent_dual}
R_t(X) = \bigcap_{(\Q,w) \in \olW_{t}^{\max}} \lrparen{\EQt{-X}{t} + G_t\lrparen{w}} \cap M_t,
\end{equation*}
for
\begin{equation*}
\olW_{t}^{\max} = \lrcurly{(\Q,w) \in \olW_t: w_t^T(\Q,w) \in \plus{A_t}}.
\end{equation*}
\end{theorem}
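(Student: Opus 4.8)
The plan is to reduce the statement to the dual representation already established in \cite{FR12}, where \eqref{convex_dual}, \eqref{min penalty} and the coherent formula hold verbatim except that $\olW_t$ is replaced by the larger set $\mathcal{V}_t := \inlrcurly{(\Q,w) \in \mathcal{M} \times \inlrparen{\plus{M_{t,+}} \backslash \prp{M_t}}: w_t^T(\Q,w) \in \LdqF{+}}$ and $\olW_t^{\max}$ by $\mathcal{V}_t^{\max} := \inlrcurly{(\Q,w)\in\mathcal{V}_t: w_t^T(\Q,w)\in\plus{A_t}}$. Since $\olW_t\subseteq\mathcal{V}_t$ and $\olW_t^{\max}\subseteq\mathcal{V}_t^{\max}$, an intersection over the smaller index set is automatically at least as large, so the only content is that discarding every dual variable with $\Q\neq\P|_{\Ft{t}}$ does not enlarge the intersection. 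The decisive observation is that each ingredient of the $(\Q,w)$-term — the penalty $-\beta_t^{\min}(\Q,w)$ of \eqref{min penalty}, the set $\inlrparen{\EQt{-X}{t}+G_t(w)}\cap M_t$, and the constraints $w_t^T(\Q,w)=\diag{w}\xi_{t,T}(\Q)\in\LdqF{+}$ (resp. $\in\plus{A_t}$) — depends on $\Q$ only through $\xi_{t,T}(\Q)$, equivalently only through the conditional expectation operator $\EQt{\cdot}{t}$.

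The first technical step is therefore to attach to every $(\Q,w)\in\mathcal{V}_t$ a reduced measure $\bar\Q=\transp{\bar\Q_1,\dots,\bar\Q_d}\in\mathcal{M}$ defined by $\frac{d\bar\Q_i}{d\P}:=\bar{\xi}_{t,T}(\Q_i)$. Using the properties of $\bar{\xi}$ recalled in Section~\ref{subsection_dual}, in particular $\inEt{\bar{\xi}_{t,T}(\Q_i)}{t}=1$ a.s., one verifies that $\bar{\xi}_{t,T}(\Q_i)$ is nonnegative with $\P$-expectation one (so $\bar\Q_i\ll\P$), that $\bar\Q=\P|_{\Ft{t}}$ because $\bar\Q_i(D)=\inE{\1_D\,\inEt{\bar{\xi}_{t,T}(\Q_i)}{t}}=\P(D)$ for all $D\in\Ft{t}$, and — the crucial identity — that $\bar{\xi}_{t,T}(\bar\Q_i)=\bar{\xi}_{t,T}(\Q_i)$ $\P$-a.s., since $\inEt{d\bar\Q_i/d\P}{t}=1$ a.s. forces the denominator in the definition of $\bar{\xi}_{t,T}(\bar\Q_i)$ to equal $1$. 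Hence $\xi_{t,T}(\bar\Q)=\xi_{t,T}(\Q)$, so $\EbQt{\cdot}{t}=\EQt{\cdot}{t}$ and $w_t^T(\bar\Q,w)=w_t^T(\Q,w)$; consequently $(\bar\Q,w)\in\olW_t$, and it lies in $\olW_t^{\max}$ whenever $(\Q,w)\in\mathcal{V}_t^{\max}$. In particular the entire $(\Q,w)$-term in each of the intersections above — including $-\beta_t^{\min}(\Q,w)$, which in \eqref{min penalty} sees the measure only through $\EQt{Z}{t}$ for $Z\in A_t$ — is left unchanged when $(\Q,w)$ is replaced by $(\bar\Q,w)$.

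Given this reduction, I would conclude the theorem as follows. For the ``only if'' direction, if $R_t$ is a closed convex (resp. coherent) risk measure then by \cite{FR12} it admits the corresponding representation over $\mathcal{V}_t$ (resp. $\mathcal{V}_t^{\max}$); fixing $X$, the inclusion $\olW_t\subseteq\mathcal{V}_t$ yields $\bigcap_{\olW_t}\supseteq\bigcap_{\mathcal{V}_t}$, while for each $(\Q,w)\in\mathcal{V}_t$ its term equals the $(\bar\Q,w)$-term with $(\bar\Q,w)\in\olW_t$, yielding $\bigcap_{\olW_t}\subseteq\bigcap_{\mathcal{V}_t}$; so the two intersections agree and $R_t(X)$ equals the right-hand side of \eqref{convex_dual} (resp. of the coherent formula). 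For the ``if'' direction, the same identity of intersections rewrites the assumed $\olW_t$-representation as the $\mathcal{V}_t$-representation of \cite{FR12} with minimal penalty \eqref{min penalty}, which by the dual representation theorem of \cite{FR12} holds precisely when $R_t$ is a closed convex (resp. coherent) risk measure.

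I expect the only delicate point to be the measure-theoretic bookkeeping behind $\bar\Q$: one must handle with care the $\P$-null set $\{\inEt{\dQidP}{t}=0\}$, on which $\bar{\xi}_{t,T}(\Q_i)$ equals $1$ by convention while $\dQidP=0$ a.s., and must check cleanly the self-consistency $\bar{\xi}_{t,T}(\bar\Q_i)=\bar{\xi}_{t,T}(\Q_i)$. The martingale identity $\inEt{\bar{\xi}_{t,T}(\Q_i)}{t}=1$ does the work twice over — once to turn $\bar{\xi}_{t,T}(\Q_i)$ into a genuine $\P$-density, and once to collapse the conditional density of $\bar\Q_i$ on $[0,t]$ to the constant $1$. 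Everything else is a routine unwinding of the $\mathcal{V}_t$-representation of \cite{FR12}.
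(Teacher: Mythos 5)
Your proposal is correct and follows essentially the same route as the paper: the paper also invokes the dual representation of \cite{FR12} over the full dual set, observes that each term in the intersection depends on $\Q$ only through $w_t^T(\Q,w)=\diag{w}\xi_{t,T}(\Q)$, and replaces $(\Q,w)$ by the measure with density $\xi_{t,T}(\Q)$ (your $\bar\Q$), which lies in $\olW_t$ and leaves $w_t^T$ unchanged. Your additional care about the self-consistency $\bar{\xi}_{t,T}(\bar\Q_i)=\bar{\xi}_{t,T}(\Q_i)$ is a detail the paper compresses into ``by construction,'' but it is the same argument.
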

\begin{proof}
By theorem 4.7 and corollary 4.8 in~\cite{FR12}, we have these duality results with respect to
the full dual variables
$\inlrcurly{(\Q,w) \in \mathcal{M} \times \inlrparen{\plus{M_{t,+}} \backslash \prp{M_t}}: w_t^T(\Q,w) \in \LdqF{+}}$.
It remains to show that we only need to consider the probability
measures that are equal to $\P$ on $\Ft{t}$.  We will show this for the convex case only, as
the coherent version follows similarly.

Since the dual representation only involves $w_t^T(\Q,w)$, which can be seen from
\begin{align*}
-\beta_t^{\min}&(\Q,w) + \lrparen{\EQt{-X}{t} + G_t\lrparen{w}} \cap M_t\\
&= \lrcurly{u \in M_t: \inf_{Z \in A_t} \E{\trans{w}\EQt{Z}{t}} + \E{\trans{w}\EQt{-X}{t}} \leq \E{\trans{w}u}}\\
&= \lrcurly{u \in M_t: \inf_{Z \in A_t} \E{\trans{w_t^T(\Q,w)}(Z-X)} \leq \E{\trans{w_t^T(\Q,w)}u}}
\end{align*}
for any $Z \in \LdpF{}$, we need to show
\begin{align}
\label{smallerW}
\lrcurly{w_t^T(\Q,w): (\Q,w) \in \mathcal{M} \times \lrparen{\plus{M_{t,+}} \backslash \prp{M_t}}, w_t^T(\Q,w) \in \LdqF{+}}
= \lrcurly{w_t^T(\R,v): (\R,v) \in \olW_t}
\end{align}
to prove the result.

Trivially, $\supseteq$ holds by
$\inlrcurly{(\Q,w) \in \mathcal{M} \times \inlrparen{\plus{M_{t,+}} \backslash \prp{M_t}}: w_t^T(\Q,w) \in \LdqF{+}} \supseteq \olW_t$.
For the other direction, let
$(\Q,w) \in \inlrcurly{(\Q,w) \in \mathcal{M} \times \inlrparen{\plus{M_{t,+}} \backslash \prp{M_t}}: w_t^T(\Q,w) \in \LdqF{+}}$,
and define $\R \in \mathcal{M}$ by $\dRdP = \xi_{t,T}(\Q)$.  It follows that $(\R,w) \in \olW_t$, and by construction $w_t^T(\Q,w) = \diag{w}\dRdP = w_t^T(\R,w)$.
\qed\end{proof}

The next corollary sharpens the above duality results for the conditionally convex case by using
the conditional ``half-space'' in $\LdpF{t}$ with normal direction $w \in \LdqF{t}\backslash \{0\}$
denoted by
\[\Gamma_t(w) := \lrcurly{u \in \LdpF{t}: 0 \leq \trans{w}u \; \Pas}.\]
This provides a stronger dual representation result in the spirit of decomposibility, see remark~\ref{rem_decomp} below. Note that the
set of dual variables $\olW_t$ stays the same in the conditional framework because $\plus{M_{t,+}}=\inlrcurly{v \in
\LdqF{t}: \forall u \in M_{t,+}: \trans{v}u \geq 0\; \Pas}$ and $\prp{M_t} = \inlrcurly{v \in \LdqF{t}: \forall u \in M_t: \trans{v}u = 0 \; \Pas}$.
\begin{corollary}
\label{cor_conditional_dual}
A function $R_t: \LdpF{} \to \mathcal{G}(M_t;M_{t,+})$ is a \textbf{\emph{closed conditionally convex risk measure}} if and only if
\begin{equation}
\label{conditional_convex_dual}
R_t(X) = \bigcap_{(\Q,w) \in \olW_t} \lrsquare{-\alpha_t^{\min}(\Q,w) + \lrparen{\EQt{-X}{t} + \Gamma_t\lrparen{w}} \cap M_t},
\end{equation}
where $-\alpha_t^{\min}$ is the minimal conditional penalty function given by
\begin{equation}
\label{conditional_min penalty}
-\alpha_t^{\min}(\Q,w) = \operatorname{cl}\bigcup_{Z \in A_t} \lrparen{\EQt{Z}{t} + \Gamma_t(w)} \cap M_t.
\end{equation}
$R_t$ is additionally conditionally coherent if and only if
\begin{equation}
\label{conditional_coherent_dual}
R_t(X) = \bigcap_{(\Q,w) \in \olW_{t}^{\max}} \lrparen{\EQt{-X}{t} + \Gamma_t\lrparen{w}} \cap M_t.
\end{equation}
\end{corollary}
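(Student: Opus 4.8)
The plan is to derive Corollary~\ref{cor_conditional_dual} from Theorem~\ref{thm_probability_equal} by a ``conditioning'' argument: replace every (unconditional) expectation over the half-space $G_t(w)$ appearing in \eqref{convex_dual} by its conditional counterpart over $\Gamma_t(w)$, exploiting the fact that $w$ ranges over an $\Ft{t}$-measurable dual cone and that we may multiply dual variables by arbitrary nonnegative $\Ft{t}$-measurable scalars. First I would record the key containment $\Gamma_t(w)\subseteq G_t(w)$, which gives ``$\subseteq$'' in \eqref{conditional_convex_dual} essentially for free once one checks that $-\alpha_t^{\min}(\Q,w)\subseteq-\beta_t^{\min}(\Q,w)$ (the closure of a union over a smaller summand set), so that the right-hand side of \eqref{conditional_convex_dual} is contained in that of \eqref{convex_dual}, namely $R_t(X)$.

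For the reverse inclusion the idea is the standard ``rescaling'' trick: for a fixed $(\Q,w)\in\olW_t$ and any $\Ft{t}$-measurable $\lambda\in\LiF{t}$ with $\lambda\geq 0$, the pair obtained by replacing $w$ with $\lambda w$ (and adjusting $\Q$ appropriately so that the product $w_t^T$ scales by $\lambda$) is again, up to the normalization $\Q=\P|_{\Ft{t}}$ handled as in the proof of Theorem~\ref{thm_probability_equal}, an element of $\olW_t$. Intersecting the representation \eqref{convex_dual} over all such rescalings of a single $(\Q,w)$ turns the single unconditional inequality $\E{\trans{w_t^T(\Q,w)}(Z-X)}\leq\E{\trans{w_t^T(\Q,w)}u}$ into the family of inequalities $\E{\lambda\,\trans{w_t^T(\Q,w)}(Z-X)}\leq\E{\lambda\,\trans{w_t^T(\Q,w)}u}$ for all $\lambda\geq0$ in $\LiF{t}$; by a localization argument on sets in $\Ft{t}$ this is equivalent to the $\P$-a.s. inequality $\Et{\trans{w_t^T(\Q,w)}(Z-X)}{t}\leq\Et{\trans{w_t^T(\Q,w)}u}{t}$, which is precisely membership in the conditional half-space translate. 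Carrying this through the essential-infimum over $Z\in A_t$ and the closure operation yields that the conditional representation \eqref{conditional_convex_dual}, intersected over $\olW_t$, recovers exactly $R_t(X)$. I would then state that the coherent case \eqref{conditional_coherent_dual} follows by the same rescaling argument applied to the coherent representation of Theorem~\ref{thm_probability_equal}, since $\olW_t^{\max}$ is likewise stable under multiplication of $w$ by nonnegative $\Ft{t}$-measurable scalars (as $\plus{A_t}$ is a cone and $\lambda w_t^T(\Q,w)\in\plus{A_t}$ whenever $w_t^T(\Q,w)\in\plus{A_t}$ and $\lambda\geq0$).

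The main obstacle I anticipate is the interchange of the essential infimum over $Z\in A_t$ with the closure and with the conditioning: one must check that $-\alpha_t^{\min}(\Q,w)$, defined via a closed union over $A_t$ with the conditional half-space, really is the ``pointwise conditional'' version of $-\beta_t^{\min}(\Q,w)$ — i.e. that passing from unconditional to conditional expectations commutes with the union/closure that defines the minimal penalty function. This is where decomposability (the hint toward remark~\ref{rem_decomp}) enters: because $A_t$ is $M_t$-translative and monotone, it is stable under $\Ft{t}$-measurable pasting, so the set $\bigcup_{Z\in A_t}(\EQt{Z}{t}+\Gamma_t(w))\cap M_t$ is itself decomposable and its closure can be described fiberwise, matching the conditional-expectation inequality obtained from the rescaling argument. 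The remaining steps — verifying $\Gamma_t(w)$ is well-defined for $w\in\LdqF{t}\setminus\{0\}$, that $\olW_t$ is genuinely unchanged (already noted in the paragraph preceding the corollary), and the bookkeeping of the $\Q=\P|_{\Ft{t}}$ normalization under rescaling — are routine and parallel to the proof of Theorem~\ref{thm_probability_equal}.
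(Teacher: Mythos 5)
Your overall strategy is the paper's: derive \eqref{conditional_convex_dual} from \eqref{convex_dual} by localizing the dual inequalities over $\Ft{t}$ (the paper uses the rescalings $w\mapsto 1_D w$ for $D\in\Ft{t}$, which is the special case of your $\lambda\in\LiF{t}$, $\lambda\geq 0$), and your ``$\subseteq$'' direction via termwise Minkowski containment is fine. The genuine gap sits exactly in the step you flag as the main obstacle. To pass from the family of integrated inequalities $\inf_{Z\in A_t}\E{\lambda\trans{w}\EQt{Z-X}{t}}\leq\E{\lambda\trans{w}u}$ to the $\P$-almost sure conditional inequality, you must interchange the infimum over $Z\in A_t$ with the expectation, and this requires the family $\inlrcurly{\trans{w}\EQt{Z}{t}:Z\in A_t}$ to be directed downward, i.e.\ the $\Ft{t}$-decomposability $1_D A_t+1_{D^c}A_t\subseteq A_t$. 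You justify this by saying that $A_t$ is ``$M_t$-translative and monotone, [hence] stable under $\Ft{t}$-measurable pasting''. That implication is false: translativity and monotonicity do not give decomposability --- example~\ref{notdec} of the paper exhibits a closed convex, translative, monotone acceptance set that is not decomposable (cf.\ remark~\ref{rem_decomp}). The correct source of decomposability is the hypothesis of \emph{conditional} convexity (take $\lambda=1_D$ in the defining inequality of $R_t$), and this is precisely where the paper invokes it before applying the interchange result of \cite{Y85}. The same gap recurs in your coherent case: that $\plus{A_t}$ is a cone only gives stability of $\olW_t^{\max}$ under multiplication of $w$ by nonnegative \emph{constants}; stability under random $\lambda\in\LiF{t}$ again needs $1_D Z\in A_t$ for $Z\in A_t$ and $D\in\Ft{t}$, i.e.\ decomposability together with $0\in A_t$. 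Since conditional convexity is a hypothesis of the corollary the needed fact is available, but your argument as written rests on a false lemma.

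Two further omissions. The corollary is an ``if and only if'', and you never address the converse direction, namely that a function defined by the right-hand side of \eqref{conditional_convex_dual} is a closed \emph{conditionally} convex risk measure; the paper verifies conditional convexity by a direct computation with essential infima, and it does not follow from theorem~\ref{thm_probability_equal} alone. In the coherent case the paper also has to do genuine work to reconcile the maximal set $\olW_t^{\max}$, defined by the integrated condition $w_t^T(\Q,w)\in\plus{A_t}$, with the conditional penalty function $-\alpha_t^{\min}(\Q,w)$, which equals $\Gamma_t(w)\cap M_t$ only on the event $\lrcurly{\essinf_{Z\in A_t}\trans{w}\EQt{Z}{t}\geq 0}$ and equals $M_t$ off it; ``the same rescaling argument'' does not by itself dispose of this discrepancy.
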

\begin{proof}
First, we can reformulate the conditional penalty function as
\[-\alpha_t^{\min}(\Q,w) = \lrcurly{u \in M_t:
\essinf_{Z \in A_t} \trans{w}\EQt{Z}{t} \leq \trans{w}u}.\]  So, for any $(\Q,w) \in \olW_t$
\begin{align*}
-\alpha_t^{\min}(\Q,w) + \lrparen{\EQt{-X}{t} + \Gamma_t\lrparen{w}} \cap M_t= \lrcurly{u \in M_t: \essinf_{Z \in A_t} \trans{w}\EQt{Z-X}{t} \leq \trans{w}u},
\end{align*}
using $w\notin  \prp{M_t}$.
We will prove that a conditionally convex risk measure has the representation above by showing that \[\bar{R}_t(X) :=
\bigcap_{(\Q,w) \in \olW_t} \inlrsquare{-\alpha_t^{\min}(\Q,w) + \inlrparen{\inEQt{-X}{t} + \Gamma_t\inlrparen{w}} \cap M_t}
= R_t(X)\] where $R_t(X)$ is given by the dual representation \eqref{convex_dual} in theorem~\ref{thm_probability_equal}.
\begin{proofenum}
\item Let $u \in \bar{R}_t(X)$, i.e. $\trans{w}u \geq \essinf_{Z \in A_t} \trans{w}\inEQt{Z-X}{t}$ for
any dual variables $(\Q,w) \in \olW_t$.  It follows that $\inE{\trans{w}u} \geq \inE{\essinf_{Z \in A_t} \trans{w}\inEQt{Z-X}{t}}$
for any $(\Q,w) \in \olW_t$.
By the $\Ft{t}$-decomposability of the acceptance set $A_t$ (which follows from conditional convexity), i.e. $1_D A_t + 1_{D^c} A_t \subseteq A_t$ for any $D \in \Ft{t}$, we can apply \cite[theorem 1]{Y85} to
interchange the expectation and the essential infimum.  Thus, by the representation in
theorem~\ref{thm_probability_equal}, $u \in R_t(X)$.
\item Now let $u \in R_t(X)$. Assume $u \not\in \bar{R}_t(X)$, i.e. there exists some $(\Q,w) \in \olW_t$ such that
\[\P(\trans{w}u < \essinf_{Z \in A_t} \trans{w}\EQt{Z-X}{t}) > 0.\]
Define $D := \inlrcurly{\trans{w}u < \essinf_{Z \in A_t} \trans{w}\inEQt{Z-X}{t}} \in \Ft{t}$, then $1_D \trans{w}u < \essinf_{Z \in A_t} 1_D \trans{w}\inEQt{Z-X}{t}$ on $D$, and the strict
inequality also
 holds for the expectation.
As above, by conditional convexity we can interchange the expectation and
the infimum, thus we recover that
\[\E{1_D \trans{w}u} < \inf_{Z \in A_t} \E{1_D \trans{w}\EQt{Z-X}{t}}.\]
From the equality $\prp{M_t} = \inlrcurly{v \in \LdqF{t}: \forall u \in M_t: \trans{v}u = 0  \; \Pas}$, one can show that $(\Q,1_D w) \in \olW_t$, which is a contradiction to $u \in R_t(X)$.
\end{proofenum}
It remains to show that $R_t$ defined by \eqref{conditional_convex_dual} is conditionally convex.
Let $X,Y \in \LdpF{}$ and let $\lambda \in \LiF{t}$ such that $0 \leq \lambda \leq 1$.
\begin{align*}
R_t(&\lambda X + (1-\lambda) Y) = \bigcap_{(\Q,w) \in \olW_t} \lrcurly{u \in M_t: \essinf_{Z \in A_t} \trans{w}\EQt{Z-(\lambda X + (1-\lambda) Y)}{t} \leq \trans{w}u}\\
&=  \bigcap_{(\Q,w) \in \olW_t} \lrcurly{u \in M_t: \essinf_{Z_X,Z_Y \in A_t} \trans{w}\EQt{\lambda (Z_X - X) + (1-\lambda) (Z_Y - Y)}{t} \leq \trans{w}u}\\
&=  \bigcap_{(\Q,w) \in \olW_t} \lcurly{u \in M_t: \lambda \essinf_{Z \in A_t} \trans{w}\EQt{Z - X}{t} } \rcurly{+ (1-\lambda) \essinf_{Z \in A_t} \trans{w}\EQt{Z - Y}{t} \leq \trans{w}u}\\
&\supseteq \bigcap_{(\Q,w) \in \olW_t} \lsquare{\lrcurly{\lambda u: u \in M_t, \essinf_{Z \in A_t} \trans{w} \EQt{Z-X}{t} \leq \trans{w}u}}\\
    &\quad\quad+ \rsquare{ \lrcurly{(1-\lambda) u: u \in M_t, \essinf_{Z \in A_t} \trans{w}\EQt{Z-Y}{t} \leq \trans{w}u}}\\
&\supseteq \lambda \bigcap_{(\Q,w) \in \olW_t} \lrcurly{u \in M_t: \essinf_{Z \in A_t} \trans{w}\EQt{Z-X}{t} \leq \trans{w}u}\\
     &\quad\quad+ (1-\lambda) \bigcap_{(\Q,w) \in \olW_t} \lrcurly{u \in M_t: \essinf_{Z \in A_t} \trans{w}\EQt{Z-Y}{t} \leq \trans{w}u}= \lambda R_t(X) + (1-\lambda) R_t(Y).
\end{align*}
For the conditionally coherent case, we first note that $-\alpha_t^{\min}(\Q,w) \supseteq \Gamma_t(w) \cap M_t$
for every $(\Q,w) \in \olW_t$ since $0 \in A_t$ by positive homogeneity.  In fact,
\begin{align*}
-\alpha_t^{\min}(\Q,w) &= \begin{cases}G_0(w(\omega)) \cap M &\omega\in D := \lrcurly{\essinf_{Z \in A_t} \trans{w}\EQt{Z}{t} \geq 0}\\ M &\omega\in D^c\end{cases}\\
&= 1_D \Gamma_t(w) \cap M_t + 1_{D^c} M_t.
\end{align*}
Define $-\hat{\alpha}_t$ by
\[-\hat{\alpha}_t(\Q,w) = \begin{cases}\Gamma_t(w) \cap M_t &\text{if } \essinf_{Z \in A_t} \trans{w}\EQt{Z}{t} = 0 \Pas\\ M_t &\text{else}\end{cases}.\]
By construction we have $-\alpha_t^{\min} \subseteq -\hat{\alpha}_t$, and thus
\[R_t(X) \subseteq \bigcap_{(\Q,w) \in \olW_t^{\max}} \lrparen{\EQt{-X}{t} + \Gamma_t\lrparen{w}} \cap M_t.\]
Conversely, by theorem~\ref{thm_probability_equal}, $u \in R_t(X)$ if and only if $u\in M_t$ and
\begin{align*}
0 &\leq \inf_{(\Q,w) \in \olW_t^{\max}} \E{\trans{w}\EQt{X+u}{t}}.
\end{align*}
But $u \in \bigcap_{(\Q,w) \in \olW_t^{\max}} \inlrparen{\inEQt{-X}{t} + \Gamma_t\inlrparen{w}} \cap M_t$
implies $u\in M_t$ as well as $0 \leq \inf_{(\Q,w) \in \olW_t^{\max}} \inE{\trans{w}\inEQt{X+u}{t}}$, i.e. $u \in R_t(X)$.

It remains to show that $R_t$ defined by \eqref{conditional_coherent_dual} is conditionally coherent, but as this follows similarly to the
conditionally convex case above we will omit it.
\qed\end{proof}

\begin{remark}
\label{rem_decomp}
A risk measure is closed and conditionally convex if and only if it is closed, convex, and decomposable
where decomposability is defined by the equality
\[R_t(1_D X + 1_{D^c} Y) = 1_D R_t(X) + 1_{D^c} R_t(Y)\]
for every $X,Y \in \LdpF{}$ and $D \in \Ft{t}$.  Decomposability is a stronger property than locality,
defined in~\cite{FR12} by the equality $1_D R_t(X) = 1_D R_t(1_D X)$ for every $X \in \LdpF{}$ and
$D \in \Ft{t}$, see example~\ref{notdec} below. In the scalar case both notions coincide (see e.g.~\cite{DS05}).
\end{remark}

While the typical examples of risk measures will be decomposable, we will give one example below of a risk measure that is not decomposable.

\begin{example}
\label{notdec}
Let $A := \cl\inlrparen{K + \LdiF{+}} \neq \LdiF{+}$ for some cone $K \subseteq \LdiF{0}$ such that $A$ is an acceptance set.
Let $R_t(X) = \inlrcurly{u \in M_t: X + u \in A}$ for all times $t$.  It can be shown that $\seq{R}$
is closed, convex, and multiportfolio time consistent (see the next section for details), but is not decomposable. This might be
important if one is interested in the static risk measure $R_0(X)$, sticks to the decision made at time $t=0$ and just reevaluates the
risk at time $t>0$ based on the same acceptability criterion used at $t=0$ to determine e.g. if the initial deposit $u\in R_0(X)$ can
be reduced at time $t>0$ while keeping acceptability.
Furthermore, if $M = \R^d$, then $R_t$ is local, but not decomposable.
\end{example}

%%%%%%%%%%%%%%%%%%%%%%%%%%%%%%%%%%%%%%%%
\subsection{Multiportfolio time consistency}

In \cite{FR12}  it was shown that a useful concept of time consistency for set-valued risk measures is given by a property called multiportfolio time consistency. In the following we review the definition and equivalent characterizations of this property.

\begin{definition}
\label{defn_mptc}
A dynamic risk measure $\seq{R}$ is \textbf{\emph{multiportfolio time consistent}} if for all times $0 \leq t < s \leq T$, all portfolios $X\in \LdpF{}$ and all sets ${\bf Y}\subseteq \LdpF{}$ the implication
\begin{equation*}
  R_{s}(X) \subseteq \bigcup_{Y \in {\bf Y}} R_{s}(Y) \Rightarrow R_t(X) \subseteq \bigcup_{Y \in {\bf Y}} R_t(Y)
\end{equation*}
is satisfied.
\end{definition}
The intuitive reasoning for multiportfolio time consistency is that if at some time any risk compensation portfolio for $X$ also compensates the risk of some portfolio $Y$ in the set ${\bf Y}$, then at any prior time the same relation should hold true.
\begin{theorem}[Theorem 3.4 in \cite{FR12}]
\label{thm_equiv_tc}
For a normalized dynamic risk measure $\seq{R}$ the following are equivalent:
\begin{enumerate}
\item \label{thm_equiv_tctc}$\seq{R}$ is multiportfolio time consistent,
\item \label{thm_equiv_recursive} $R_t$ is recursive, that is for all times $0 \leq t < s \leq T$
    \begin{equation}
    \label{recursive}
        R_t(X) = \bigcup_{Z \in R_{s}(X)} R_t(-Z) =: R_t(-R_{s}(X)).
    \end{equation}
\item for every time $0 \leq t < s \leq T$
    \begin{equation*}
%    \label{sum_acceptance}
        A_t = A_{t,s} + A_{s}.
    \end{equation*}
\end{enumerate}
\end{theorem}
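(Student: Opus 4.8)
The plan is to prove the three-way equivalence by showing \ref{thm_equiv_tctc}$\Rightarrow$\ref{thm_equiv_recursive}$\Rightarrow$(3)$\Rightarrow$\ref{thm_equiv_tctc}, or rather to exhibit the cycle most economically; in fact the cleanest route is \ref{thm_equiv_recursive}$\Leftrightarrow$(3) together with \ref{thm_equiv_tctc}$\Leftrightarrow$\ref{thm_equiv_recursive}, since recursiveness is the natural bridge. The key dictionary throughout is the one-to-one correspondence between risk measures and acceptance sets recalled before Definition~\ref{defn_acceptance}: $u \in R_t(X)$ iff $X + u \in A_t$, and $Z \in M_s$ lies in $A_{t,s}$ iff $0 \in R_t(Z)$, i.e. $A_{t,s} = A_t \cap M_s$.

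First I would handle \ref{thm_equiv_tctc}$\Rightarrow$\ref{thm_equiv_recursive}. Apply the multiportfolio time consistency implication with the particular set ${\bf Y} = -R_s(X) \subseteq \LdpF{}$ (note this makes sense because each $-Z$ with $Z \in R_s(X) \subseteq M_s \subseteq \LdpF{}$ is a legitimate portfolio). For this choice one has $R_s(X) \subseteq \bigcup_{Z \in R_s(X)} R_s(-Z)$: indeed, fix $u \in R_s(X)$; then $-u \in -R_s(X) = {\bf Y}$, and by $M_s$-translativity $R_s(-(-u)) = R_s(u) \ni$ ... more directly, $0 \in R_s(0) = R_s(u + (-u)) = R_s(u) - u$ wait — use normalization and translativity: $u \in R_s(-(-u))$ is equivalent to $u - u = 0 \in R_s(-(-u)) + u$; cleanest is: $u \in R_s(X)$ means $X + u \in A_s$, and one checks $u \in R_s(-u)$ directly since $-u + u = 0 \in A_s$ requires $0 \in R_s(0)$, which holds by finiteness-at-zero plus closedness/normalization. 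So the hypothesis of multiportfolio time consistency is met, yielding $R_t(X) \subseteq \bigcup_{Z \in R_s(X)} R_t(-Z) =: R_t(-R_s(X))$. For the reverse inclusion $R_t(-R_s(X)) \subseteq R_t(X)$, take $Z \in R_s(X)$, so $X + Z \in A_s$; then for any $u \in R_t(-Z)$, $-Z + u \in A_t$, and I want $X + u \in A_t$. Here is where the stepped acceptance set enters: $-Z + u \in A_t \cap (\text{something})$... the honest argument is monotonicity combined with the structure — actually the standard trick is $X + u = (X + Z) + (-Z + u) - Z + Z$; better: use that $A_t$ is closed under adding elements of... This is the step I expect to be the main obstacle, since it secretly requires the decomposition $A_t = A_{t,s} + A_s$, so in practice \ref{thm_equiv_recursive} and (3) should be proved jointly rather than sequentially.

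Hence the efficient structure is: prove (3)$\Leftrightarrow$\ref{thm_equiv_recursive} first by unwinding both sides into acceptance-set language — $R_t(-R_s(X)) = R_t(X)$ for all $X$ translates, via $u \in R_t(-Z) \Leftrightarrow u - Z \in A_t$ and $Z \in R_s(X) \Leftrightarrow X + Z \in A_s$, into the statement that $X + u \in A_t$ iff there exists $Z \in M_s$ with $X + Z \in A_s$ and $u - Z \in A_t$; writing $W := -Z \in M_s$ and using $M_s$-translativity of $A_s$ (so $X + Z \in A_s \Leftrightarrow X + u - (u - Z) \in A_s$, and $u - Z \in A_t$) this rearranges to $A_t = A_{t,s} + A_s$ after noting $A_{t,s} = A_t \cap M_s$ captures exactly the $M_s$-valued acceptable positions. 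Then prove (3)$\Rightarrow$\ref{thm_equiv_tctc} directly: given $R_s(X) \subseteq \bigcup_{Y \in {\bf Y}} R_s(Y)$ and $u \in R_t(X)$, so $X + u \in A_t = A_{t,s} + A_s$, write $X + u = a + b$ with $a \in A_{t,s} \subseteq M_s$, $b \in A_s$; then $-a \in M_s$ and $b = X + u - a$ gives... one shows the "$s$-slice" of $u$ lands in some $R_s(Y)$ and then lifts back using (3) in the form $A_t \supseteq A_{t,s} + A_s$, concluding $u \in \bigcup_{Y \in {\bf Y}} R_t(Y)$. Finally \ref{thm_equiv_recursive}$\Rightarrow$\ref{thm_equiv_tctc} is the routine direction: if $R_s(X) \subseteq \bigcup_{Y\in{\bf Y}} R_s(Y)$ then applying $R_t(-\,\cdot\,)$ and the union-commutation in \eqref{recursive}, $R_t(X) = R_t(-R_s(X)) \subseteq \bigcup_{Z \in \bigcup_Y R_s(Y)} R_t(-Z) = \bigcup_{Y \in {\bf Y}} R_t(-R_s(Y)) = \bigcup_{Y \in {\bf Y}} R_t(Y)$, using monotonicity of the $R_t(-R_s(\cdot))$ operation in its argument set. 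The one genuine subtlety to watch is that all of this uses normalization (it is a hypothesis) to ensure $0 \in R_t(0)$ and hence $A_t \cap M_t \neq \emptyset$, and to make the identity $R_t(-R_s(X)) = R_t(X)$ hold with equality rather than mere inclusion — without normalization only one inclusion survives, which is exactly why the theorem is stated for normalized dynamic risk measures.
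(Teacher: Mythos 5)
There is a genuine gap, and it is structural: your final plan establishes (2)$\Leftrightarrow$(3), (2)$\Rightarrow$(1) and (3)$\Rightarrow$(1), but no implication \emph{out of} condition (1) is ever completed, so the three statements are not shown to be equivalent. Your attempt at (1)$\Rightarrow$(2) produces only the inclusion $R_t(X) \subseteq R_t(-R_s(X))$ and is then abandoned on the grounds that the reverse inclusion ``secretly requires'' $A_t = A_{t,s}+A_s$. It does not. The missing idea is to use normalization as the set identity $R_s(\cdot) + R_s(0) = R_s(\cdot)$ rather than through the pointwise claim $0 \in R_s(0)$ (which, as the paper itself remarks, needs closedness in addition to normalization and is therefore not available under the stated hypotheses). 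Concretely: by $M_s$-translativity, $\bigcup_{Z \in R_s(X)} R_s(-Z) = \bigcup_{Z\in R_s(X)}\lrparen{R_s(0)+Z} = R_s(X) + R_s(0) = R_s(X)$, which gives the hypothesis of multiportfolio time consistency for $\mathbf{Y} = -R_s(X)$ cleanly and hence $R_t(X) \subseteq R_t(-R_s(X))$. For the reverse inclusion, fix $Z \in R_s(X)$; then $R_s(-Z) = R_s(0) + Z \subseteq R_s(0) + R_s(X) = R_s(X)$, so multiportfolio time consistency applied to the portfolio $-Z$ and the singleton $\mathbf{Y} = \{X\}$ yields $R_t(-Z) \subseteq R_t(X)$, and taking the union over $Z \in R_s(X)$ gives $R_t(-R_s(X)) \subseteq R_t(X)$. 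No appeal to condition (3) is needed, and with this the cycle closes.

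The rest of your outline is sound in substance. The translation (2)$\Leftrightarrow$(3) via $u \in R_t(-Z) \Leftrightarrow u - Z \in A_t$ and $Z \in R_s(X) \Leftrightarrow X+Z \in A_s$, noting $u-Z \in M_s$ so that $u-Z \in A_t \cap M_s = A_{t,s}$ and then evaluating at $u=0$, is exactly the right computation; (2)$\Rightarrow$(1) by monotonicity of the union is correct. Your direct (3)$\Rightarrow$(1) sketch can also be completed (write $X+u = a+b$ with $a \in A_{t,s}$, $b \in A_s$, set $Z := u-a$ so $Z \in R_s(X)$, pick $Y \in \mathbf{Y}$ with $Z \in R_s(Y)$, and conclude $Y+u = (Y+Z)+a \in A_s + A_{t,s} = A_t$), though it becomes redundant once (2)$\Rightarrow$(1) and (2)$\Leftrightarrow$(3) are in place. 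Finally, your closing claim that normalization ensures $0 \in R_t(0)$ is inaccurate for the reason above; its actual role in this theorem is the absorption identity used twice in the direction (1)$\Rightarrow$(2).
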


The above theorem provides the equivalence between multiportfolio time consistency and the recursive form for set-valued risk measures.  In~\cite{FR12} it was demonstrated that the set of superhedging portfolios satisfies the recursive form, but the set-valued average value at risk does not.  Furthermore, \cite{FR12} shows that the algorithm for calculating the set of superhedging portfolios in \cite{LR11} is a result of the recursive form, and that the recursive form can be seen as a set-valued version of Bellman's principle.

In the discrete time setting $\{0,1,...,T\}$, multiportfolio time consistency is equivalent to the recursive form
using steps of size $1$ only (i.e. setting $s=t+1$ in \eqref{recursive}).

%%%%%%%%%%%%%%%%%%%%%%%%%%%%%%%%%%%%%%%%%
%%%%%%%%%%%%%%%%%%%%%%%%%%%%%%%%%%%%%%%%%
%%%%%%%%%%%%%%%%%%%%%%%%%%%%%%%%%%%%%%%%%
\section{Convex risk measures and multiportfolio time consistency}
\label{sec_convex}

In this section, we want to study the impact of multiportfolio time consistency on the penalty function of a (conditionally) convex risk measure.
In the scalar case it could be shown that (multiportfolio) time consistency is equivalent to an additive property of the penalty functions,
see e.g. \cite{FP06,CDK06,BN08,BN09,AP10}, which is called the cocycle property in~\cite{BN08,BN09}. We will show that a corresponding result is also true in the set-valued case. However, it is much harder to prove than in the scalar case. The reason is that, when following the proofs along the lines of \cite{FP06,BN09}, an additional infimum (that is the union in the recursion) appears in the set-valued case, which is not present in the scalar case. One would need to apply a minimax theorem in order to exchange the infimum and the supremum, but it is hard to verify the constraint qualification. Thus, we will follow a different route in proving the main theorem about the equivalence between multiportfolio time consistency and an additive property of the penalty functions. In the heart of this new proof lies a Hahn-Banach separation argument, which we will provide before presenting the main theorem.

The Hahn-Banach argument uses the functions $\FQwblank: \LdpF{} \to 2^{M_t}$ defined by
\begin{align*}
    \FQw{X} :=  \lrcurly{u \in M_t: \E{\trans{w}\EQt{X}{t}} \leq \E{\trans{w}u}}= \lrparen{\EQt{X}{t} + G_t(w)} \cap M_t,
\end{align*}
for $(\Q,w) \in \olW_t$.  These functions are the main ingredients in the duality theory for
set-valued functions (see \cite{H09}, example~2 and proposition~6), as they replace the continuous
linear functions used in the scalar duality theory.  Clearly, the functions $\FQwblank$ appear in
the dual representation \eqref{convex_dual} of risk measures and in the definition of the minimal
penalty function \eqref{min penalty}.

We are now ready to formulate the Hahn-Banach argument, which characterizes when a portfolio is acceptable.
\begin{lemma}
\label{separation}
Let $A_t \subseteq \LdpF{}$ be a closed convex acceptance set and let $X \in \LdpF{}$. Then, $X \not\in A_t$ if and only if there exists a $(\Q,w) \in \olW_0$ such that \[\FQwz{X} \supsetneq \cl \bigcup_{Z \in A_t} \FQwz{Z}.\]
\end{lemma}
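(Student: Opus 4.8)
The plan is to use the dual representation of a closed convex acceptance set in terms of the functions $\FQwblank$, together with a Hahn--Banach separation in the space $\LdpF{}$, and then "lift" the separating functional to the time-$0$ level where $\Ft{0}$ is trivial (so $\EQt{\cdot}{0}$ is an ordinary expectation and $M_0$ is a deterministic subspace). First I would prove the easy direction: if there is $(\Q,w)\in\olW_0$ with $\FQwz{X}\supsetneq\cl\bigcup_{Z\in A_t}\FQwz{Z}$, then the associated risk measure $R_0^{A_t}$ (acceptance set $A_t$) has $R_0^{A_t}(X)=\bigcap_{(\Q',w')}[-\beta_0^{\min}(\Q',w')+\FYvblank'[-X]\cap M_0]$, and the failure of $0$ to lie in this intersection at the single dual element $(\Q,w)$ forces $0\notin R_0^{A_t}(X)$, i.e. $X\notin A_t$. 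Concretely, $0\in\FQwz{X}$ would be equivalent to $\E{\trans{w}\EQt{X}{0}}\le 0$, while $-\beta_0^{\min}(\Q,w)=\cl\bigcup_{Z\in A_t}\FQwz{Z}$ by \eqref{min penalty} at $t=0$; the strict inclusion $\FQwz{X}\supsetneq -\beta_0^{\min}(\Q,w)$ then says exactly that $\FQwz{X}$ is a strictly larger half-space in $M_0$, which via $M_0$-translativity and the formula for $R_0^{A_t}$ gives $0\notin R_0^{A_t}(X)$.

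For the harder converse, suppose $X\notin A_t$. Since $A_t$ is closed and convex in $\LdpF{}$ and $X\notin A_t$, I would apply the Hahn--Banach separation theorem in the dual pair $(\LdpF{},\LdqF{})$ (using the appropriate topology, norm or weak* when $p=\infty$): there is $\zeta\in\LdqF{}\setminus\{0\}$ and $c\in\R$ with $\E{\trans{\zeta}X}<c\le\E{\trans{\zeta}Z}$ for all $Z\in A_t$. Next I would use the structure of $A_t$: monotonicity ($Z\succeq Z'$, $Z'\in A_t\Rightarrow Z\in A_t$) forces $\zeta\in\LdqF{+}$ (otherwise the infimum over $A_t$ would be $-\infty$ on the $\R^d_+$ directions), and $M_t$-translativity of the associated risk measure — equivalently $A_t = A_t + M_{t,-}$? more precisely $A_t\cap M_t\ne M_t$ and the recession directions of $A_t$ contain $M_{t,+}$ — forces $\zeta\in\plus{M_{t,+}}\setminus\prp{M_t}$ (if $\zeta\in\prp{M_t}$ one gets a contradiction with finiteness at zero, using translativity to slide along $M_t$). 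Finally I would factor $\zeta$ through a dual element: normalize $\zeta$ so that $\Et{\zeta}{0}$-conditioning recovers a probability-measure/direction pair, i.e. set $w:=\Et{\diag{\text{(some positive)}}\zeta}{0}$ appropriately and define $\Q$ so that $\diag{w}\xi_{0,T}(\Q)=\zeta$; concretely, $w_i := \E{\zeta_i}\,e_i$-type normalization and $\dQidP := \zeta_i/\E{\zeta_i}$, after checking $\E{\zeta_i}>0$ for the eligible indices. This yields $w_0^T(\Q,w)=\zeta$, so $(\Q,w)\in\olW_0$, and the separation inequality $\E{\trans{\zeta}X}<\inf_{Z\in A_t}\E{\trans{\zeta}Z}$ translates, by the half-space description of $\FQwblank$, exactly into the strict inclusion $\FQwz{X}\supsetneq\cl\bigcup_{Z\in A_t}\FQwz{Z}$ (the closure is harmless because strictness of the scalar inequality survives passage to the closed half-space/its closure).

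The main obstacle I expect is the third step of the converse: producing a \emph{bona fide} element $(\Q,w)\in\olW_0$ from the raw separating functional $\zeta$. One has to verify simultaneously that the relevant components $\E{\zeta_i}$ are strictly positive (so that $\Q_i\ll\P$ is a genuine probability measure), that $w\in\plus{M_{t,+}}\setminus\prp{M_t}$ and $w_0^T(\Q,w)=\diag{w}\dQdP=\zeta\in\LdqF{+}$, and $\Q=\P|_{\Ft{0}}$, which is automatic since $\Ft{0}$ is trivial. Degenerate cases where some $\E{\zeta_i}=0$ (i.e. $\zeta_i=0$ on the eligible directions) must be ruled out using $w\notin\prp{M_t}$, or handled by a perturbation/limiting argument; and one must be careful that the strict inequality is not destroyed when replacing $\bigcup_{Z\in A_t}\FQwz{Z}$ by its closure. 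A secondary technical point is the $p=\infty$ case, where $\LdqF{}=\LdoF{}$ and one needs the separation in the weak* topology, which is fine since $A_t$ is weak* closed by hypothesis. Everything else — the translation between scalar inequalities and inclusions of half-spaces, and the identification of $-\beta_0^{\min}$ with $\cl\bigcup_{Z\in A_t}\FQwz{Z}$ — is bookkeeping already encoded in \eqref{min penalty} and in the computation displayed in the proof of Theorem~\ref{thm_probability_equal}.
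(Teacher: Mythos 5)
Your proposal is correct and takes essentially the same route as the paper: separate $X$ from the closed convex set $A_t$, use $A_t + \LdpF{+} \subseteq A_t$ to force the separating functional $\zeta$ into $\LdqF{+}$, and then repackage it as a pair $(\Q,w) \in \olW_0$ so that the strict scalar inequality $\E{\trans{\zeta}X} < \inf_{Z \in A_t}\E{\trans{\zeta}Z}$ becomes the strict inclusion of half-spaces (with the converse direction read off directly from the half-space description). The explicit normalization $\dQidP = \zeta_i/\E{\zeta_i}$ that you propose to carry out by hand, together with the degenerate case $\E{\zeta_i}=0$ that you flag, is exactly the content the paper outsources to lemma~4.5 of \cite{FR12} and \eqref{smallerW}, so no new idea is needed there.
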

\begin{proof}
If $X \not\in A_t$ then there exists a $Y \in \LdqF{+}$ with $\inE{\trans{Y}X} < \inf_{Z \in A_t} \inE{\trans{Y}Z}$ by the separating
hyperplane theorem.  (If we choose $Y \not\in \LdqF{+}$ then $\inf_{Z \in A_t} \inE{\trans{Y}Z} = -\infty$ by $A_t + \LdpF{+} \subseteq A_t$ which leads to a contradiction.)  This implies that
\begin{align*}
\FYvz{X} := \lrcurly{u \in M: \E{\trans{Y}X} \leq \trans{v}u}\supsetneq \lrcurly{u \in M: \inf_{Z
\in A_t} \E{\trans{Y}Z} \leq \trans{v}u} = \cl \bigcup_{Z \in A_t} \FYvz{Z}
\end{align*}
for any $v \not\in \prp{M}$  since $f(u) = \trans{v}u$ is a continuous linear operator from $M$ to
$\R$.  In particular this is true for any $v \in \inlrparen{\inE{Y} + \prp{M}}\backslash \prp{M}$.
As given in lemma 4.5 in \cite{FR12} and \eqref{smallerW}, there exists a pair $(\Q,w) \in \olW_0$ such that $\FQwz{\cdot} = \FYvz{\cdot}$, therefore $\FQwz{X} \supsetneq \cl \bigcup_{Z \in A_t} \FQwz{Z}$.

If $\FQwz{X} \supsetneq \cl \bigcup_{Z \in A_t} \FQwz{Z}$ for some $(\Q,w) \in \olW_0$, then $\inEQ{X} \neq \inEQ{Z}$ for all $Z \in A_t$.  Therefore $X \not\in A_t$.
\qed\end{proof}

In order to formulate the additive property of the penalty functions, we need to define the minimal stepped penalty function $-\beta_{t,s}^{\min}$ and $-\alpha_{t,s}^{\min}$ (stepped from $t$ to $s>t$). The definition is straight forward, using the definition of minimal penalty functions \eqref{min penalty}, respectively \eqref{conditional_min penalty}, but with stepped acceptance sets.
Define $-\beta_{t,s}^{\min}$ by
\begin{equation}
\label{stepped_beta}
-\beta_{t,s}^{\min}(\Q,w) := \cl \bigcup_{X \in A_{t,s}} \lrparen{\EQt{X}{t} + G_t(w)} \cap M_t
\end{equation}
and $-\alpha_{t,s}^{\min}$ by
\begin{equation*}
%\label{stepped_alpha}
-\alpha_{t,s}^{\min}(\Q,w) := \cl \bigcup_{X \in A_{t,s}} \lrparen{\EQt{X}{t} + \Gamma_t(w)} \cap M_t
\end{equation*}
for $(\Q,w) \in \olW_{t,s} = \inlrcurly{(\Q,w) \in \mathcal{M} \times \inlrparen{\plus{M_{t,+}} \backslash M_t^{\perp}}: w_t^{s}(\Q,w) \in \plus{M_{s,+}}}$. A detailed discussion about stepped risk measures can be found in section~\ref{sec_stepped} of the appendix.

We now state the main results of this section. Its proofs are based on the Hahn-Banach argument given above and several lemmas provided in the appendix, sections~\ref{sec_dualvar} and \ref{sec_sum_acceptance}, that concern e.g. the relation of dual variables at different times.
Throughout the remainder of this paper we will use the notation $\Q^{s}$ to denote the modification of $\Q \in \mathcal{M}$ defined by $\frac{d\Q^{s}}{d\P} = \xi_{s,T}(\Q)$.

\begin{theorem}
\label{thm_mptc_penalty_wo_uc}
Let $\seq{R}$ be a dynamic normalized closed convex risk measure. Then $\seq{R}$ is multiportfolio time consistent if and only if  for every $(\Q,w) \in \olW_t$
\[-\beta_{t}^{\min}(\Q,w) = \cl \lrparen{-\beta_{t,s}^{\min}(\Q,w) + \EQt{-\beta_{s}^{\min}(\Q^{s},w_t^{s}(\Q,w))}{t}}\]
and $A_{t,s} + A_{s}$ is closed, for all $0 \leq t < s \leq T$.
\end{theorem}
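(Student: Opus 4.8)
\textbf{Proof proposal for Theorem~\ref{thm_mptc_penalty_wo_uc}.}

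The plan is to work through the characterization of multiportfolio time consistency via the recursive form (Theorem~\ref{thm_equiv_tc}), or equivalently via the acceptance set identity $A_t = A_{t,s} + A_s$, and translate this identity into the dual language of penalty functions using the Hahn--Banach argument of Lemma~\ref{separation}. The central observation is that Lemma~\ref{separation} gives, for a closed convex acceptance set $B$, a dual description of membership: $X \in B$ iff $\FQwz{X} \subseteq \cl\bigcup_{Z \in B} \FQwz{Z}$ for all $(\Q,w) \in \olW_0$. Applied at time $t$ rather than $0$ (using the stepped/conditional versions in the appendix, and the fact that the relevant half-space functions $\FQwblank$ are exactly those appearing in \eqref{convex_dual} and \eqref{min penalty}), this says $X \in A_t$ iff $\FQw{X} \subseteq -\beta_t^{\min}(\Q,w)$ for all $(\Q,w) \in \olW_t$, and similarly for $A_{t,s}$ with $-\beta_{t,s}^{\min}$ and for $A_s$ with $-\beta_s^{\min}$.

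First I would establish the ``$\Rightarrow$'' direction. Assume multiportfolio time consistency, so $A_t = A_{t,s} + A_s$ for all $0 \le t < s \le T$; closedness of $A_{t,s} + A_s$ then follows from closedness of $A_t$. For the cocycle identity, I would compute $-\beta_t^{\min}(\Q,w) = \cl\bigcup_{Z \in A_t}\FQw{Z}$ by substituting $Z = X + Y$ with $X \in A_{t,s}$, $Y \in A_s$. The key algebraic fact is that $\FQwblank$ is additive in the appropriate sense along the decomposition: for $X \in M_s$ one has $\EQt{X}{t} = \EQt{\,\cdot\,}{t}$ applied after conditioning to $\Ft{s}$, and $\EQt{Y}{t} = \EQt{\,\EQt{Y}{s}\,}{t}$ with the tower property expressed through $w_t^s(\Q,w) = \diag{w}\xi_{t,s}(\Q)$ (this is where the modification $\Q^s$ with $\frac{d\Q^s}{d\P} = \xi_{s,T}(\Q)$ enters: $\EQt{Y}{s}$ is governed by $(\Q^s, w_t^s(\Q,w))$). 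So $\bigcup_{Z \in A_t}\FQw{Z}$ decomposes as the (elementwise) sum $\bigl(\bigcup_{X \in A_{t,s}}\FQw{X}\bigr) + \EQt{\bigcup_{Y \in A_s}(\EQt{Y}{s} + \text{half-space})}{t}$, and taking closures yields $-\beta_t^{\min}(\Q,w) = \cl\bigl(-\beta_{t,s}^{\min}(\Q,w) + \EQt{-\beta_s^{\min}(\Q^s, w_t^s(\Q,w))}{t}\bigr)$. The lemmas in appendix sections~\ref{sec_dualvar} and \ref{sec_sum_acceptance} are what make the manipulation of $\olW_t$, $\olW_{t,s}$, $\olW_s$ and the interchange of unions with conditional expectations rigorous.

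For the ``$\Leftarrow$'' direction, assume the cocycle identity and closedness of $A_{t,s} + A_s$; I would show $A_t = A_{t,s} + A_s$. The inclusion $A_{t,s} + A_s \subseteq A_t$ is the easy half (it follows from $M_s$-translativity-type arguments and monotonicity, essentially as in the general theory: $A_{t,s} = A_t \cap M_s$ and $A_s \subseteq A_t$ up to the stepped structure). For the reverse inclusion $A_t \subseteq A_{t,s} + A_s$, I would argue by contraposition using Lemma~\ref{separation}: if $X \in A_t$ but $X \notin \cl(A_{t,s} + A_s) = A_{t,s} + A_s$, then there is $(\Q,w) \in \olW_t$ with $\FQw{X} \supsetneq \cl\bigcup_{Z \in A_{t,s}+A_s}\FQw{Z} = \cl\bigl(-\beta_{t,s}^{\min}(\Q,w) + \EQt{-\beta_s^{\min}(\Q^s,w_t^s(\Q,w))}{t}\bigr) = -\beta_t^{\min}(\Q,w)$ by hypothesis; but $X \in A_t$ forces $\FQw{X} \subseteq -\beta_t^{\min}(\Q,w)$, a contradiction. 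I expect the main obstacle to be exactly the decomposition-of-unions step: showing that $\cl\bigcup_{Z \in A_{t,s}+A_s}\FQw{Z}$ genuinely equals $\cl\bigl(-\beta_{t,s}^{\min}(\Q,w) + \EQt{-\beta_s^{\min}(\Q^s,w_t^s(\Q,w))}{t}\bigr)$, since this requires commuting a union over a sum of sets with a conditional expectation and a closure, and verifying that restricting the outer half-space direction $w$ at time $t$ is compatible with the ``inner'' dual variable $(\Q^s, w_t^s(\Q,w))$ ranging over the correct set $\olW_s$ (or a dense enough subset) — this is precisely why the smaller dual set $\olW_t$ from Theorem~\ref{thm_probability_equal} and the appendix lemmas on dual variables at different times are needed, and where closedness of $A_{t,s}+A_s$ is used to drop the closure on the acceptance-set side.
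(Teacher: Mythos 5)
Your overall strategy is the paper's: translate $A_t = A_{t,s}+A_s$ into the dual language via the Hahn--Banach characterization of Lemma~\ref{separation}, prove the forward direction by decomposing $Z = X+Y$ and using the tower property plus the appendix lemmas on conditional expectations of penalty functions, and prove the backward direction by contraposition with the separation lemma. (One cosmetic difference: the paper keeps Lemma~\ref{separation} at time $0$ and lifts everything through $\EQ{\cdot}$ using corollary~\ref{prop_Gw}, rather than invoking a time-$t$ version; that is immaterial.)

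There is, however, one genuine gap in your backward direction. You dismiss the inclusion $A_{t,s}+A_s \subseteq A_t$ as ``the easy half,'' claiming it follows from translativity and monotonicity because ``$A_s \subseteq A_t$ up to the stepped structure.'' This is false for a general dynamic risk measure: the acceptance sets at different times are a priori unrelated (a dynamic risk measure is just a sequence of conditional risk measures), and $A_{t,s}+A_s \subseteq A_t$ is itself a nontrivial weak time-consistency property --- it is exactly half of the characterization $A_t = A_{t,s}+A_s$ in Theorem~\ref{thm_equiv_tc} and fails, e.g., for the non-time-consistent $AV@R$. In the paper this inclusion is \emph{derived from the cocycle hypothesis}, by the same separation argument you reserve for the other inclusion: for $X \in A_{t,s}+A_s$ one shows $\FQwz{X} \subseteq \EQ{\cl\lrparen{-\beta_{t,s}^{\min}+\EQt{-\beta_{s}^{\min}}{t}}} = \EQ{-\beta_t^{\min}} = \cl\bigcup_{Z\in A_t}\FQwz{Z}$ and then applies Lemma~\ref{separation} to $A_t$. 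This is not merely a matter of rigor: your second inclusion $A_t \subseteq A_{t,s}+A_s$ needs to apply Lemma~\ref{separation} to the set $A_{t,s}+A_s$, which requires knowing that $A_{t,s}+A_s$ is a closed convex \emph{acceptance set}; by Lemma~\ref{lemma_sum_acceptance} this in turn requires $A_{t,s}+A_s \subseteq A_t$ as a hypothesis (to guarantee $A_{t,s}+A_s$ does not contain all of $M_t$). So the ``easy half'' must be proved first, from the cocycle condition, before your contraposition argument for the other half can be run. With that repaired, and with the decomposition-of-unions identity you correctly flag as the technical crux (handled in the paper by Lemma~\ref{lemma_exp_penalty} and the closure-of-sums result cited there), your outline matches the paper's proof.
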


\begin{proof}
From theorem~\ref{thm_equiv_tc}, a normalized dynamic risk measure
is multiportfolio time consistent if and only if $A_t = A_{t,s} + A_{s}$ for every $0 \leq t < s \leq T$.

\begin{proofenum}
\item Assume $\seq{R}$ is a normalized closed convex multiportfolio time consistent risk measure, i.e. assume $A_t = A_{t,s} + A_{s}$. It immediately follows that for any $(\Q,w) \in \olW_t$
    \begin{align}
        \nonumber & \cl \lrparen{-\beta_{t,s}^{\min}(\Q,w) + \EQt{-\beta_{s}^{\min}(\Q^{s},w_t^{s}(\Q,w))}{t}}\\
        \label{exp_penalty} &= \cl \lrparen{-\beta_{t,s}^{\min}(\Q,w) + \cl \bigcup_{X_{s} \in A_{s}} \lrparen{\EQt{X_{s}}{t} + G_t(w)} \cap M_t}\\
        \nonumber &= \cl \lparen{\cl \bigcup_{X_{t,s} \in A_{t,s}} \lrparen{\EQt{X_{t,s}}{t} + G_t(w)} \cap M_t} \rparen{+ \cl \bigcup_{X_{s} \in A_{s}} \lrparen{\EQt{X_{s}}{t} + G_t(w)} \cap M_t}\\
        \label{closure_sum_closures} &= \cl \lparen{\bigcup_{X_{t,s} \in A_{t,s}} \lrparen{\EQt{X_{t,s}}{t} + G_t(w)} \cap M_t}\rparen{+ \bigcup_{X_{s} \in A_{s}} \lrparen{\EQt{X_{s}}{t} + G_t(w)} \cap M_t}\\
        \nonumber &= \cl \bigcup_{\substack{X_{t,s} \in A_{t,s}\\ X_{s} \in A_{s}}} \lrparen{\EQt{X_{t,s} + X_{s}}{t} + G_t(w)} \cap M_t\\
        \nonumber &= \cl \bigcup_{X \in A_t} \lrparen{\EQt{X}{t} + G_t(w)} \cap M_t = -\beta_t^{\min}(\Q,w).
    \end{align}
    Equation~\eqref{exp_penalty} follows from lemma~\ref{lemma_exp_penalty}, and equation~\eqref{closure_sum_closures} follows from proposition 1.23 in~\cite{L11}.  Note that if $(\Q,w) \in \olW_t$ then $(\Q,w) \in \olW_{t,s}$ (see remark~\ref{rem_stepped_dual_var}) and $(\Q^{s},w_t^{s}(\Q,w)) \in \olW_{s}$ (see lemma~\ref{lemma_Wtau_to_Wt}).

\item Conversely, assume $A_{t,s} + A_{s}$ is closed and the cocycle condition is satisfied, that is, $-\beta_{t}^{\min}(\Q,w) = \cl
\inlrparen{-\beta_{t,s}^{\min}(\Q,w) + \inEQt{-\beta_{s}^{\min}(\Q^{s},w_t^{s}(\Q,w))}{t}}$ for every $(\Q,w) \in \olW_t$.

    Note that for any $(\Q,w) \in \olW_{0}$ it holds $w_{0}^{s}(\Q,w) = w_t^{s}(\Q^t,w_{0}^t(\Q,w))$.

    Let $X \in A_{t,s} + A_{s}$, then
    by the tower property and corollary~\ref{prop_Gw}, for every $(\Q,w) \in \olW_0$
    \begin{align*}
        \FQwz{X} & \subseteq \EQ{\cl \lrparen{-\beta_{t,s}^{\min}(\Q^t,w_0^t(\Q,w)) + \EQt{-\beta_{s}^{\min}(\Q^{s},w_0^{s}(\Q,w))}{t}}}\\
        & = \EQ{-\beta_t^{\min}(\Q^t,w_0^t(\Q,w))} = \cl \bigcup_{Z \in A_t} \FQwz{Z}.
    \end{align*}
    The last equality follows from lemma~\ref{lemma_exp_penalty}.
    If $X \not\in A_t$ then, by lemma~\ref{separation}, there exists a pair $(\Q,w) \in \olW_0$ such that $\FQwz{X} \supsetneq \cl \bigcup_{Z \in A_t} \FQwz{Z}$. However, this is a contradiction to the above, therefore $X \in A_t$ and thus
    \begin{equation}
    \label{resproof}
    A_{t,s} + A_{s}\subseteq A_t.
    \end{equation}

    Let $X \in A_t$, then (using corollary~\ref{prop_Gw} and lemma~\ref{lemma_exp_penalty}) for every $(\Q,w) \in \olW_0$
    \begin{align*}
        \FQwz{X} & \subseteq \EQ{-\beta_t^{\min}(\Q^t,w_0^t(\Q,w))}\\
        & = \EQ{\cl \lrparen{-\beta_{t,s}^{\min}(\Q^t,w_0^t(\Q,w)) + \EQt{-\beta_{s}^{\min}(\Q^{s},w_0^{s}(\Q,w))}{t}}}\\
        & = \cl \bigcup_{Z \in A_{t,s} + A_{s}} \FQwz{Z}.
    \end{align*}
    If we assume that $X \not\in A_{t,s} + A_{s}$ (which is closed by assumption and is a convex acceptance set by lemma~\ref{lemma_sum_acceptance}, where the assumption $A_{t,s} + A_{s} \subseteq A_t$ of lemma~\ref{lemma_sum_acceptance} is satisfied by \eqref{resproof})
    then, by lemma~\ref{separation}, there exists a pair $(\Q,w) \in \olW_0$ such that \[\FQwz{X} \supsetneq \cl \bigcup_{Z \in A_{t,s} + A_{s}} \FQwz{Z}.\]  This is a contradiction to the above, therefore $X \in A_{t,s} + A_{s}$.
\end{proofenum}
\qed\end{proof}

\begin{corollary}
\label{thm_mptc_penalty}
Let $\seq{R}$ be a dynamic normalized c.u.c. convex risk measure. Then, $\seq{R}$ is multiportfolio time consistent if and only if
\begin{equation*}
-\beta_{t}^{\min}(\Q,w) = \cl \lrparen{-\beta_{t,s}^{\min}(\Q,w) + \EQt{-\beta_{s}^{\min}(\Q^{s},w_t^{s}(\Q,w))}{t}}
\end{equation*}
holds for every $(\Q,w) \in \olW_t$ for all $0 \leq t < s \leq T$.
\end{corollary}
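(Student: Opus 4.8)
The plan is to deduce this directly from Theorem~\ref{thm_mptc_penalty_wo_uc}. That theorem already characterizes multiportfolio time consistency of a normalized closed convex risk measure as the conjunction of the cocycle identity (for every $(\Q,w)\in\olW_t$ and all $0\le t<s\le T$) and the closedness of $A_{t,s}+A_{s}$ for all $0\le t<s\le T$; since a c.u.c.\ risk measure is closed, the whole content of the corollary is that in the c.u.c.\ case the closedness of $A_{t,s}+A_{s}$ comes for free. Granting that, the ``only if'' direction is immediate (if $\seq R$ is c.u.c.\ and multiportfolio time consistent, Theorem~\ref{thm_mptc_penalty_wo_uc} gives the cocycle identity), and for the ``if'' direction, assuming the cocycle identity, Theorem~\ref{thm_mptc_penalty_wo_uc} yields multiportfolio time consistency once $A_{t,s}+A_{s}$ is known to be closed.

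So the real work is to show: if $\seq R$ is c.u.c.\ convex, then $A_{t,s}+A_{s}$ is closed for every $0\le t<s\le T$. I would do this by identifying $A_{t,s}+A_{s}$ with the acceptance set of the backward composition $\tilde R_t(X):=\bigcup_{Z\in R_{s}(X)}R_{t,s}(-Z)$, where $R_{t,s}$ is the stepped risk measure on $M_{s}$ with acceptance set $A_{t,s}=A_t\cap M_{s}$, and then checking that $\tilde R_t$ is itself c.u.c.\ (hence has a closed graph, hence a closed acceptance set). The identification $\{X:0\in\tilde R_t(X)\}=A_{t,s}+A_{s}$ is routine: if $0\in\tilde R_t(X)$ pick $Z\in R_{s}(X)$ with $-Z\in A_{t,s}$, so $X=(X+Z)+(-Z)\in A_{s}+A_{t,s}$, and conversely. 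For the c.u.c.\ claim I would use $\tilde R_t^{-1}(D)=R_{s}^{-1}\bigl(-R_{t,s}^{-1}(D)\bigr)$; here $R_{t,s}^{-1}(D)=R_t^{-1}(D)\cap M_{s}$ is closed by c.u.c.\ of $R_t$, and it is convex and invariant under adding $M_{s,+}$ (by monotonicity of $R_{t,s}$), so $-R_{t,s}^{-1}(D)$ is a closed element of $\mathcal{G}(M_{s};M_{s,-})$, and therefore $R_{s}^{-1}\bigl(-R_{t,s}^{-1}(D)\bigr)$ is closed by c.u.c.\ of $R_{s}$. This is exactly the statement that the composition of c.u.c.\ risk measures is c.u.c., which is also used in Section~\ref{sec_composition}; alternatively one can invoke the corresponding appendix lemma on sums of acceptance sets in Section~\ref{sec_sum_acceptance}.

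The main obstacle is precisely this closedness of $A_{t,s}+A_{s}$: the sum of two closed convex sets need not be closed, so graph-closedness of the $R_t$ alone is not enough, and one has to exploit the stronger convex upper continuity, whose virtue here is that it is preserved under the composition of $R_{t,s}$ with $-R_{s}$. Everything else is bookkeeping around Theorem~\ref{thm_mptc_penalty_wo_uc}.
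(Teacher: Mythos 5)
Your proposal is correct and follows essentially the same route as the paper: the corollary is reduced to Theorem~\ref{thm_mptc_penalty_wo_uc}, and the only remaining point, closedness of $A_{t,s}+A_{s}$, is obtained exactly as in Lemma~\ref{lemma_closed_sum_acceptance} together with Remark~\ref{rem_stepped_usc}, by writing $A_{t,s}+A_{s}$ as the acceptance set of the composition $\tilde R_t(X)=\bigcup_{Z\in R_{s}(X)}R_{t,s}(-Z)$ and using that composition of c.u.c.\ maps (Proposition~\ref{prop_comp_usc}) is c.u.c. Your identity $\tilde R_t^{-1}(D)=R_{s}^{-1}\bigl(-R_{t,s}^{-1}(D)\bigr)$ and the verification that $-R_{t,s}^{-1}(D)\in\mathcal{G}(M_{s};M_{s,-})$ are precisely the content of that proposition.
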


\begin{proof}
In light of theorem~\ref{thm_mptc_penalty_wo_uc} it only remains to show that convex upper continuity of $\seq{R}$ implies the closedness of $A_{t,s} + A_{s}$. This follows from remark~\ref{rem_stepped_usc} and lemma~\ref{lemma_closed_sum_acceptance}.
\qed\end{proof}

In the above theorem and corollary, we have demonstrated the equivalence between an additive property for the penalty functions and multiportfolio time consistency.  This allows us to define risk measures by the penalty functions alone and verify whether the corresponding c.u.c. convex risk measure is multiportfolio time consistent.

\begin{example}[Entropic risk measure]
\label{ex_entropic}
The restrictive entropic risk measure  (see section~\ref{sec_entropic} for a more general and detailed treatment) is defined by
\[
R_t^{ent}(X;\lambda) := \lrcurly{u \in \LdiF{t}: \Et{u_t(X + u)}{t} \succeq 0}
\]
for every $X \in \LdiF{}$ where
$u_t(x) =\transp{u_{t,1}(x_1),...,u_{t,d}(x_d)}$ for any $x \in \R^d$ with elements $u_{t,i}(z)=\frac{1-e^{-\lambda_i z}}{\lambda_i}$ for $z\in\R$ and $i=1,...,d$.

The restrictive entropic risk measure is normalized, convex, and closed. By applying lemma~\ref{lemma_entropic_dual} one obtains that its stepped penalty functions, defined in \eqref{stepped_beta} (with $0 \leq t < s \leq T$), are given by
\begin{equation*}
-\beta_{t,s}^{ent}(\Q,w;\lambda) := -\frac{\hat{H}_{t,s}(\Q|\P)}{\lambda} + G_t(w)
\end{equation*}
for any $(\Q,w) \in \olW_t$, where $\hat{H}_{t,s}(\Q|\P) := \inEQt{\log(\xi_{t,s}(\Q))}{t}$.
The dual representation of the entropic risk measure is given by \eqref{convex_dual} with minimal penalty function $-\beta_t^{ent} := -\beta_{t,T}^{ent}$. Note that
$\hat{H}_{t,T}(\Q|\P) = \inEQt{\log(\dQdP)}{t}$ is the conditional relative entropy.

It can immediately be seen that
\begin{align*}
	-\beta_t^{ent}(\Q,w;\lambda) &= \cl\lrparen{-\beta_{t,s}(\Q,w;\lambda)+ \EQt{-\beta_{s}^{ent}(\Q^{s},w_t^{s}(\Q,w);\lambda)}{t}}
\end{align*}
for all times $0 \leq t < s \leq T$ and any dual variables $(\Q,w) \in \olW_t$. Thus, the restrictive entropic risk
measure satisfies the cocycle property. Proposition~\ref{prop_uc_scalar} yields (using representation \eqref{entpoint} for the restrictive
entropic risk measure) the convex upper continuity of $R_t^{ent}$.  Therefore the set-valued entropic risk measure with constant risk aversion parameter $\lambda$ and restrictive thresholds $\LdiF{t,+}$ is multiportfolio time consistent by corollary~\ref{thm_mptc_penalty}.
\end{example}

We now consider the conditionally convex case when a dual representation w.r.t. equivalent probability measures, i.e. w.r.t. the dual set $\olW_t^e$, holds.
\begin{corollary}
\label{cor_mptc_cond_penalty}
Let $\seq{R}$ be a dynamic normalized closed conditionally convex risk measure with dual representation
\begin{equation}
\label{dual_We}
R_t(X) = \bigcap_{(\Q,w) \in \olW_t^e} \lrsquare{-\alpha_t^{\min}(\Q,w) + \lrparen{\EQt{-X}{t} + \Gamma_t(w)} \cap M_t}
\end{equation}
for every $X \in \LdpF{T}$ where $\olW_t^e = \inlrcurly{(\Q,w) \in \olW_t: \Q \in \mathcal{M}^e}$.  Then $\seq{R}$ is multiportfolio time consistent if and only if  for every $(\Q,w) \in \olW_t$
\[-\alpha_{t}^{\min}(\Q,w) = \cl \lrparen{-\alpha_{t,s}^{\min}(\Q,w) + \EQt{-\alpha_{s}^{\min}(\Q^{s},w_t^{s}(\Q,w))}{t}}\]
and $A_{t,s} + A_{s}$ is closed, for all $0 \leq t < s \leq T$.
\end{corollary}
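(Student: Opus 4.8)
The plan is to follow the proof of Theorem~\ref{thm_mptc_penalty_wo_uc} essentially line for line, with the conditional minimal penalty functions $-\alpha^{\min}$, $-\alpha_{t,s}^{\min}$ and the conditional half-spaces $\Gamma_t(w)$ taking over the roles of $-\beta^{\min}$, $-\beta_{t,s}^{\min}$ and $G_t(w)$, and with the conditionally convex dual representation \eqref{conditional_convex_dual}, here assumed in the form \eqref{dual_We}, taking over the role of Theorem~\ref{thm_probability_equal}. As there, Theorem~\ref{thm_equiv_tc} reduces the statement to proving that $A_t=A_{t,s}+A_s$ for all $0\leq t<s\leq T$ is equivalent to the conditional cocycle identity together with closedness of $A_{t,s}+A_s$. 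The direction ``$A_t=A_{t,s}+A_s\Rightarrow$ cocycle'' is then routine: $A_{t,s}+A_s=A_t$ is automatically closed, and the cocycle identity follows by repeating the chain \eqref{exp_penalty}--\eqref{closure_sum_closures} verbatim with $G_t(w)$ replaced by $\Gamma_t(w)$ --- the conditional analogue of lemma~\ref{lemma_exp_penalty} rewrites $\EQt{-\alpha_s^{\min}(\Q^s,w_t^s(\Q,w))}{t}$ as $\cl\bigcup_{X_s\in A_s}(\EQt{X_s}{t}+\Gamma_t(w))\cap M_t$, proposition~1.23 in~\cite{L11} moves the two closures outside the Minkowski sum, and additivity of the conditional expectation merges the unions over $A_{t,s}$ and $A_s$ into one union over $A_{t,s}+A_s=A_t$, which is by definition $-\alpha_t^{\min}(\Q,w)$; the memberships $(\Q,w)\in\olW_t\Rightarrow(\Q,w)\in\olW_{t,s}$ and $(\Q^s,w_t^s(\Q,w))\in\olW_s$ are the same ones used in that proof.

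For the converse, assume the conditional cocycle holds and $A_{t,s}+A_s$ is closed; one must show $A_t=A_{t,s}+A_s$, and I would mimic the two inclusions in the proof of Theorem~\ref{thm_mptc_penalty_wo_uc} built on lemma~\ref{separation} and the functions $\FQwz{\cdot}$. For $(\Q,w)\in\olW_0$ and $X\in A_{t,s}+A_s$, the tower property, the conditional analogue of corollary~\ref{prop_Gw}, the assumed cocycle, and the conditional analogue of lemma~\ref{lemma_exp_penalty} give $\FQwz{X}\subseteq\EQ{-\alpha_t^{\min}(\Q^t,w_0^t(\Q,w))}=\cl\bigcup_{Z\in A_t}\FQwz{Z}$; if $X\notin A_t$ then lemma~\ref{separation} produces a $(\Q,w)\in\olW_0$ with $\FQwz{X}\supsetneq\cl\bigcup_{Z\in A_t}\FQwz{Z}$, a contradiction, so $A_{t,s}+A_s\subseteq A_t$. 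By lemma~\ref{lemma_sum_acceptance} (whose hypothesis $A_{t,s}+A_s\subseteq A_t$ has just been verified) $A_{t,s}+A_s$ is a closed convex acceptance set, and the symmetric computation --- for $X\in A_t$, $\FQwz{X}\subseteq\EQ{-\alpha_t^{\min}(\Q^t,w_0^t(\Q,w))}=\cl\bigcup_{Z\in A_{t,s}+A_s}\FQwz{Z}$, again against lemma~\ref{separation} --- yields $A_t\subseteq A_{t,s}+A_s$. The point of assuming \eqref{dual_We} with respect to the \emph{equivalent} dual set $\olW_t^e$ is to validate the conditional versions of lemma~\ref{lemma_exp_penalty} and corollary~\ref{prop_Gw} invoked above: for $\Q\in\mathcal{M}^e$ the densities $\xi_{t,s}(\Q_i)$ stay strictly positive, which is what lets the conditional expectations $\EQt{\cdot}{t}$ see the almost-sure structure of the conditional penalties, so throughout this direction one works with $(\Q,w)\in\olW_0^e$, exactly as in the proof of corollary~\ref{cor_conditional_dual}, and uses the $\Ft{t}$-decomposability of $A_t$ (which holds by conditional convexity, see remark~\ref{rem_decomp}) together with~\cite[theorem 1]{Y85}.

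The main obstacle is precisely those conditional versions of lemma~\ref{lemma_exp_penalty} and corollary~\ref{prop_Gw} --- identifying $\EQ{-\alpha_t^{\min}(\Q^t,w_0^t(\Q,w))}$ with $\cl\bigcup_{Z\in A_t}\FQwz{Z}$ and establishing $\FQwz{X}\subseteq\EQ{-\alpha_t^{\min}(\cdots)}$ for $X$ in the relevant acceptance set --- since the conditional half-space $\Gamma_t(w)$, rather than $G_t(w)$, sits inside the penalty while the separating functionals delivered by lemma~\ref{separation} are genuinely unconditional; it is here that equivalence of the measures is indispensable. An alternative for the ``$\Leftarrow$'' direction that sidesteps a conditional separation lemma is to reduce to Theorem~\ref{thm_mptc_penalty_wo_uc} itself: a closed conditionally convex risk measure is closed and convex, and by the $\essinf$--$\E{\cdot}$ interchange (\cite[theorem 1]{Y85}, applicable by $\Ft{t}$-decomposability) one has $-\beta_t^{\min}(\Q,w)=\cl\bigl(-\alpha_t^{\min}(\Q,w)+(G_t(w)\cap M_t)\bigr)$ and the analogous identities for $-\alpha_{t,s}^{\min}$ and for $\EQt{-\beta_s^{\min}(\Q^s,w_t^s(\Q,w))}{t}$ (the last using that $\EQt{\cdot}{t}$ maps $G_s(w_t^s(\Q,w))\cap M_s$ onto $G_t(w)\cap M_t$, because $\E{\trans{w}\EQt{u}{t}}=\E{\transp{w_t^s(\Q,w)}u}$ for $\Ft{s}$-measurable $u$); adding $G_t(w)\cap M_t$ and taking closures, and absorbing the extra half-spaces via proposition~1.23 in~\cite{L11} and $(G_t(w)\cap M_t)+(G_t(w)\cap M_t)=G_t(w)\cap M_t$, then carries the conditional cocycle onto the unconditional one of Theorem~\ref{thm_mptc_penalty_wo_uc}, whence $A_t=A_{t,s}+A_s$; closedness of $A_{t,s}+A_s$ is common to both formulations, so nothing further is required.
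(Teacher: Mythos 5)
Your proposal is correct and follows essentially the same route as the paper: the published proof simply repeats the argument of theorem~\ref{thm_mptc_penalty_wo_uc} ``with the closure operator added where necessary,'' substituting the conditional analogues of lemma~\ref{lemma_exp_penalty} and corollary~\ref{prop_Gw}. The two ingredients you flag as the main obstacle are precisely lemma~\ref{lemma_exp_cond_penalty} and corollary~\ref{prop_Gamma} in the appendix, both stated for $\Q \in \mathcal{M}^e$, which is exactly why the dual representation is assumed with respect to $\olW_t^e$.
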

\begin{proof}
This follows using the same logic as in the proof of theorem~\ref{thm_mptc_penalty_wo_uc} (with the closure operator added where necessary) using the results on the conditional expectation of $\alpha_s^{\min}$ and $\Gamma_s$ in lemma~\ref{lemma_exp_cond_penalty} and corollary~\ref{prop_Gamma}.
\qed\end{proof}
If $\seq{R}$ is, additionally to the assumptions of Corollary~\ref{cor_mptc_cond_penalty}, conditionally c.u.c., then $\seq{R}$ is multiportfolio time consistent if and only if the cocycle condition on
$\alpha_t^{\min}$ holds. This corresponds to the results of corollary~\ref{thm_mptc_penalty}.

%%%%%%%%%%%%%%%%%%%%%%%%%%%%%%%%%%
%%%%%%%%%%%%%%%%%%%%%%%%%%%%%%%%%%%%
\section{Coherent risk measures and multiportfolio time consistency}
\label{sec_coherent}
In this section, we want to study multiportfolio time consistency in the (conditionally) coherent case. In particular, we want to find equivalent characterizations of multiportfolio time consistency with respect to the set of dual variables.
In the scalar framework an equivalent property is given by stability of the dual variables, also called m-stability, which was studied for the case when the dual probability measures are absolutely continuous to the real world probability measure $\P$ in~\cite{AP10,CDK06}, and when the dual probability measures are equivalent to $\P$ in~\cite{D06,FP06,AD07}.

\begin{remark}
\label{rem_cuc}
In this section and section~\ref{sec_composition}, we will for simplicity only present the results for multiportfolio time consistency assuming
convex upper continuity, akin to corollary~\ref{thm_mptc_penalty} above.  The results can be given
for closed risk measures as well as it was done in theorem~\ref{thm_mptc_penalty_wo_uc} and corollary \ref{cor_mptc_cond_penalty}.
\end{remark}

For the results below we use the definition of the maximal set of stepped dual variables $\olW_{t,s}^{\max} \subseteq \olW_{t,s}$ as defined in section~\ref{sec_stepped}.  That is,
\begin{align*}
\olW_{t,s}^{\max} &= \{(\Q,w) \in \olW_{t,s}: -\beta_{t,s}^{\min}(\Q,w) = G_t(w) \cap M_t\}= \lrcurly{(\Q,w) \in \olW_{t,s}: w_t^{s}(\Q,w) \in \plus{A_{t,s}}}.
\end{align*}
All the results for the conditionally coherent case stay the same as for the coherent case (except that the assumption c.u.c. can be weakened to conditionally c.u.c.)
as the set of dual variables does not change (compare \eqref{conditional_convex_dual} and \eqref{convex_dual}). This is also true for
the maximal set of stepped dual variables as \[\olW_{t,s}^{\max}=\{(\Q,w) \in \olW_{t,s}: -\alpha_{t,s}^{\min}(\Q,w) = \Gamma_t(w) \cap
M_t\}\]since $\plus{A_{t,s}}=\inlrcurly{v \in \LdqF{t}: \forall u \in A_{t,s}, \trans{v}u \geq 0\; \Pas}$ if $R_t$ is conditionally coherent.

\begin{remark}
\label{rem_stepped_dual_max}
For any closed coherent risk measure $R_t$ (not necessarily multiportfolio time consistent) it can trivially be seen that $\olW_t^{\max} \subseteq \olW_{t,s}^{\max}$ since $\olW_t \subseteq \olW_{t,s}$ (see remark~\ref{rem_stepped_dual_var}) and $w_t^{T}(\Q,w) \in \plus{A_t}$ implies $w_t^{s}(\Q,w) \in \plus{A_{t,s}}$.
\end{remark}

The first result we provide, which will be useful for generating a c.u.c. coherent multiportfolio time consistent risk measure in section~\ref{sec_composition}, is a corollary to theorem~~\ref{thm_mptc_penalty_wo_uc} (respectively corollary~\ref{thm_mptc_penalty}) above.

Let us define the set $H_t^{s}: 2^{\olW_{s}} \to 2^{\olW_t}$ by
\begin{equation*}
H_t^{s}(D) := \lrcurly{(\Q,w) \in \olW_t: \lrparen{\Q^{s},w_t^{s}(\Q,w)} \in D}
\end{equation*}
for any $0 \leq t < s \leq T$, and any $D \subseteq \olW_{s}$.

\begin{corollary}
\label{cor_mptc_coherent}
Let $\seq{R}$ be a dynamic normalized c.u.c. coherent risk measure. Then, $\seq{R}$ is multiportfolio time consistent if and only if for all $0 \leq t < s \leq T$ it holds \[\olW_t^{\max} = \olW_{t,s}^{\max} \cap H_t^{s}\lrparen{\olW_{s}^{\max}}.\]
\end{corollary}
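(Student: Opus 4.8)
The plan is to translate the coherent cocycle condition of Corollary~\ref{thm_mptc_penalty} into a statement about the dual variables using the characterization of $\olW_t^{\max}$, $\olW_{t,s}^{\max}$, and $\olW_s^{\max}$ via the vanishing of the respective minimal penalty functions, i.e.\ $(\Q,w)\in\olW_t^{\max}$ iff $-\beta_t^{\min}(\Q,w) = G_t(w)\cap M_t$, and similarly for the stepped and future sets. First I would apply Corollary~\ref{thm_mptc_penalty}: $\seq{R}$ is multiportfolio time consistent iff the cocycle identity
\[-\beta_t^{\min}(\Q,w) = \cl\lrparen{-\beta_{t,s}^{\min}(\Q,w) + \EQt{-\beta_s^{\min}(\Q^s,w_t^s(\Q,w))}{t}}\]
holds for every $(\Q,w)\in\olW_t$ and every $0\le t<s\le T$. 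The key observation is that the right-hand side collapses to $G_t(w)\cap M_t$ exactly when both summands are as small as possible, namely $-\beta_{t,s}^{\min}(\Q,w) = G_t(w)\cap M_t$ and $-\beta_s^{\min}(\Q^s,w_t^s(\Q,w)) = G_s(w_t^s(\Q,w))\cap M_s$, because $\EQt{G_s(w_t^s(\Q,w))\cap M_s}{t} = G_t(w)\cap M_t$ (the conditional-expectation behaviour of the halfspaces $G_s$ under the modified measure $\Q^s$ and dual variable $w_t^s(\Q,w)$, which is the content of the appendix lemmas cited, e.g.\ corollary~\ref{prop_Gw}) and because the sum $G_t(w)\cap M_t + G_t(w)\cap M_t = G_t(w)\cap M_t$. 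One also needs the monotonicity facts $-\beta_{t,s}^{\min}(\Q,w)\supseteq G_t(w)\cap M_t$ and $-\beta_s^{\min}(\cdot)\supseteq G_s(\cdot)\cap M_s$ for coherent risk measures (which hold since $0\in A_t$, $0\in A_s$, $0\in A_{t,s}$ by positive homogeneity), so that neither summand can be strictly smaller than the halfspace, and the sum of two sets each containing $G_t(w)\cap M_t$ never equals $G_t(w)\cap M_t$ unless both equal it.

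Granting this, I would argue as follows. ($\Rightarrow$) Assume multiportfolio time consistency. Fix $0\le t<s\le T$ and let $(\Q,w)\in\olW_t^{\max}$, so $-\beta_t^{\min}(\Q,w) = G_t(w)\cap M_t$. By remark~\ref{rem_stepped_dual_max} already $(\Q,w)\in\olW_{t,s}$, and plugging into the cocycle identity the left-hand side is $G_t(w)\cap M_t$; by the collapse observation both $-\beta_{t,s}^{\min}(\Q,w) = G_t(w)\cap M_t$ and $-\beta_s^{\min}(\Q^s,w_t^s(\Q,w)) = G_s(w_t^s(\Q,w))\cap M_s$, i.e.\ $(\Q,w)\in\olW_{t,s}^{\max}$ and $(\Q^s,w_t^s(\Q,w))\in\olW_s^{\max}$, the latter meaning $(\Q,w)\in H_t^s(\olW_s^{\max})$. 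Hence $\olW_t^{\max}\subseteq\olW_{t,s}^{\max}\cap H_t^s(\olW_s^{\max})$; the reverse inclusion is immediate by running the cocycle identity backwards (if both stepped and future penalties vanish, so does $-\beta_t^{\min}$, giving $(\Q,w)\in\olW_t^{\max}$), using that $(\Q,w)\in\olW_{t,s}^{\max}\cap H_t^s(\olW_s^{\max})$ forces $(\Q,w)\in\olW_t$ (here one checks that $\Q = \P|_{\Ft{t}}$ is preserved and $w_t^T(\Q,w)\in\LdqF{+}$ follows from $w_t^s\in\plus{A_{t,s}}$ and the future condition). ($\Leftarrow$) Conversely, if $\olW_t^{\max} = \olW_{t,s}^{\max}\cap H_t^s(\olW_s^{\max})$ for all $t<s$, then for every $(\Q,w)\in\olW_t$ I verify the cocycle identity by a case split: either $(\Q,w)\in\olW_t^{\max}$, in which case by hypothesis both stepped and future variables are maximal and all three penalties vanish, making the identity read $G_t(w)\cap M_t = \cl(G_t(w)\cap M_t + G_t(w)\cap M_t) = G_t(w)\cap M_t$; or $(\Q,w)\notin\olW_t^{\max}$, in which case $-\beta_t^{\min}(\Q,w) = M_t$ (for coherent risk measures the minimal penalty takes only the two values $G_t(w)\cap M_t$ and $M_t$, as recorded in the proof of corollary~\ref{cor_conditional_dual}), and then $(\Q,w)$ fails to lie in $\olW_{t,s}^{\max}\cap H_t^s(\olW_s^{\max})$, so at least one of $-\beta_{t,s}^{\min}(\Q,w)$, $-\beta_s^{\min}(\Q^s,w_t^s(\Q,w))$ equals the full space, whence the closed sum on the right is also $M_t$; in both cases the identity holds. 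By corollary~\ref{thm_mptc_penalty}, $\seq{R}$ is multiportfolio time consistent.

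The main obstacle I anticipate is the ``collapse observation'' — precisely, showing that $\EQt{G_s(w_t^s(\Q,w))\cap M_s}{t} = G_t(w)\cap M_t$ and that $\cl(D_1+D_2) = G_t(w)\cap M_t$ with $D_i\supseteq G_t(w)\cap M_t$ forces $D_1 = D_2 = G_t(w)\cap M_t$. The first part is a measure-theoretic computation with the modified measures $\Q^s$ and the dual cocycle $w_t^s(\Q,w) = w_t^T(\Q,w)$, using $\inEt{\bar\xi_{t,s}(\Q_i)}{t} = 1$ and the multiplicativity $\xi_{t,T}(\Q) = \xi_{t,s}(\Q)\xi_{s,T}(\Q^s)$; the cited appendix lemmas (corollary~\ref{prop_Gw}, lemma~\ref{lemma_exp_penalty}) should supply exactly this. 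The second part is a soft convexity argument: $G_t(w)\cap M_t$ is a halfspace in $M_t$ and the sum of two larger translated halfspaces-or-everything can only shrink back to it if neither was larger — here the two-valued nature of coherent penalties ($G_t(w)\cap M_t$ or $M_t$) makes it trivial, since $M_t + (\text{anything containing }M_t) = M_t \ne G_t(w)\cap M_t$. Handling the edge cases $w\in\prp{M_t}$ (excluded from $\olW_t$) and the preservation of $\Q = \P|_{\Ft{t}}$ under the various modifications is routine but must be stated.
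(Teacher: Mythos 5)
Your proposal is correct and follows essentially the same route as the paper's proof: both rest on corollary~\ref{thm_mptc_penalty} (the cocycle identity), corollary~\ref{prop_Gw} for $\EQt{G_{s}(w_t^{s}(\Q,w)) \cap M_{s}}{t} = G_t(w) \cap M_t$, the two-valued nature of coherent minimal penalty functions (halfspace on the maximal set, all of $M_t$ otherwise), and the inclusion $\olW_{t,s} \supseteq \olW_t$. The paper states this in three lines and calls it trivial; you have simply filled in the case analysis, and your ``collapse observation'' is exactly the implicit step the paper relies on.
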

\begin{proof}
This follows trivially from corollary~\ref{thm_mptc_penalty} and corollary~\ref{prop_Gw} by noting that for any times $t$ and $s > t$
\begin{align*}
-\beta_t^{\min}(\Q,w) & = \begin{cases} G_t(w) \cap M_t & \text{if }(\Q,w) \in \olW_{t}^{\max} \\
M_t & \text{else}
\end{cases}\\
-\beta_{t,s}^{\min}(\Q,w) & = \begin{cases} G_t(w) \cap M_t & \text{if }(\Q,w) \in \olW_{t,s}^{\max} \\
M_t & \text{else}.
\end{cases}
\end{align*}
And since $\olW_{t,s} \supseteq \olW_t$ (see remark~\ref{rem_stepped_dual_var}) for any times $t < s$, the result follows.
\qed\end{proof}

We now want to study the pasting of dual variables and the generalization of stability to the set-valued case.

For $\Q,\R \in \mathcal{M}$ we denote by $\Q \oplus^{s} \R$ the pasting of $\Q$ and $\R$ at $s$, i.e. the vector probability measures $\S\in \mathcal{M}$ defined via
\[\dSdP = \diag{\xi_{0,s}(\Q)} \xi_{s,T}(\R).\]
Note that, if $\S = \Q \oplus^{s} \R$, $\bar{\xi}_{t,r}(\S_i) = \bar{\xi}_{t,r}(\Q_i)$ for  $t \leq r \leq s$, but $\bar{\xi}_{t,r}(\S_i)$
is not necessarily equal to $\bar{\xi}_{t,r}(\R_i)$ for $r \geq t > s$.  If $\Q = \P|_{\Ft{t}}$ for some $t \leq s$ (i.e.
$\bar{\xi}_{0,t}(\Q_i) = 1$ almost surely for every $i \in \{1,...,d\}$), then it follows that $w_t^r(\S,w) = w_{s}^r(\R,w_t^{s}(\Q,w))$ for $0 \leq t \leq s \leq r \leq T$ and any $w \in \LdqF{t}$.
In the set-valued framework we will define stability as a property with respect to two other sets. This is due to the fact that our dual variables consists of pairs. Naturally, stability is a property that imposes conditions on both components of a pair $(\Q,w)$.

\begin{definition}
\label{defn_stable}
A set $W_t \subseteq \olW_t$ is called \textbf{\emph{stable}} at time $t$ with respect to $W_{t,s}$ and $W_{s}$ for $s > t$ if
\begin{enumerate}
    \item $(\Q,w) \in W_t$ implies $(\Q^{s},w_t^{s}(\Q,w)) \in W_{s}$ and
    \item $(\Q,w) \in W_{t,s}$ and $\R \in \mathcal{M}$ such that $(\R,w_t^{s}(\Q,w)) \in W_{s}$ implies $(\Q \oplus^{s} \R,w) \in W_t$.
\end{enumerate}
\end{definition}

\begin{remark}
In the scalar framework, stability is defined with respect to stopping times, see e.g.~\cite{D06}.  We are able to weaken this
assumption in the set-valued framework due to the total ordering given by the half-space $G_t(w)$ generated by the second dual
variable, see lemma~\ref{lemma_exp_penalty} for more details.
\end{remark}

The main theorem of this section is given below.  It provides an equivalence between the stability of the sets of dual variables $\olW_t^{\max}$ and multiportfolio time consistency.  We present an additional property which is equivalent to stability and therefore to multiportfolio time consistency.  This additional property, given in equation~\eqref{eq_stable}, is a generalization of property (2) of corollary 1.26 from~\cite{AP10}.

\begin{theorem}
\label{thm_mptc_stable}
Let $\seq{R}$ be a normalized c.u.c. coherent risk measure, then the following are equivalent:
\begin{enumerate}
\item $\seq{R}$ is multiportfolio time consistent;
\item $\olW_t^{\max}$ is stable at time $t$ with respect to $\olW_{t,s}^{\max}$ and $\olW_{s}^{\max}$ for every $0 \leq t < s \leq T$;
\item for every time $0 \leq t < s \leq T$
\begin{equation}
\label{eq_stable}
\olW_t^{\max} = \lrcurly{(\Q \oplus^{s} \R,w): (\Q,w) \in \olW_{t,s}^{\max}, (\R,w_t^{s}(\Q,w)) \in \olW_{s}^{\max}}.
\end{equation}
\end{enumerate}
\end{theorem}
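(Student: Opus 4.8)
The plan is to prove the three-way equivalence by a cycle of implications, leveraging the penalty-function characterization of Corollary~\ref{thm_mptc_penalty} (valid because we assume c.u.c.) as the bridge between multiportfolio time consistency and properties of the dual variables. Recall the key observation, already noted in the proof of Corollary~\ref{cor_mptc_coherent}, that in the coherent case the minimal penalty functions are two-valued: $-\beta_t^{\min}(\Q,w) = G_t(w) \cap M_t$ when $(\Q,w) \in \olW_t^{\max}$ and $= M_t$ otherwise, and likewise $-\beta_{t,s}^{\min}(\Q,w) = G_t(w)\cap M_t$ on $\olW_{t,s}^{\max}$ and $=M_t$ elsewhere. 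Thus the cocycle condition of Corollary~\ref{thm_mptc_penalty} degenerates, term by term in $(\Q,w)\in\olW_t$, into a statement about membership in the maximal dual sets, with $G_t(w)\cap M_t$ being absorptive (it is the zero element of the relevant set addition) and $M_t$ being the top element. This is exactly what produces the set equation $\olW_t^{\max} = \olW_{t,s}^{\max} \cap H_t^s(\olW_s^{\max})$ of Corollary~\ref{cor_mptc_coherent}, so I would take that corollary as my starting point: $\seq{R}$ is multiportfolio time consistent iff $\olW_t^{\max} = \olW_{t,s}^{\max}\cap H_t^s(\olW_s^{\max})$ for all $0\le t<s\le T$.

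First I would show (1) $\Leftrightarrow$ (3). Unwinding the definition of $H_t^s$, the equation $\olW_t^{\max} = \olW_{t,s}^{\max}\cap H_t^s(\olW_s^{\max})$ says that $(\Q,w)\in\olW_t^{\max}$ iff $(\Q,w)\in\olW_{t,s}^{\max}$ and $(\Q^s, w_t^s(\Q,w))\in\olW_s^{\max}$. Meanwhile \eqref{eq_stable} describes $\olW_t^{\max}$ as the set of pastings $(\Q\oplus^s\R, w)$ with $(\Q,w)\in\olW_{t,s}^{\max}$ and $(\R, w_t^s(\Q,w))\in\olW_s^{\max}$. The link between these two descriptions is the pasting identity recorded just before Definition~\ref{defn_stable}: since every $(\Q,w)\in\olW_t$ has $\Q = \P|_{\Ft{t}}$, we have $w_t^r(\Q\oplus^s\R, w) = w_t^r(\Q,w)$ for $r\le s$ and $=w_s^r(\R, w_t^s(\Q,w))$ for $r\ge s$; and conversely any $\S\in\olW_t$ satisfies $\S = \S\oplus^s\S^s$ with $(\S, w)\in\olW_{t,s}$ and $(\S^s, w_t^s(\S,w))\in\olW_s$ (using $w_t^s(\S,w) = w_s^s$-direction consistency). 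One also needs that $\olW_{t,s}^{\max}$ and $\olW_s^{\max}$ only depend on the probability measure through $w_t^s$, so that $\Q\oplus^s\R$ lands in $\olW_t$ and the two membership conditions transfer correctly. I expect this to be essentially a bookkeeping argument: decompose/recompose $\S$ via pasting, and check that the defining inequalities of $\plus{A_t}$, $\plus{A_{t,s}}$, $\plus{A_s}$ translate along $w_t^s$ and $w_s^T$ using the tower property, exactly as in the cocycle computation.

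Next I would show (2) $\Leftrightarrow$ (3). Property (2), stability of $\olW_t^{\max}$ with respect to $\olW_{t,s}^{\max}$ and $\olW_s^{\max}$, is by Definition~\ref{defn_stable} the conjunction of: (i) $(\Q,w)\in\olW_t^{\max} \Rightarrow (\Q^s, w_t^s(\Q,w))\in\olW_s^{\max}$; and (ii) $(\Q,w)\in\olW_{t,s}^{\max}$ and $(\R, w_t^s(\Q,w))\in\olW_s^{\max}$ $\Rightarrow (\Q\oplus^s\R, w)\in\olW_t^{\max}$. Given (3), the inclusion "$\subseteq$" in \eqref{eq_stable} combined with $\olW_t^{\max}\subseteq\olW_{t,s}^{\max}$ (Remark~\ref{rem_stepped_dual_max}) and the pasting identity gives (i), while "$\supseteq$" gives (ii). Conversely, (ii) gives "$\supseteq$" of \eqref{eq_stable} directly, and for "$\subseteq$" I take $(\Q,w)\in\olW_t^{\max}$, use (i) to get $(\Q^s, w_t^s(\Q,w))\in\olW_s^{\max}$, note $(\Q,w)\in\olW_{t,s}^{\max}$ by Remark~\ref{rem_stepped_dual_max}, and observe $\Q = \Q\oplus^s\Q^s$ (since $\xi_{0,s}(\Q)\,\xi_{s,T}(\Q^s) = \xi_{0,s}(\Q)\,\xi_{s,T}(\Q) = \xi_{0,T}(\Q) = \dQdP$), so $(\Q,w)$ itself exhibits the pasting form. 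The main obstacle, and the place where I would be most careful, is the verification that $\olW_{t,s}^{\max}$ and $\olW_s^{\max}$ are genuinely determined by $w_t^s(\Q,w)$ and $w_s^T$ respectively — i.e. that the conditions $w_t^s(\Q,w)\in\plus{A_{t,s}}$ and $w_s^T(\R,v)\in\plus{A_s}$ do not secretly depend on the "earlier" part of the measure — because this is what lets the pasting operation interact cleanly with maximality; this is precisely the point for which Theorem~\ref{thm_probability_equal}'s reduction to measures agreeing with $\P$ up to time $t$ (and its stepped version in the appendix) was engineered, as the remark before the theorem signals. Once that compatibility is in hand, all three implications reduce to the pasting identities plus the tower property, and I would close the cycle (1)$\Rightarrow$(3)$\Rightarrow$(2)$\Rightarrow$(1) — or simply present (1)$\Leftrightarrow$(3) and (2)$\Leftrightarrow$(3) as above.
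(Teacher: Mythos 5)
Your proposal is correct, but it organizes the argument differently from the paper. The paper proves the cycle $1\Rightarrow 2\Rightarrow 3\Rightarrow 1$, and the two substantial legs ($1\Rightarrow 2$ and $3\Rightarrow 1$) each re-derive the cocycle identity of corollary~\ref{thm_mptc_penalty} for a pasted measure $\S=\Q\oplus^{s}\R$, via the computations $-\beta_{t,s}^{\min}(\S,w)=-\beta_{t,s}^{\min}(\Q,w)$ and $-\beta_{s}^{\min}(\S^{s},w_t^{s}(\S,w))=-\beta_{s}^{\min}(\R,w_t^{s}(\Q,w))$ together with corollary~\ref{prop_Gw}. You instead use corollary~\ref{cor_mptc_coherent} as the single interface to multiportfolio time consistency and reduce the rest to the \emph{unconditional} set identity $\olW_{t,s}^{\max}\cap H_t^{s}(\olW_{s}^{\max})=\inlrcurly{(\Q\oplus^{s}\R,w):(\Q,w)\in\olW_{t,s}^{\max},\,(\R,w_t^{s}(\Q,w))\in\olW_{s}^{\max}}$, which holds for any closed coherent $R_t$ irrespective of time consistency: one inclusion from $\S=\S\oplus^{s}\S^{s}$, the other from the fact that membership in $\olW_{t,s}^{\max}$ and $\olW_{s}^{\max}$ depends on the measure only through $\xi_{t,s}$ and $\xi_{s,T}$ respectively (here the convention $\bar{\xi}\equiv 1$ on the null set of the conditional density is harmless because the corresponding component of $w_t^{s}(\Q,w)$ vanishes there); $(2)\Leftrightarrow(3)$ is then bookkeeping with remark~\ref{rem_stepped_dual_max} and $\Q=\Q\oplus^{s}\Q^{s}$. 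The underlying computations are the same ones the paper performs, but your factoring isolates the pasting identity as a pure statement about dual variables and concentrates all the time-consistency content in corollary~\ref{cor_mptc_coherent}, whereas the paper repeats the penalty-function verification inside each implication. The one step you should still write out explicitly is that the pasted pair lies in $\olW_t$ at all (in particular $\Q\oplus^{s}\R=\P|_{\Ft{t}}$ and $w_t^{T}(\Q\oplus^{s}\R,w)=w_{s}^{T}(\R,w_t^{s}(\Q,w))\in\LdqF{+}$); this is verified exactly as in step 1.$\Rightarrow$2. of the paper's proof.
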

\begin{proof}
We will show that multiportfolio time consistency implies stability, stability implies equation~\eqref{eq_stable}, and finally, that equation~\eqref{eq_stable} implies multiportfolio time consistency.

\begin{proofenum}
\item[1.$\Rightarrow$2.] Assume $\seq{R}$ is multiportfolio time consistent.  We want to show that $\olW_t^{\max}$ is stable at time $t$ with respect to $\olW_{t,s}^{\max}$ and $\olW_{s}^{\max}$, as given in definition~\ref{defn_stable}.
    \begin{proofenum}
    \item By corollary~\ref{cor_mptc_coherent} it follows that $\olW_t^{\max} \subseteq H_{t}^{s}(\olW_{s}^{\max})$ and thus $(\Q,w) \in \olW_t^{\max}$ implies $(\Q^{s},w_t^{s}(\Q,w)) \in \olW_{s}^{\max}$.
    \item
        Let $t<s$, $(\Q,w) \in \olW_{t,s}^{\max}$ and $\R \in \mathcal{M}$  with $(\R,w_t^{s}(\Q,w)) \in \olW_{s}^{\max}$. We need to show that $(\S,w) \in \olW_t^{\max}$ where $\S=\Q \oplus^{s} \R$ is the pasting of $\Q$ and $\R$ at time $s$.
        \begin{proofenum}
        \item For any index $i = 1,...,d$, $\S_i = \P|_{\Ft{t}}$ since $\inEt{\dSidP}{t} = \inEt{\dQidP}{t} = 1$ almost surely.
        \item $(\S,w) \in \olW_t$ since $\S \in \mathcal{M}$, $w \in \plus{M_{t,+}} \backslash \prp{M_t}$, and
            \[w_t^T(\S,w) = w_{s}^T(\R,w_t^{s}(\Q,w)) \in \LdqF{+}.\]
        \item $(\S,w) \in \olW_t^{\max}$ if $-\beta_t^{\min}(\S,w) = G_t(w) \cap M_t$. By corollary~\ref{thm_mptc_penalty} it follows that for any $(\S,w) \in \olW_t$
        \[-\beta_t^{\min}(\S,w) = \cl\lrparen{-\beta_{t,s}^{\min}(\S,w) + \ESt{-\beta_{s}^{\min}(\S^{s},w_t^{s}(\S,w))}{t}}.\]
       We can see that $-\beta_{t,s}^{\min}(\S,w) = -\beta_{t,s}^{\min}(\Q,w)$ by the tower
property.  One can show that $-\beta_{s}^{\min}(\S^{s},w_t^{s}(\S,w)) = -\beta_{s}^{\min}(\R,w_t^{s}(\Q,w))$ by
            \begin{align*}
            -\beta_{s}^{\min}(\S^{s},w_t^{s}(\S,w))& = \cl \bigcup_{X \in A_{s}} \lrparen{\ESt{X}{s} + G_{s}(w_t^{s}(\S,w))} \cap M_{s}\\
            & = \cl \bigcup_{X \in A_{s}} \lrcurly{x \in M_{s}: \E{\trans{w_t^{s}(\S,w)}\ESt{X}{s}} \leq \E{\trans{w_t^{s}(\S,w)}x}}\\
            & = \cl \bigcup_{X \in A_{s}} \lrcurly{x \in M_{s}: \E{\trans{w_t^{T}(\S,w)}X} \leq \E{\trans{w_t^{s}(\Q,w)}x}}\\
            & = \cl \bigcup_{X \in A_{s}} \lrcurly{x \in M_{s}: \E{\trans{w_{s}^T(\R,w_t^{s}(\Q,w))}X} \leq \E{\trans{w_t^{s}(\Q,w)}x}}\\
            & = \cl \bigcup_{X \in A_{s}} \lrcurly{x \in M_{s}: \E{\trans{w_t^{s}(\Q,w)}\ERt{X}{s}} \leq \E{\trans{w_t^{s}(\Q,w)}x}}\\
            & = -\beta_{s}^{\min}(\R,w_t^{s}(\Q,w)).
            \end{align*}
            Therefore,
            \begin{align*}
            -\beta_t^{\min}(\S,w) & = \cl\lrparen{-\beta_{t,s}^{\min}(\S,w) + \ESt{-\beta_{s}^{\min}(\S^{s},w_t^{s}(\S,w))}{t}}\\
            & = \cl\lrparen{-\beta_{t,s}^{\min}(\Q,w) + \EQt{-\beta_{s}^{\min}(\R,w_t^{s}(\Q,w))}{t}}\\
            & = \cl\lrparen{G_t(w) \cap M_t + \EQt{G_{s}(w_t^{s}(\Q,w)) \cap M_{s}}{t}}
            = G_t(w) \cap M_t,
            \end{align*}
             using $(\Q,w) \in \olW_{t,s}^{\max}$ and $(\R,w_t^{s}(\Q,w)) \in \olW_{s}^{\max}$. The last line follows from corollary~\ref{prop_Gw}.
             \end{proofenum}
    \end{proofenum}
    Therefore for every time $t$, $\olW_t^{\max}$ is stable at time $t$ with respect to $\olW_{t,s}^{\max}$ and $\olW_{s}^{\max}$ for every $s > t$.

\item[2.$\Rightarrow$3.] We will demonstrate that stability implies equation~\eqref{eq_stable}. If for every time $t$, $\olW_t^{\max}$ is stable at time $t$ with respect to $\olW_{t,s}^{\max}$ and $\olW_{s}^{\max}$ for every $s > t$ then trivially ``$\supseteq$" in equation~\eqref{eq_stable} follows by the second property of stability.  By the first property of stability and remark~\ref{rem_stepped_dual_max}, for any $(\Q,w) \in \olW_t^{\max}$ it follows that $(\Q,w) \in \olW_{t,s}^{\max}$ and $(\Q^{s},w_t^{s}(\Q,w)) \in \olW_{s}^{\max}$.  Since $\Q = \Q \oplus^{s} \Q^{s}$ for any time $s$ and any probability measure $\Q \in \mathcal{M}$, then ``$\subseteq$" in equation~\eqref{eq_stable} trivially follows.

\item[3.$\Rightarrow$1.] We will prove that equation~\eqref{eq_stable} implies that for every $(\Q,w) \in
\olW_t$
\[-\beta_t^{\min}(\Q,w) = \cl \lrparen{-\beta_{t,s}^{\min}(\Q,w) +
\EQt{-\beta_{s}^{\min}(\Q^{s},w_t^{s}(\Q,w))}{t}}\]
which in turn implies multiportfolio time consistency by corollary~\ref{thm_mptc_penalty}.  We will define the set
$\widetilde{\W}_t^{s} := \inlrcurly{(\Q \oplus^{s} \R,w): (\Q,w) \in \olW_{t,s}^{\max}, (\R,w_t^{s}(\Q,w)) \in \olW_{s}^{\max}}$ for notational purposes.
    \begin{proofenum}
    \item We will show that the inclusion $\olW_t^{\max} \subseteq \widetilde{\W}_t^{s}$ implies the penalty function inclusion
$-\beta_t^{\min}(\Q,w) \supseteq \cl\inlrparen{-\beta_{t,s}^{\min}(\Q,w) + \inEQt{-\beta_{s}^{\min}(\Q^{s},w_t^{s}(\Q,w))}{t}}$ for every $(\Q,w) \in \olW_t$.
        \begin{proofenum}
        \item Let $(\S,w) \in \olW_t^{\max}$. Then, $-\beta_t^{\min}(\S,w) = G_t(w) \cap M_t$.  Additionally, there exists a $\Q,\R \in \mathcal{M}$ such that $(\Q,w) \in \olW_{t,s}^{\max}$, $(\R,w_t^{s}(\Q,w)) \in \olW_{s}^{\max}$, and $\S = \Q \oplus^{s} \R$.
This implies \[-\beta_{t,s}^{\min}(\S,w) = -\beta_{t,s}^{\min}(\Q,w) = G_t(w) \cap M_t\] and
\begin{align*}
-\beta_{s}^{\min}(\S^{s},w_t^{s}(\S,w)) &= -\beta_{s}^{\min}(\R,w_t^{s}(\Q,w))= G_{s}(w_t^{s}(\Q,w)) \cap M_{s} = G_{s}(w_t^{s}(\S,w)) \cap M_{s}.
\end{align*}
Therefore, corollary~\ref{prop_Gw} yields
\begin{align*}
-\beta_t^{\min}(\S,w) &= G_t(w) \cap M_t= \cl\lrparen{G_t(w) \cap M_t + \ESt{G_{s}(w_t^{s}(\S,w)) \cap M_{s}}{t}}\\
&= \cl\lrparen{-\beta_{t,s}^{\min}(\S,w) + \ESt{-\beta_{s}^{\min}(\S^{s},w_t^{s}(\S,w))}{t}}.
 \end{align*}
        \item Let $(\S,w) \in \olW_t \backslash \olW_t^{\max}$, then $-\beta_t^{\min}(\S,w) = M_t$, and thus \[-\beta_t^{\min}(\S,w) \supseteq \cl\lrparen{-\beta_{t,s}^{\min}(\S,w) + \ESt{-\beta_{s}^{\min}(\S^{s},w_t^{s}(\S,w))}{t}}.\]
        \end{proofenum}
    \item We will show that the inclusion $\olW_t^{\max} \supseteq \widetilde{\W}_t^{s}$ implies the penalty function inclusion
$-\beta_t^{\min}(\Q,w) \subseteq \cl\inlrparen{-\beta_{t,s}^{\min}(\Q,w) + \inEQt{-\beta_{s}^{\min}(\Q^{s},w_t^{s}(\Q,w))}{t}}$ for every $(\Q,w) \in \olW_t$.
        \begin{proofenum}
        \item Let $(\S,w) \in \widetilde{\W}_t^{s}$. By the assumption, we have $-\beta_t^{\min}(\S,w) = G_t(w) \cap M_t$ and $(\S,w) \in \olW_t^{\max}$.  Additionally there exists $\Q,\R \in \mathcal{M}$
such that $(\Q,w) \in \olW_{t,s}^{\max}$, $(\R,w_t^{s}(\Q,w)) \in \olW_{s}^{\max}$, and
$\S = \Q \oplus^{s} \R$.  This implies the penalty functions are half-spaces, i.e. $-\beta_{t,s}^{\min}(\S,w) = G_t(w) \cap M_t$
and $-\beta_{s}^{\min}(\S^{s},w_t^{s}(\S,w)) = G_{s}(w_t^{s}(\S,w)) \cap
M_{s}$. Therefore we have the equality \[-\beta_t^{\min}(\S,w) = \cl\lrparen{-\beta_{t,s}^{\min}(\S,w) + \ESt{-\beta_{s}^{\min}(\S^{s},w_t^{s}(\S,w))}{t}}.\]
        \item Let $(\S,w) \in \olW_t \backslash \widetilde{\W}_t^{s}$, i.e. for all probability measures $\Q,\R \in \mathcal{M}$ such that $\S = \Q
\oplus^{s} \R = \Q \oplus^{s} \R^{s}$ either $(\Q,w) \not\in \olW_{t,s}^{\max}$ or $(\R^{s},w_t^{s}(\Q,w)) \not\in \olW_{s}^{\max}$.
This implies for any $\Q,\R \in \mathcal{M}$ where $\S = \Q \oplus^{s} \R$ either
\[-\beta_{t,s}^{\min}(\S,w) = -\beta_{t,s}^{\min}(\Q,w) = M_t\] or \[-\beta_{s}^{\min}(\S^{s},w_t^{s}(\S,w)) =
-\beta_{s}^{\min}(\R^{s},w_t^{s}(\Q,w)) = M_{s}.\]
Therefore we have that $\cl\inlrparen{-\beta_{t,s}^{\min}(\S,w) + \inESt{-\beta_{s}^{\min}(\S^{s},w_t^{s}(\S,w))}{t}} = M_t$,
and thus
\[-\beta_t^{\min}(\S,w) \subseteq \cl\inlrparen{-\beta_{t,s}^{\min}(\S,w) +
\inESt{-\beta_{s}^{\min}(\S^{s},w_t^{s}(\S,w))}{t}}.\]
        \end{proofenum}
    \end{proofenum}
\end{proofenum}
\qed\end{proof}

The above theorem provides two equivalent representations for multiportfolio time consistency for coherent risk measures.  This generalizes the stability property for scalar risk measures, which is a well known result.  Conceptually, stability means that pasting together dual variables
creates another possible dual variable, which logically corresponds with time consistency concepts.

We conclude this section by providing as an example the set of superhedging portfolios, for which the set of dual variables satisfies stability.
\begin{example}[Superhedging price]
\label{ex_shp}
Consider the discrete time setting with $M_t = \LdpF{t}$
for all times $t \in \{0,1,...,T\}$.  Consider a market with proportional transaction costs as in \cite{K99,S04,KS09}, which is modeled by a sequence of solvency cones $\seq{K}$. $K_t$ is a solvency cone at time $t$ if it is an $\Ft{t}$-measurable cone such that for every $\omega \in \Omega$, $K_t[\omega]$ is a closed convex cone with $\mathbb{R}_+^d \subseteq K_t[\omega] \subsetneq \mathbb{R}^d$.

Modifying the representation from ~\cite{FR12} by using corollary~\ref{cor_conditional_dual} shows that the set of superhedging portfolios has the following dual representation (under the robust no arbitrage condition, see \cite{FR12})
\begin{equation*}
SHP_t(X) = \bigcap_{(\Q,w) \in \olW_{\lrcurly{t,...,T}}} \lrparen{\EQt{X}{t} + \Gamma_t(w)},
\end{equation*}
where $t,s \in\{0,1,...,T\}$ with $t < s$ and
\begin{align*}
\olW_{\lrcurly{t,...,s}} & := \Big\{(\Q,w) \in \olW_{t,s}: w_t^{r}(\Q,w) \in \LdpK{q}{r}{\plus{K_{r}}} \; \forall r \in \lrcurly{t,...,s}, w_t^{s}(\Q,w) \in \plusp{\LdpF{s} \cap \sum_{r = s + 1}^T \LdpK{p}{r}{K_{r}}}\Big\}.
\end{align*}
It was shown in~\cite{FR12} that the set-valued function given by $R_t(X) := SHP_t(-X)$ defines a dynamic risk measure
which is normalized, closed, conditionally coherent, and multiportfolio time consistent.  Its acceptance set and stepped acceptance set are given by
$A_t= \sum_{s = t}^{T} \LdpK{p}{s}{K_s}$  and $A_{t,s} = A_t \cap \LdpF{s}$, respectively.
Closure and multiportfolio time consistency imply that
the sum $A_{t,s} + A_{s}$ is closed, and therefore convex upper continuity is not needed in theorem~\ref{thm_mptc_stable} (see also remark~\ref{rem_cuc}).
Then, stability of the dual set $\olW_{\inlrcurly{t,...,T}}$ follows from theorem~\ref{thm_mptc_stable}.
Alternatively, one can directly prove that for any time $t$ and any $s > t$,
\[\olW_{\lrcurly{t,...,T}} = \lrcurly{(\Q \oplus^{s} \R,w): (\Q,w) \in \olW_{\lrcurly{t,...,s}}, (\R,w_t^{s}(\Q,w)) \in \olW_{\lrcurly{s,...,T}}}\]
holds, which is by theorem~\ref{thm_mptc_stable} equivalent to stability.
\end{example}

%%%%%%%%%%%%%%%%%%%%%%%%%%%%%%%%
\section{Composition of one-step risk measures}
\label{sec_composition}

For this section we will restrict ourselves to the discrete time setting $t \in \{0,1,...,T\}$.
As in section~2.1 in \cite{CK10} and section~4 in \cite{CS09}, a (multiportfolio) time consistent version of any scalar dynamic risk measure
can be created through backwards recursion.  In the following we recall the corresponding results from proposition~3.11 and corollary~3.14
in~\cite{FR12} in the set-valued framework.  Then, in corollary~\ref{cor_composed}, we prove an equivalent formulation for c.u.c. convex and
coherent risk measures, which will be very useful to deduce dual representations of composed (and thus multiportfolio time consistent) risks
measures.

\begin{proposition}
\label{prop_gen_tc}
Let $\seq{R}$ be a dynamic risk measure on $\LdpF{}$,
then $\seq{\tilde{R}}$ defined for all $X \in \LdpF{}$ by
\begin{align}
\label{eqn_composed_final} \tilde{R}_{T}(X) & = R_{T}(X),\\
\label{eqn_composed} \forall t \in \lrcurly{0,1,...,T-1}: \; \tilde{R}_t(X) & = \bigcup_{Z \in \tilde{R}_{t+1}(X)} R_t(-Z)
\end{align}
is multiportfolio time consistent. Furthermore, $\seq{\tilde{R}}$ is $M_t$-translative and satisfies monotonicity, but may fail to be finite
at zero.  Additionally, if $\seq{R}$ is (conditionally) convex ((conditionally) coherent, convex and c.u.c.) then $(\tilde{R}_t)_{t=0}^T$ is  (conditionally) convex ((conditionally) coherent, convex and c.u.c.).
\end{proposition}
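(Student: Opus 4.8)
The plan is to verify each asserted property of $\seq{\tilde{R}}$ in turn, using the backward recursion \eqref{eqn_composed} and the fact that all the relevant properties ($M_t$-translativity, monotonicity, (conditional) convexity, (conditional) positive homogeneity, and c.u.c.) are closed under the operations appearing in the recursion, namely composition with a fixed conditional risk measure $R_t$ and the union $\bigcup_{Z \in \cdot} R_t(-Z)$. Multiportfolio time consistency of $\seq{\tilde{R}}$ is already known: by construction $\tilde{R}_t(X) = \bigcup_{Z \in \tilde{R}_{t+1}(X)} \tilde{R}_t(-Z)$ for the one-step recursion, which is precisely the recursive form \eqref{recursive} in the discrete-time case $s = t+1$, and by theorem~\ref{thm_equiv_tc} (together with the discrete-time remark following it) this is equivalent to multiportfolio time consistency once $\seq{\tilde{R}}$ is shown to be normalized — but in fact one argues directly from the recursion that iterating single steps gives the general-step recursion, so we may simply cite proposition~3.11/corollary~3.14 of \cite{FR12} for this part.

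For the remaining structural properties I would proceed by backward induction on $t$, the base case $t = T$ being trivial since $\tilde{R}_T = R_T$. For $M_t$-translativity: given $m_t \in M_t \subseteq M_{t+1}$, use $M_{t+1}$-translativity of $\tilde{R}_{t+1}$ (induction hypothesis) to write $\tilde{R}_{t+1}(X + m_t) = \tilde{R}_{t+1}(X) - m_t$, then $\tilde{R}_t(X+m_t) = \bigcup_{Z \in \tilde{R}_{t+1}(X) - m_t} R_t(-Z) = \bigcup_{Z' \in \tilde{R}_{t+1}(X)} R_t(-Z' + m_t) = \bigcup_{Z'} \lrparen{R_t(-Z') - m_t} = \tilde{R}_t(X) - m_t$, using $M_t$-translativity of $R_t$. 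For monotonicity: if $Y \succeq X$ then $\tilde{R}_{t+1}(Y) \supseteq \tilde{R}_{t+1}(X)$ by induction, so the union defining $\tilde{R}_t(Y)$ is over a larger index set and hence $\tilde{R}_t(Y) \supseteq \tilde{R}_t(X)$. For (conditional) convexity: given $X, Y$ and $\lambda$ (a constant, resp. $\lambda \in \LiF{t}$ with $0 \le \lambda \le 1$), pick $Z_X \in \tilde{R}_{t+1}(X)$, $Z_Y \in \tilde{R}_{t+1}(Y)$; by the induction hypothesis $\lambda Z_X + (1-\lambda) Z_Y \in \lambda \tilde{R}_{t+1}(X) + (1-\lambda)\tilde{R}_{t+1}(Y) \subseteq \tilde{R}_{t+1}(\lambda X + (1-\lambda)Y)$, and then $\lambda R_t(-Z_X) + (1-\lambda) R_t(-Z_Y) \subseteq R_t(-\lambda Z_X - (1-\lambda) Z_Y) \subseteq \tilde{R}_t(\lambda X + (1-\lambda)Y)$ by convexity of $R_t$ and the definition of $\tilde{R}_t$; taking the union over $Z_X, Z_Y$ on the left gives $\lambda \tilde{R}_t(X) + (1-\lambda)\tilde{R}_t(Y) \subseteq \tilde{R}_t(\lambda X + (1-\lambda)Y)$. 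Here one uses that for the conditional statement $\lambda \in \LiF{t} \subseteq \LiF{t+1}$, so the induction hypothesis at level $t+1$ applies; this is the only place the nesting $M_t \subseteq M_{t+1}$, $\LiF{t} \subseteq \LiF{t+1}$ matters. Positive homogeneity is analogous and simpler, and coherence is the conjunction of the two.

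The main obstacle is the c.u.c. claim in the convex case, since c.u.c. is not as transparently preserved under the union operation as the algebraic properties are. I would handle it via the acceptance-set / dual reformulation: c.u.c. of $R_t$ means $R_t^{-1}(D)$ is closed for $D \in \mathcal{G}(M_t;M_{t,-})$, equivalently (as noted after definition~\ref{defn_conditional}) the graph is closed and moreover the acceptance set $A_t^{R_t}$ has the right closedness, and one shows the acceptance set of the composed measure is $\tilde{A}_t = A_{t,t+1}^{R_t} + \tilde{A}_{t+1} = \lrparen{A_t^{R_t} \cap M_{t+1}} + \tilde{A}_{t+1}$; closedness of this sum (the key point) should follow from the appendix lemmas on closedness of sums of stepped acceptance sets (lemma~\ref{lemma_closed_sum_acceptance} and remark~\ref{rem_stepped_usc}) together with the c.u.c. hypothesis on $R_t$ and the induction hypothesis that $\tilde{R}_{t+1}$ is c.u.c., and then c.u.c. of $\tilde{R}_t$ is recovered from closedness (and convexity) of $\tilde{A}_t$. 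I expect this last step to require invoking exactly the appendix machinery that was already used to prove corollary~\ref{thm_mptc_penalty} from theorem~\ref{thm_mptc_penalty_wo_uc}, so it should go through without essentially new ideas; since the statement says "may fail to be finite at zero," no effort is spent trying to establish that axiom, and normalization is not claimed either, which simplifies the bookkeeping.
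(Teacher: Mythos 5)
Your treatment of multiportfolio time consistency, translativity, monotonicity, and (conditional) convexity/coherence matches the paper, which simply cites proposition~3.11 of \cite{FR12} for all of these; your explicit backward inductions are correct (modulo the slip that the recursion defining $\tilde{R}_t$ uses $R_t(-Z)$, not $\tilde{R}_t(-Z)$, so it is not literally the recursive form \eqref{recursive} until one argues a bit more --- but you defer to \cite{FR12} for that anyway, as does the paper). The only part the paper actually proves is convex upper continuity, and that is where your proposal has a genuine gap.

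Your plan for c.u.c.\ is to show that the acceptance set $\tilde{A}_t = A_{t,t+1} + \tilde{A}_{t+1}$ is closed (via lemma~\ref{lemma_closed_sum_acceptance} and remark~\ref{rem_stepped_usc}) and then to ``recover c.u.c.\ of $\tilde{R}_t$ from closedness (and convexity) of $\tilde{A}_t$.'' That last inference is false: closedness of the acceptance set is equivalent to closedness of $\tilde{R}_t^{-1}(M_{t,-})$, i.e.\ to closedness of the graph of $\tilde{R}_t$, whereas c.u.c.\ requires $\tilde{R}_t^{-1}(D)$ to be closed for \emph{every} closed $D \in \mathcal{G}(M_t;M_{t,-})$. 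The paper is explicit that c.u.c.\ is strictly stronger than closedness --- this is precisely why theorem~\ref{thm_mptc_penalty_wo_uc} (closed case) needs the extra hypothesis that $A_{t,s}+A_s$ is closed while corollary~\ref{thm_mptc_penalty} (c.u.c.\ case) does not. Moreover, your backward induction then stalls: at the next step you would need $\tilde{R}_t$ to be c.u.c.\ (not merely closed) in order to invoke lemma~\ref{lemma_closed_sum_acceptance} again. The paper's argument avoids this by proving a general composition result, proposition~\ref{prop_comp_usc}: if $F$ and $G$ are c.u.c.\ and $G$ is convex and monotone, then $H(x) = \bigcup_{y \in F(x)} G(y)$ satisfies $H^{-1}(D) = F^{-1}(G^{-1}(D))$ with $G^{-1}(D) \in \mathcal{G}(Y;-C_Y)$ closed, hence $H$ is c.u.c. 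Applying this with $F = \tilde{R}_{t+1}$ and $G = R_t(-\cdot)$ gives the inductive step for arbitrary $D$, which is exactly what your acceptance-set route (the case $D = M_{t,-}$ only) cannot deliver. Note also that lemma~\ref{lemma_closed_sum_acceptance} is itself proved in the paper as a consequence of proposition~\ref{prop_comp_usc}, so you would in any case need the composition lemma as the primitive ingredient.
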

\begin{proof}
All but the convex upper continuity property was proven in proposition~3.11 in~\cite{FR12}.

By $\tilde{R}_T = R_T$, then convex upper continuity trivially holds at time $T$. Using backwards induction we will assume $\tilde{R}_{t+1}$ is c.u.c., then $\tilde{R}_t$ is the composition of convex and c.u.c. set-valued functions.  Thus, by proposition~\ref{prop_comp_usc}, $\tilde{R}_t$ is c.u.c.
\qed\end{proof}

\begin{corollary}[Corollary~3.14 in~\cite{FR12}]
\label{cor_composed_1}
Let $\seq{R}$ be a dynamic risk measure on $\LdpF{}$ with acceptance sets $\seq{A}$. Then, the following are equivalent:
\begin{enumerate}
\item \label{cor_composed_risk} $\seq{\tilde{R}}$ is defined as in equations~\eqref{eqn_composed_final} and~\eqref{eqn_composed};
\item \label{cor_composed_acceptance} $\seq{\tilde{A}}$ is defined by
    \begin{align*}
    \tilde{A}_T &= A_T,\\
    \forall t \in \lrcurly{0,1,...,T-1}: \; \tilde{A}_t &= A_{t,t+1} + \tilde{A}_{t+1},
    \end{align*}
    where $\seq{\tilde{A}}$ denotes the acceptance set of $\seq{\tilde{R}}$.
\end{enumerate}
\end{corollary}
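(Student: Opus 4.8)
The plan is to prove the equivalence through the one-to-one correspondence between conditional risk measures and acceptance sets recalled in Section~\ref{sec_prelim} (remark~2 and proposition~2.11 in~\cite{FR12}), used here at the level of $M_t$-translative set-valued maps and shift-invariant subsets of $\LdpF{}$; this is legitimate because, by proposition~\ref{prop_gen_tc}, each $\tilde{R}_t$ is $M_t$-translative and monotone, even though it may fail to be finite at zero. The whole argument then reduces to a single assertion, to be iterated by backwards induction on $t\in\{0,1,\dots,T-1\}$: if $\tilde{A}_{t+1}$ denotes the acceptance set of $\tilde{R}_{t+1}$, then the acceptance set of the map $X\mapsto\bigcup_{Z\in\tilde{R}_{t+1}(X)}R_t(-Z)$ is exactly $A_{t,t+1}+\tilde{A}_{t+1}$.

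To establish this assertion I would just unwind the definitions. For $X\in\LdpF{}$, the portfolio $X$ lies in the acceptance set of $\tilde{R}_t$ iff $0\in\tilde{R}_t(X)=\bigcup_{Z\in\tilde{R}_{t+1}(X)}R_t(-Z)$, i.e.\ iff there is a $Z\in\tilde{R}_{t+1}(X)$ with $0\in R_t(-Z)$, equivalently $-Z\in A_t$. Since $\tilde{R}_{t+1}$ maps into $\mathcal{P}(M_{t+1};M_{t+1,+})$, such a $Z$ automatically satisfies $Z\in M_{t+1}$, hence $-Z\in A_t\cap M_{t+1}=A_{t,t+1}$; and $Z\in\tilde{R}_{t+1}(X)$ is equivalent, via $M_{t+1}$-translativity, to $X+Z\in\tilde{A}_{t+1}$. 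The decomposition $X=(-Z)+(X+Z)$ then exhibits $X\in A_{t,t+1}+\tilde{A}_{t+1}$. Conversely, given $X=Y_1+Y_2$ with $Y_1\in A_{t,t+1}$ and $Y_2\in\tilde{A}_{t+1}$, I would put $Z:=-Y_1\in M_{t+1}$; then $X+Z=Y_2\in\tilde{A}_{t+1}$ gives $Z\in\tilde{R}_{t+1}(X)$, and $-Z=Y_1\in A_{t,t+1}\subseteq A_t$ gives $0\in R_t(-Z)$, so $0\in\tilde{R}_t(X)$.

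With this step proved, both implications are immediate. If $\seq{\tilde{R}}$ is given by \eqref{eqn_composed_final}--\eqref{eqn_composed}, then starting from the fact that $\tilde{A}_T=A_T$ is the acceptance set of $R_T=\tilde{R}_T$ and running the step downward shows that the acceptance set $\tilde{A}_t$ of $\tilde{R}_t$ satisfies $\tilde{A}_t=A_{t,t+1}+\tilde{A}_{t+1}$, which is exactly the recursion in~\eqref{cor_composed_acceptance}. Conversely, if the acceptance set $\tilde{A}_t$ of $\tilde{R}_t$ obeys that recursion, then recovering $\tilde{R}_t$ from $\tilde{A}_t$ by $\tilde{R}_t(X)=\{u\in M_t:X+u\in\tilde{A}_t\}$ and substituting $\tilde{A}_t=A_{t,t+1}+\tilde{A}_{t+1}$ reproduces $\bigcup_{Z\in\tilde{R}_{t+1}(X)}R_t(-Z)$ by reading the computation above backwards (taking $Z:=u-a$ for a decomposition $X+u=a+b$ with $a\in A_{t,t+1}$, $b\in\tilde{A}_{t+1}$). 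I expect no genuine obstacle; the only points that need care are the bookkeeping of the subspace inclusions $M_t\subseteq M_{t+1}$ and $A_{t,t+1}=A_t\cap M_{t+1}\subseteq M_{t+1}$ (so that $u-Z\in M_{t+1}$ whenever $u\in M_t$ and $Z\in M_{t+1}$), together with the observation that the risk-measure/acceptance-set correspondence remains a bijection even when finiteness at zero is lost, so that the implication $\eqref{cor_composed_acceptance}\Rightarrow\eqref{cor_composed_risk}$ is not impeded by $\tilde{R}_t$ possibly not being a risk measure in the strict sense.
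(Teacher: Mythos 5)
Your proposal is correct. The paper itself gives no proof of this corollary (it is imported verbatim from Corollary~3.14 of~\cite{FR12}), and your argument --- reducing everything to the bijection between $M_t$-translative maps and their zero-level acceptance sets, then checking by backwards induction that the acceptance set of $X\mapsto\bigcup_{Z\in\tilde{R}_{t+1}(X)}R_t(-Z)$ is exactly $A_{t,t+1}+\tilde{A}_{t+1}$ --- is the standard route and is carried out with the right care on the two points that actually matter, namely $M_t\subseteq M_{t+1}$ and the fact that translativity alone (without finiteness at zero) suffices for the correspondence.
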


\begin{corollary}
\label{cor_composed}
Let the assumptions of corollary~\ref{cor_composed_1} be satisfied.  Additionally let
\begin{enumerate}
\setcounter{enumi}{2}
\item
$\seq{R}$ be c.u.c. and convex with minimal penalty function $\seq{-\beta^{\min}}$.  Then, $\seq{-\tilde{\beta}^{\min}}$ defined recursively by
    \begin{align*}
    -\tilde{\beta}_T^{\min}(\Q_T,w_T) &= -\beta_T^{\min}(\Q_T,w_T),\\
    -\tilde{\beta}_t^{\min}\lrparen{\Q_t,w_t} &= \cl\lrparen{-\beta_{t,t+1}^{\min}\lrparen{\Q_t,w_t} + \EQt{-\tilde{\beta}_{t+1}^{\min}\lrparen{\Q_t^{t+1},w_t^{t+1}\lrparen{\Q_t,w_t}}}{t}}
    \end{align*}
 where $t \in \{0,1,...,T-1\}$ and $(\Q_{t},w_{t}) \in \olW_{t}$, is equivalently defined by
 \begin{equation}
 \label{proofbeta}
 	-\tilde{\beta}^{\min}_t(\Q,w) := \cl\bigcup_{Z \in \tilde{A}_t} \lrparen{\EQt{Z}{t} + G_t(w)} \cap M_t,
 \end{equation}
 where  $\seq{\tilde{A}}$ is obtained by the recursion in property 2 in corollary~\ref{cor_composed_1} .
The dynamic risk measure $\seq{\tilde{R}}$ corresponding to $\seq{\tilde{A}}$  is c.u.c. convex and multiportfolio time consistent (but may fail to be finite at zero).  Further, if $\tilde{R}_t$ is finite at zero then $\tilde{R}_t$ is equivalent to its dual form with penalty function $-\tilde{\beta}^{\min}_t$ and half-spaces $G_t(w)$.
\item
$\seq{R}$ be conditionally c.u.c. and conditionally convex with dual representation \eqref{dual_We} w.r.t. $\olW_t^e$ and minimal penalty function $\seq{-\alpha^{\min}}$.  Then, $\seq{-\tilde{\alpha}^{\min}}$ defined recursively by
    \begin{align*}
    -\tilde{\alpha}_T^{\min}(\Q_T,w_T) &= -\alpha_T^{\min}(\Q_T,w_T),\\
    -\tilde{\alpha}_t^{\min}\lrparen{\Q_t,w_t} &= \cl\lrparen{-\alpha_{t,t+1}^{\min}\lrparen{\Q_t,w_t} + \EQt{-\tilde{\alpha}_{t+1}^{\min}\lrparen{\Q_t^{t+1},w_t^{t+1}\lrparen{\Q_t,w_t}}}{t}}
    \end{align*}
 where $t \in \{0,1,...,T-1\}$ and $(\Q_{t},w_{t}) \in \olW_{t}^e$, is equivalently defined by
 \begin{equation}
 \label{proofalpha}
 	-\tilde{\alpha}^{\min}_t(\Q,w) := \cl\bigcup_{Z \in \tilde{A}_t} \lrparen{\EQt{Z}{t} + \Gamma_t(w)} \cap M_t,
 \end{equation}
 where  $\seq{\tilde{A}}$ is obtained by the recursion in property 2 in corollary~\ref{cor_composed_1} .
The dynamic risk measure $\seq{\tilde{R}}$ corresponding to $\seq{\tilde{A}}$  is conditionally c.u.c., conditionally convex and multiportfolio time consistent (but may fail to be finite at zero).  Further, if $\tilde{R}_t$ is finite at zero then $\tilde{R}_t$ is equivalent to its dual form with penalty function $-\tilde{\alpha}^{\min}_t$ and conditional half-spaces $\Gamma_t(w)$.
\item $\seq{R}$ be (conditionally) c.u.c. and (conditionally) coherent
with maximal dual set $\seq{\olW^{\max}}$. Then,
$\seq{\widetilde{\W}^{\max}}$ defined recursively by
    \begin{align*}
    \widetilde{\W}_{T}^{\max} &= \olW_{T}^{\max},\\
    \widetilde{\W}_{t}^{\max} &= \olW_{t,t+1}^{\max} \cap H_t^{t+1}(\widetilde{\W}_{t+1}^{\max}),
    \end{align*}
    where $t \in \{0,1,...,T-1\}$, is equivalently defined by \[\widetilde{\W}_t^{\max} := \lrcurly{(\Q,w) \in \olW_t: w_t^T(\Q,w) \in \plus{\tilde{A}_t}},\] where  $\seq{\tilde{A}}$ is obtained by the recursion in property 2 in corollary~\ref{cor_composed_1} .
The dynamic risk measure $\seq{\tilde{R}}$ corresponding to $\seq{\tilde{A}}$ is (conditionally) c.u.c., (conditionally) coherent and multiportfolio time consistent, and is finite at zero if and only if
$\widetilde{\W}_{t}^{\max} \neq \emptyset$ for all times $t$.
\end{enumerate}
\end{corollary}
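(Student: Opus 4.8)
The plan is to read off every assertion of the corollary from results already in hand, the only real work being a backwards induction on $t$ that bridges the \emph{recursive} descriptions of the penalty functions (resp.\ the maximal dual sets) and the \emph{direct} ones built from $\seq{\tilde{A}}$. Proposition~\ref{prop_gen_tc} together with Corollary~\ref{cor_composed_1} already gives that $\seq{\tilde{R}}$ is $M_t$-translative, monotone, multiportfolio time consistent, and (conditionally) convex / (conditionally) coherent / (conditionally) c.u.c.\ whenever $\seq{R}$ is, with acceptance sets $\tilde{A}_T = A_T$ and $\tilde{A}_t = A_{t,t+1} + \tilde{A}_{t+1}$. What remains is (a) to show the recursions for $-\tilde{\beta}^{\min}$, $-\tilde{\alpha}^{\min}$ and $\widetilde{\W}^{\max}$ reproduce \eqref{proofbeta}, \eqref{proofalpha} and $\{(\Q,w)\in\olW_t: w_t^T(\Q,w)\in\plus{\tilde{A}_t}\}$, and (b) the finiteness‑at‑zero dichotomy and the dual forms. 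I cannot simply quote Corollary~\ref{thm_mptc_penalty} (resp.\ Corollary~\ref{cor_mptc_coherent}), because $\seq{\tilde{R}}$ may fail to be finite at zero and hence need not be a risk measure in the sense of Definition~\ref{defn_conditional}; the set‑level computations underlying those corollaries do, however, remain valid, and it is those I will re‑run.

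For property~3, I would prove by backwards induction on $t$ that the recursively defined $-\tilde{\beta}_t^{\min}$ equals the right‑hand side of \eqref{proofbeta}. At $t=T$ this is immediate from $\tilde{A}_T = A_T$ and the definition \eqref{min penalty}. For the step, substitute the induction hypothesis for $-\tilde{\beta}_{t+1}^{\min}\lrparen{\Q_t^{t+1},w_t^{t+1}\lrparen{\Q_t,w_t}}$ into the recursion and then run verbatim the chain of equalities from the proof of Theorem~\ref{thm_mptc_penalty_wo_uc} (equations \eqref{exp_penalty}--\eqref{closure_sum_closures} and the displays after), now with $s=t+1$, $A_s$ replaced by the closed convex acceptance set $\tilde{A}_{t+1}$ and $A_t$ by $\tilde{A}_t$: Lemma~\ref{lemma_exp_penalty} pushes the conditional expectation through the penalty function and re‑indexes it to time $t$, proposition~1.23 in~\cite{L11} collapses the closure of a sum of closures, the two unions merge using $G_t(w) + G_t(w) = G_t(w)$ and additivity of $\EQt{\cdot}{t}$, and finally $\tilde{A}_t = A_{t,t+1} + \tilde{A}_{t+1}$ turns the double union into a single union over $\tilde{A}_t$, i.e.\ \eqref{proofbeta}. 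All invocations are legitimate because $\tilde{A}_{t+1}$ is a closed convex acceptance set (Proposition~\ref{prop_gen_tc}), $(\Q_t,w_t)\in\olW_t\subseteq\olW_{t,t+1}$ (Remark~\ref{rem_stepped_dual_var}) and $\lrparen{\Q_t^{t+1},w_t^{t+1}\lrparen{\Q_t,w_t}}\in\olW_{t+1}$ (Lemma~\ref{lemma_Wtau_to_Wt}). The c.u.c.\ convexity and multiportfolio time consistency of $\seq{\tilde{R}}$ are then Proposition~\ref{prop_gen_tc}, and once $\tilde{R}_t$ is finite at zero it is a (normalized, by the recursion) closed convex risk measure, so Theorem~\ref{thm_probability_equal} yields the dual form with minimal penalty function \eqref{min penalty} applied to $\tilde{A}_t$, which is exactly $-\tilde{\beta}_t^{\min}$. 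Property~4 is the word‑for‑word conditional analogue: replace $G_t(w)$, $\beta$, Corollary~\ref{prop_Gw}, Lemma~\ref{lemma_exp_penalty}, Theorem~\ref{thm_probability_equal} by $\Gamma_t(w)$, $\alpha$, Corollary~\ref{prop_Gamma}, Lemma~\ref{lemma_exp_cond_penalty}, Corollary~\ref{cor_conditional_dual} (inserting closures where needed, as in Corollary~\ref{cor_mptc_cond_penalty}), and note that the representation \eqref{dual_We} over $\olW_t^e$ is inherited since the pasting $\Q\oplus^{s}\R$ of two measures equivalent to $\P$ is again equivalent to $\P$.

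For property~5, I would show by backwards induction that the recursively defined $\widetilde{\W}_t^{\max}$ equals $\{(\Q,w)\in\olW_t: w_t^T(\Q,w)\in\plus{\tilde{A}_t}\}$. The base case $t=T$ is $\tilde{A}_T = A_T$ with the coherent part of Theorem~\ref{thm_probability_equal}. For the step, expand $\widetilde{\W}_t^{\max} = \olW_{t,t+1}^{\max}\cap H_t^{t+1}\lrparen{\widetilde{\W}_{t+1}^{\max}}$ using $\olW_{t,t+1}^{\max} = \{(\Q,w)\in\olW_{t,t+1}: w_t^{t+1}(\Q,w)\in\plus{A_{t,t+1}}\}$ and, via the induction hypothesis together with the identity $w_{t+1}^T\lrparen{\Q^{t+1},w_t^{t+1}(\Q,w)} = w_t^T(\Q,w)$ valid for $(\Q,w)\in\olW_t$ (it is the case $\Q = \Q\oplus^{t+1}\Q^{t+1}$ of the identity recalled before Theorem~\ref{thm_mptc_stable}, and $\lrparen{\Q^{t+1},w_t^{t+1}(\Q,w)}\in\olW_{t+1}$ by Lemma~\ref{lemma_Wtau_to_Wt}), $H_t^{t+1}\lrparen{\widetilde{\W}_{t+1}^{\max}} = \{(\Q,w)\in\olW_t: w_t^T(\Q,w)\in\plus{\tilde{A}_{t+1}}\}$. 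Intersecting and using $\olW_t\subseteq\olW_{t,t+1}$ gives $\widetilde{\W}_t^{\max} = \{(\Q,w)\in\olW_t: w_t^{t+1}(\Q,w)\in\plus{A_{t,t+1}},\ w_t^T(\Q,w)\in\plus{\tilde{A}_{t+1}}\}$. Two elementary facts finish it: since $A_{t,t+1}\subseteq M_{t+1}$ and, for $\Ft{t+1}$‑measurable $X$, $\E{\trans{w_t^T(\Q,w)}X} = \E{\trans{w_t^{t+1}(\Q,w)}X}$ (insert $\bar{\xi}_{t,T}(\Q_i) = \bar{\xi}_{t,t+1}(\Q_i)\bar{\xi}_{t+1,T}(\Q_i)$ and use $\inEt{\bar{\xi}_{t+1,T}(\Q_i)}{t+1} = 1$), the first condition may be rewritten as $w_t^T(\Q,w)\in\plus{A_{t,t+1}}$; and $\plus{(A_{t,t+1}+\tilde{A}_{t+1})} = \plus{A_{t,t+1}}\cap\plus{\tilde{A}_{t+1}}$ because $A_{t,t+1}$ and $\tilde{A}_{t+1}$ are convex cones (coherence), so the two conditions combine into $w_t^T(\Q,w)\in\plus{\tilde{A}_t}$. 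The c.u.c., coherence and multiportfolio time consistency of $\seq{\tilde{R}}$ are Proposition~\ref{prop_gen_tc}; and $\tilde{R}_t$ is finite at zero if and only if $\widetilde{\W}_t^{\max}\neq\emptyset$, since $\tilde{R}_t(0) = \tilde{A}_t\cap M_t$ always contains $0$, while $\tilde{R}_t(0) = M_t$ forces $\plus{\tilde{A}_t}\subseteq\plus{M_t} = \prp{M_t}$ and hence $\widetilde{\W}_t^{\max} = \emptyset$ (a pair $(\Q,w)\in\olW_t$ with $w_t^T(\Q,w)\in\prp{M_t}$ would force $w\in\prp{M_t}$), and conversely $\tilde{R}_t(0)\neq M_t$ makes $\tilde{R}_t$ a closed coherent risk measure whose coherent dual representation (Theorem~\ref{thm_probability_equal}) over $\widetilde{\W}_t^{\max}$ cannot be over an empty index set.

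The main obstacle is bookkeeping rather than a new idea. Analytically one must check that every lemma borrowed from the proof of Theorem~\ref{thm_mptc_penalty_wo_uc} is applied only to closed convex \emph{acceptance sets} — this is precisely what lets $\tilde{A}_{t+1}$ (which need not come from a finite‑at‑zero risk measure) be fed into Lemma~\ref{lemma_exp_penalty} and proposition~1.23 of~\cite{L11}, whereas the cocycle corollaries themselves cannot be applied directly. In the coherent case the care lies in keeping the normal directions $w_t^{t+1}(\Q,w)$ and $w_t^T(\Q,w)$ apart and in the polar‑cone identity $\plus{(C+D)} = \plus{C}\cap\plus{D}$, which relies on $C$ and $D$ being cones with $0$ in their closures; I expect this identity and the finiteness‑at‑zero equivalence to be the only places where anything beyond routine substitution is required.
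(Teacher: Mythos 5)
Your proposal is correct, and for properties~3 and~4 it follows essentially the paper's own route: the paper likewise reduces the equivalence of the recursive and direct definitions of $-\tilde{\beta}^{\min}$ (resp.\ $-\tilde{\alpha}^{\min}$) to the cocycle computations in the proofs of theorem~\ref{thm_mptc_penalty_wo_uc} and corollaries~\ref{thm_mptc_penalty}, \ref{cor_mptc_cond_penalty}, with $A$ and $-\beta$ replaced by $\tilde{A}$ and $-\tilde{\beta}$, and likewise flags that normalization/finiteness at zero is not needed for those set-level arguments (citing remark~5 of~\cite{FR12} and the fact that lemma~\ref{separation} does not require finiteness at zero, where you instead re-run the chain \eqref{exp_penalty}--\eqref{closure_sum_closures} directly on the closed convex acceptance sets $\tilde{A}_{t+1}$ — the same content made explicit as a backwards induction). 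Where you genuinely diverge is property~5. The paper obtains the equivalence of the two definitions of $\widetilde{\W}^{\max}_t$ by invoking corollary~\ref{cor_mptc_coherent} with $\olW$, $A$, $-\beta$ replaced by $\widetilde{\W}$, $\tilde{A}$, $-\tilde{\beta}$, i.e.\ by routing everything through the penalty-function dichotomy $-\tilde\beta^{\min}_t\in\{G_t(w)\cap M_t,\,M_t\}$; you instead run a direct backwards induction using $H_t^{t+1}(\widetilde{\W}_{t+1}^{\max})=\{(\Q,w)\in\olW_t:w_t^T(\Q,w)\in\plus{\tilde{A}_{t+1}}\}$, the reduction of $w_t^{t+1}(\Q,w)\in\plus{A_{t,t+1}}$ to $w_t^T(\Q,w)\in\plus{A_{t,t+1}}$ via the tower property, and the polar-cone identity $\plus{(A_{t,t+1}+\tilde{A}_{t+1})}=\plus{A_{t,t+1}}\cap\plus{\tilde{A}_{t+1}}$ (valid here since both summands are convex cones containing $0$). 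Similarly, for the finiteness-at-zero dichotomy the paper appeals to the biconjugation result proposition~13(iv) of~\cite{H09} to get $\tilde{R}_t(0)=M_t\Rightarrow\widetilde{\W}_t^{\max}=\emptyset$, whereas you argue elementarily that $M_t\subseteq\tilde{A}_t$ forces $\plus{\tilde{A}_t}$ into $\prp{M_t}$ and hence $w\in\prp{M_t}$, contradicting $(\Q,w)\in\olW_t$. Your variants are more self-contained and avoid the external citations; the paper's version buys brevity by reusing its section~\ref{sec_coherent} machinery wholesale. Both are sound.
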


\begin{proof}
\begin{proofenum}
\setcounter{enumi}{2}
\item The proof of corollary~\ref{thm_mptc_penalty} demonstrates the equivalence between the sum of penalty functions and the sum of acceptance sets, where $A$ and $-\beta$ have to be replaced by $\tilde{A}$ and $-\tilde{\beta}$ at the appropriate places. Regarding the assumptions of corollary~\ref{thm_mptc_penalty}:  c.u.c. and convexity follow from proposition~\ref{prop_gen_tc}
and normalization is not needed for this equivalence as stated in remark~5 in \cite{FR12}.  Notice that lemma~\ref{separation} does not require the finite at zero properties for acceptance sets.
 Finally, if $\tilde{R}_t$ is finite at zero, then it is equivalent to its dual representation with minimal penalty function $-\tilde{\beta}^{\min}_t$ by theorem~\ref{thm_probability_equal}.
 \item The proof is analog to 3., using corollary~\ref{cor_mptc_cond_penalty} instead of corollary~\ref{thm_mptc_penalty} and corollary~\ref{cor_conditional_dual} instead of theorem~\ref{thm_probability_equal}. Adapting proposition~\ref{prop_comp_usc} to the conditional case yields $\seq{\tilde{R}}$ conditionally c.u.c.
\item Using the definition in~\eqref{proofbeta}, corollary~\ref{cor_mptc_coherent}, where $\olW,A$ and $-\beta$ is replaced by $\widetilde{\W}, \tilde{A}$ and $-\tilde{\beta}$ at the appropriate places,  yields the equivalence between the two definitions of $\seq{\widetilde{\W}^{\max}}$. Convex upper continuity and (conditionally) coherence follow from proposition~\ref{prop_gen_tc}.
Conditional convex upper continuity follows if $\seq{R}$ is conditionally coherent by adapting proposition~\ref{prop_comp_usc} to the conditional case.
Additionally, $\tilde{R}_t(0) \neq \emptyset$, see proof of proposition~3.12 in \cite{FR12}. Furthermore, $\tilde{R}_t(0) \neq M_t$ implies that $\tilde{R}_t$ is proper and thus the dual representation holds true. Then, there exists a $(\Q,w) \in \olW_t$ such that $-\tilde{\beta}_t^{\min}(\Q,w) \neq M_t$, i.e. $\widetilde{\W}_t^{\max} \neq \emptyset$. And $\tilde{R}_t(0) = M_t$ implies, by proposition~13~(iv) in~\cite{H09}, $M_t = \tilde{R}_t(0) \subseteq\tilde{R}^{**}_t(0) = \bigcap_{(\Q,w) \in \widetilde{\W}_t^{\max}} G_t(w) \cap M_t$ and thus $\widetilde{\W}_t^{\max} = \emptyset$.
\end{proofenum}
\qed\end{proof}

We will now use the above results to show that the convex superhedging portfolios are multiportfolio time consistent, and to deduce a multiportfolio time consistent version of the average value at risk by backward recursion.
\begin{example}[Convex superhedging price]
\label{ex_convex_shp}
Consider the setting with a full space of eligible assets, i.e. $M_t = \LdpF{t}$ for all times $t$.  Also consider a market with convex transaction costs as in \cite{PP10}, which is modeled by a sequence of convex solvency regions $\seq{K}$. $K_t$ is a solvency region at time $t$ if it is an $\Ft{t}$-measurable set such that for every $\omega \in \Omega$, $K_t[\omega]$ is a closed set with $\mathbb{R}_+^d \subseteq K_t[\omega] \subsetneq \mathbb{R}^d$. Let the appropriate no arbitrage condition (robust no scalable arbitrage, see  \cite{PP10}) be satisfied.

Denote the set of self-financing portfolios starting from zero capital at time $t$ by $C_{t,T} := -\sum_{s = t}^T \LdpK{p}{s}{K_s}$.
Thus the set of superhedging portfolios is given by
\[CSHP_t(X) := \lrcurly{u \in \LdpF{t}: -X + u \in -C_{t,T}}.\]
The convex superhedging portfolios can also be defined via the dual representation with penalty functions
\begin{align*}
-\alpha_t^{CSHP}(\Q,w) &:= \sum_{s = t}^T \lrcurly{u \in \LdpF{t}:
    \essinf_{k \in \LdpK{p}{s}{K_s}} \trans{w}\EQt{k}{t} \leq \trans{w}u},\\
-\beta_t^{CSHP}(\Q,w) &:= \sum_{s = t}^T \lrcurly{u \in \LdpF{t}: \sigma_{K_s}(w_t^s(\Q,w)) \leq \E{\trans{w}u}}
\end{align*}
where $\sigma_{K_s}(w_t^s(\Q,w)) = \inf_{k \in \LdpK{p}{s}{K_s}} \inE{\trans{w_t^s(\Q,w)}k}$ is the support function for
the selectors of $K_s$.

One can use corollary~\ref{cor_composed_1} and proposition~\ref{prop_gen_tc} to show that the convex superhedging portfolios are
multiportfolio time consistent: Consider acceptance sets $\seq{A}$ given by $A_T = \LdpK{p}{T}{K_T}$ and $A_t=\LdpK{p}{t}{K_t}
+ \LdpF{+}$ for $t<T$. Thus, the stepped acceptance sets are given by $A_{t,t+1} = \LdpK{p}{t}{K_t} + \LdpF{t+1,+}$. Then, the acceptance set $-C_{t,T}$ of the convex superhedging set can be recovered by backward recursion of $\seq{A}$, that is $-C_{T,T} =A_T$ and $-C_{t,T} = -C_{t+1,T} + A_{t,t+1}$ for $t<T$. Thus, by corollary~\ref{cor_composed_1} and proposition~\ref{prop_gen_tc} the convex superhedging portfolios are multiportfolio time consistent.

Under the robust no scalable arbitrage condition $-C_{t,T}$ is closed.  Therefore, by theorem~\ref{thm_mptc_penalty_wo_uc}, convex upper continuity is not necessary in corollary~\ref{cor_composed}, or for the cocycle condition to be satisfied.  And indeed, we can recover the minimal penalty function $-\beta_t^{CSHP}$ by the backward recursion of penalty functions as given in corollary~\ref{cor_composed}
\begin{align*}
-\beta_T^{CSHP}(\Q,w) &= -b_T(\Q,w)\\
-\beta_t^{CSHP}(\Q,w) &= \cl\lrparen{-b_{t,t+1}(\Q,w) + \EQt{-\beta_{t+1}^{CSHP}(\Q^{t+1},w_t^{t+1}(\Q,w))}{t}}
\end{align*}
for any $(\Q,w) \in \olW_t$, where
\begin{align*}
-b_T(\Q,w) &:= \cl \bigcup_{X \in A_{T}} \lrparen{\EQt{X}{t} + G_t(w)}= \lrcurly{u \in \LdpF{}: \sigma_{K_T}(w) \leq \E{\trans{w}u}},\\
-b_{t,t+1}(\Q,w) &:= \cl \bigcup_{X \in A_{t,t+1}} \lrparen{\EQt{X}{t} + G_t(w)}= \lrcurly{u \in \LdpF{t}: \sigma_{K_t}(w) \leq \E{\trans{w}u}}.
\end{align*}
\end{example}

\begin{example}[Composed $AV@R$]
\label{ex_avar}
The set-valued average value at risk was shown not to be multiportfolio time consistent in~\cite{FR12} (and similarly the scalar average value at risk is well known to not be time consistent).  Corollary~\ref{cor_composed} can be used to construct the composed version of the average value at risk and deduce its dual representation.

Consider $p = +\infty$ with the weak* topology and parameters $\lambda^t \in \LdiF{t}$ with bounds $\epsilon \leq \lambda^t_i < 1$ for some $\epsilon > 0$ for every time $t$.  The details for the average value at risk in this setting are provided in section~\ref{sec_avar}.

Let $M_t = \LdiF{t}$ for all times $t$, then as shown in proposition~\ref{prop_avar_usc} we know that $\seq{AV@R^{\lambda}}$ is a c.u.c. dynamic conditionally coherent risk measure.
Then, the composed version of the average value at risk $\seq{AV@R^{\lambda}}$ is, by corollary~\ref{cor_composed}, a multiportfolio time consistent c.u.c.  conditionally coherent risk measure with dual representation
\begin{equation*}
\widetilde{AV@R}_t^{\lambda}(X) := \bigcap_{(\Q,w) \in \widetilde{\W}^{\lambda}_{t}} \lrparen{\EQt{-X}{t} + \Gamma_t(w)} \cap M_t,
\end{equation*}
where
\begin{align*}
\widetilde{\W}^{\lambda}_{t}
&= \Big\{(\Q,w) \in \olW_t: \forall s \in \lrcurly{t,...,T-1}, \frac{w_{t}^{s}(\Q,w)}{\lambda^{s}} \succeq w_{t}^{s+1}(\Q,w) \Big\}\\
&= \Big\{(\Q,w) \in \olW_t: \forall s \in \lrcurly{t,...,T-1}, \P\lrparen{\bar{\xi}_{s,s+1}(\Q_i) \leq \frac{1}{\lambda^{s}_i} \text{ or } w_i = 0} = 1 \; \forall i \in \{1,...,d\} \Big\}.
\end{align*}
\end{example}

%%%%%%%%%%%%%%%%%%%%%%%%%%%%%
\section{Detailed examples}

\subsection{Average Value at Risk}
\label{sec_avar}
In this section we will discuss the details for the dynamic set-valued average value at risk and prove the dual representation of the composed
dynamic set-valued average value at risk  given in example~\ref{ex_avar} by using corollary~\ref{cor_composed}.
In the scalar case the composed average value at risk is studied in \cite{CK10}.
As the underlying space we consider  $\LdiF{t}$ with the weak* topology $\sigma(\LdiF{t},\LdoF{t})$.

The dual definition for the dynamic average value at risk with time $t$ parameters $\lambda^t \in \LdiF{t}$ with $\epsilon \leq \lambda^t_i < 1$ for some $\epsilon > 0$ is given by
\begin{equation}
\label{AVARdualdef}
AV@R_t^{\lambda}(X) := \bigcap_{(\Q,w) \in \olW^{\lambda}_{t}} \lrparen{\EQt{-X}{t} + \Gamma_t(w)} \cap M_t
\end{equation}
for any $X \in \LdiF{}$ where
\begin{align*}
\olW^{\lambda}_{t} &:= \lrcurly{(\Q,w) \in \olW_t: \frac{w}{\lambda^t} - w_t^T(\Q,w) \in \LdoF{+}}= \lrcurly{(\Q,w) \in \olW_t: 0 \preceq \diag{w}\dQdP \preceq w/\lambda^t},
\end{align*}
see section~5.2 in \cite{FR12}.

In the following proposition, we provide the acceptance set and thus the primal representation for the dynamic average value at risk given in \eqref{AVARdualdef}. This proves that \eqref{AVARdualdef} is the dynamic version of the closure of the static average value at risk defined via its acceptance set $A_0^{\lambda}$ in \cite{HRY12}, as the proof is similar to \cite{HRY12} we choose to omit it.
\begin{proposition}
\label{prop_avar_acceptance}
The acceptance set associated with the conditional average value at risk at time $t$ and parameter $\lambda^t$ is given by $\bar{A}_t^{\lambda} = \cl (A_t^{\lambda})$ where
\[A_t^{\lambda} = \lrcurly{X \in \LdiF{}: \exists Z \in \LdiF{+}, X + Z \succeq \frac{\Et{Z}{t}}{\lambda^t}}\]
and $\olW^{\lambda}_t$ is the maximal dual set.
\end{proposition}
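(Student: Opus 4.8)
The plan is to identify the risk measure $\bar R_t$ generated by the acceptance set $\cl(A_t^\lambda)$ with $AV@R_t^\lambda$; by the one-to-one correspondence between risk measures and acceptance sets, and by the characterization of the maximal dual set, this yields both claims at once. First I note that $A_t^\lambda = \{\Et{Z}{t}/\lambda^t - Z : Z \in \LdiF{+}\} + \LdiF{+}$ is a convex cone containing $\LdiF{+}$, and it is $\Ft{t}$-decomposable: if $Z_X, Z_Y$ witness $X, Y \in A_t^\lambda$ and $D \in \Ft{t}$, then $1_D Z_X + 1_{D^c} Z_Y$ witnesses $1_D X + 1_{D^c} Y$. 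Conditioning the defining inequality of $A_t^\lambda$ on $\Ft{t}$ gives $\Et{X}{t} \succeq \Et{Z}{t}/\lambda^t - \Et{Z}{t} \succeq 0$ for every $X \in A_t^\lambda$ with witness $Z$ (using $\lambda^t_i < 1$ and $\Et{Z}{t} \succeq 0$); since $\Et{\cdot}{t}$ is continuous, the same holds on $\cl(A_t^\lambda)$, so this set misses every nonzero element of $M_{t,-}$ and is therefore a proper closed conditionally coherent acceptance set. Hence $\bar R_t$ meets the hypotheses of Corollary~\ref{cor_conditional_dual}, which gives $\bar R_t(X) = \bigcap_{(\Q,w) \in \olW_t^{\max}} \lrparen{\EQt{-X}{t} + \Gamma_t(w)} \cap M_t$ with $\olW_t^{\max} = \{(\Q,w) \in \olW_t : w_t^T(\Q,w) \in \plus{(\cl(A_t^\lambda))}\}$, and everything reduces to showing $\olW_t^{\max} = \olW^{\lambda}_t$.

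For this, recall that polars of a set and of its closure coincide, so $\plus{(\cl(A_t^\lambda))} = \plus{(A_t^\lambda)}$, and that for $(\Q,w) \in \olW_t$ one has $w_t^T(\Q,w) = \diag{w}\dQdP$ because $\Q = \P|_{\Ft{t}}$ forces $\bar{\xi}_{t,T}(\Q_i) = \dQidP$. Since $A_t^\lambda \supseteq \LdiF{+}$, the functional $Z \mapsto \E{\trans{(\diag{w}\dQdP)}Z}$ is automatically nonnegative on $\LdiF{+}$; testing it against the generators $Z = \Et{Z'}{t}/\lambda^t - Z'$, $Z' \in \LdiF{+}$, the tower property together with $\Et{\dQidP}{t} = 1$ yields $\E{\trans{(\diag{w}\dQdP)}Z} = \sum_i \E{(w_i/\lambda^t_i - w_i\dQidP)Z'_i}$, and requiring this to be nonnegative for all $Z' \in \LdiF{+}$ is exactly $\diag{w}\dQdP \preceq w/\lambda^t$. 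Conversely, once $0 \preceq \diag{w}\dQdP \preceq w/\lambda^t$, the functional is nonnegative on both $\LdiF{+}$ and the cone $\{\Et{Z'}{t}/\lambda^t - Z' : Z' \in \LdiF{+}\}$, hence on all of $A_t^\lambda$. Therefore $w_t^T(\Q,w) \in \plus{(A_t^\lambda)}$ if and only if $0 \preceq \diag{w}\dQdP \preceq w/\lambda^t$, i.e. $\olW_t^{\max} = \olW^{\lambda}_t$.

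Substituting $\olW_t^{\max} = \olW^{\lambda}_t$ into the representation from Corollary~\ref{cor_conditional_dual} gives $\bar R_t(X) = \bigcap_{(\Q,w) \in \olW^{\lambda}_t} \lrparen{\EQt{-X}{t} + \Gamma_t(w)} \cap M_t = AV@R_t^\lambda(X)$ by the definition \eqref{AVARdualdef}; hence $\bar A_t^\lambda = \{X : 0 \in AV@R_t^\lambda(X)\} = \{X : 0 \in \bar R_t(X)\} = \cl(A_t^\lambda)$, and $\olW^{\lambda}_t$ is the maximal dual set. The main obstacle is the polar-cone computation of the second paragraph: carefully manipulating $\E{\trans{(\diag{w}\dQdP)}(\Et{Z'}{t}/\lambda^t - Z')}$, where $\Ft{t}$-measurable factors ($w$, $1/\lambda^t$, $\Et{Z'}{t}$) are interlaced with the $\Ft{T}$-measurable density $\dQdP$, and exploiting the normalization $\Et{\dQidP}{t} = 1$ that stems from $\Q = \P|_{\Ft{t}}$; a minor additional point is verifying the $\Ft{t}$-decomposability and the properness of the closure asserted in the first paragraph.
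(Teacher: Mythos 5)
The paper itself omits the proof of this proposition, deferring to the static argument of Hamel--Rudloff--Yankova, and your write-up supplies exactly the expected argument: write $A_t^\lambda$ as the sum of the cone $\inlrcurly{\Et{Z}{t}/\lambda^t - Z:\, Z \in \LdiF{+}}$ and $\LdiF{+}$, check that its closure is a closed conditionally coherent acceptance set, compute its positive dual cone via the tower property and $\Et{\dQidP}{t}=1$ to get $0 \preceq \diag{w}\dQdP \preceq w/\lambda^t$, i.e.\ $\olW_t^{\max} = \olW_t^{\lambda}$, and then invoke Corollary~\ref{cor_conditional_dual} to match the dual definition \eqref{AVARdualdef}. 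The computations are correct; the only place where the wording is looser than the argument requires is ``since $\Et{\cdot}{t}$ is continuous'': since the closure is taken in the weak* topology, the clean justification is that $\inlrcurly{X : \Et{X}{t} \succeq 0} = \bigcap_{v \in \LdoF{t,+}} \inlrcurly{X : \inE{\trans{v}X} \geq 0}$ is weak*-closed and contains $A_t^\lambda$, hence contains its closure.
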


In proposition~5.4 in~\cite{FR12} it was shown that $\seq{AV@R^{\lambda}}$ is a normalized closed conditionally coherent dynamic risk measure.  And
in proposition~\ref{prop_avar_usc} below, we show that the average value at risk with $M_t = \LdiF{t}$ is c.u.c.

\begin{proposition}
\label{prop_avar_usc}
Let $M_t = \LdiF{t}$ for all times $t$, then $\seq{AV@R^{\lambda}}$ is a c.u.c. risk measure.
\end{proposition}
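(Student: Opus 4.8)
The plan is to show that $\seq{AV@R^{\lambda}}$ is convex upper continuous by verifying that the preimage $(AV@R_t^{\lambda})^{-1}(D) = \lrcurly{X \in \LdiF{}: AV@R_t^{\lambda}(X) \cap D \neq \emptyset}$ is weak* closed for every closed $D \in \mathcal{G}(M_t;M_{t,-})$. Since we are working with $p = +\infty$ under the weak* topology, the natural route is to use the Krein--Šmulian theorem: a convex set in $\LdiF{}$ is weak* closed if and only if its intersection with every ball $r B_{\LdiF{}}$ (of radius $r$, in the $\|\cdot\|_\infty$-norm) is weak* closed. So the first step is to reduce to showing that $(AV@R_t^{\lambda})^{-1}(D) \cap r B_{\LdiF{}}$ is weak* closed for each $r > 0$; since $AV@R_t^\lambda$ is conditionally coherent (hence convex) and $D$ is convex, the preimage $(AV@R_t^{\lambda})^{-1}(D)$ is convex, so Krein--Šmulian applies.

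Next I would exploit the primal description of the acceptance set. Because $0 \in AV@R_t^{\lambda}(X)$ is equivalent to $X \in \bar A_t^{\lambda} = \cl(A_t^{\lambda})$ (proposition~\ref{prop_avar_acceptance}), and more generally $u \in AV@R_t^{\lambda}(X)$ iff $X + u \in \bar A_t^{\lambda}$, the condition $AV@R_t^{\lambda}(X) \cap D \neq \emptyset$ reads: there exists $u \in D$ with $X + u \in \bar A_t^{\lambda}$. On a bounded set of $X$'s, by $M_t$-translativity and the finite-at-zero/boundedness structure, one can restrict attention to $u$ lying in a bounded subset of $M_t = \LdiF{t}$ as well (the ``excess'' direction $M_{t,-}$ in $\mathcal{G}(M_t;M_{t,-})$ is exactly what lets us truncate $u$ from above without changing membership). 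Thus on $r B_{\LdiF{}}$ the preimage is the image, under the (weak*-continuous) addition map, of a bounded weak* closed set, intersected appropriately — here I would use that bounded weak* closed subsets of $\LdiF{}$ are weak* compact (Banach--Alaoglu), so that the relevant projection of a weak* closed set along a weak* compact fibre is weak* closed.

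Concretely, the key step is: take a net $X_\alpha \to X$ weak* with $\|X_\alpha\|_\infty \le r$ and $u_\alpha \in D$, $\|u_\alpha\|_\infty \le r'$, and $X_\alpha + u_\alpha \in \bar A_t^{\lambda}$; by weak* compactness pass to a subnet with $u_\alpha \to u$ weak*, whence $u \in D$ (as $D$ is weak* closed) and $X_\alpha + u_\alpha \to X + u$ weak*, so $X + u \in \bar A_t^{\lambda}$ since $\bar A_t^{\lambda}$ is weak* closed (it is the acceptance set of a closed risk measure). This gives $AV@R_t^{\lambda}(X) \cap D \neq \emptyset$. I would carry out this net argument after the Krein--Šmulian reduction, for each fixed ball, and that completes the proof.

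The main obstacle is the bookkeeping needed to guarantee that the compensating portfolios $u_\alpha$ can be taken in a \emph{bounded} subset of $M_t$ uniformly over $X_\alpha$ in a ball, which is what makes Banach--Alaoglu applicable to the fibre; this is where the hypothesis $D \in \mathcal{G}(M_t;M_{t,-})$ (so $D = \cl\co(D + M_{t,-})$) is essential — it lets us replace any large-norm $u_\alpha$ by a truncated version $u_\alpha' \preceq u_\alpha$ with $u_\alpha' - u_\alpha \in M_{t,-}$ that still lies in $D$ and still satisfies $X_\alpha + u_\alpha' \in \bar A_t^{\lambda}$ by monotonicity of the acceptance set. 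Handling this truncation carefully, together with checking that the uniform bound $\epsilon \le \lambda_i^t < 1$ prevents any degeneracy, is the technical heart; everything else is a routine application of Krein--Šmulian and weak* compactness.
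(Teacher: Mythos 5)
Your overall strategy (Krein--\v{S}mulian reduction to norm balls, then a weak* compactness argument on a net) is a legitimate alternative route, and several steps are fine: $(AV@R_t^{\lambda})^{-1}(D)$ is indeed convex, $\bar{A}_t^{\lambda}$ is weak* closed, and the limit argument for a uniformly bounded net is correct. The genuine gap is at the step you yourself call the technical heart: the claim that a large-norm $u_\alpha$ can be replaced by a truncation $u'_\alpha \preceq u_\alpha$ that ``still satisfies $X_\alpha + u'_\alpha \in \bar{A}_t^{\lambda}$ by monotonicity of the acceptance set.'' Monotonicity runs the other way: $\bar{A}_t^{\lambda} + \LdiF{+} \subseteq \bar{A}_t^{\lambda}$, so \emph{decreasing} $u_\alpha$ can only push $X_\alpha + u_\alpha$ out of the acceptance set, never keep it in. The hypothesis $D = D + M_{t,-}$ does let you lower $u_\alpha$ while remaining in $D$, but nothing in your argument shows the lowered portfolio still compensates the risk, so the uniform bound on the fibre --- which is what Banach--Alaoglu needs --- is unsupported. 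It also cannot follow from closedness, translativity and convexity alone, since c.u.c.\ is strictly stronger than closedness. What rescues the truncation is a structural fact special to AV@R with $M_t = \LdiF{t}$: the risk measure is point plus cone, $AV@R_t^{\lambda}(X) = \rho_t^{\lambda}(X) + \LdiF{t,+}$ with $|\rho_t^{\lambda_i}(X_i)| \leq \|X_i\|_\infty$ componentwise; on the ball $\|X_\alpha\|_\infty \leq r$ one may then take $u'_\alpha = \min(u_\alpha, r\1)$, which still dominates $\rho_t^{\lambda}(X_\alpha)$ (hence lies in $AV@R_t^{\lambda}(X_\alpha)$), lies in $D$, and satisfies $\|u'_\alpha\|_\infty \leq r$. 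You never establish this decomposition, and without it the proof does not close.

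For comparison, this decomposition is precisely what the paper's proof consists of: it shows $u \in AV@R_t^{\lambda}(X)$ if and only if $u_i \geq \rho_t^{\lambda_i}(X_i)$ almost surely for each $i$, where $\rho_t^{\lambda_i}$ is the scalar conditional average value at risk, and then invokes proposition~\ref{prop_uc_scalar}, a general result that point-plus-cone risk measures built from lower semicontinuous convex scalar risk measures are c.u.c. That proposition is itself proved by a Krein--\v{S}mulian reduction to bounded sets, but with sequences converging in probability and a componentwise $\liminf$ in place of your weak* convergent nets. So your plan is close in spirit, but the reduction to scalar AV@Rs is not an optional convenience --- it is the missing ingredient that makes the compactness step legitimate.
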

\begin{proof}
Let $X \in \LdiF{}$, then
\begin{align*}
AV@R_t^{\lambda}(X) &= \lrcurly{u \in \LdiF{t}: X + u \in \bar{A}_t^{\lambda}}= \lrcurly{u \in \LdiF{t}: \forall i = 1,...,d: X_i + u_i \in \bar{A}_t^{\lambda,i}}
\end{align*}
where \[\bar{A}_t^{\lambda,i} = \cl\lrcurly{X \in \LiF{}: \exists Z \in L^\infty(\R_{+}), X + Z \geq \frac{1}{\lambda_i^t} \Et{Z}{t}}.\]
Therefore, $u \in AV@R_t^{\lambda}(X)$ if and only if  $u \in \LdiF{t}$ with $u_i \geq \rho_t^{\lambda_i}(X_i)$ $\P$-almost surely, where $\rho_t^{\lambda_i}$ is the scalar dynamic
average value at risk.  The result then follows by proposition~\ref{prop_uc_scalar}.
\qed\end{proof}

We conclude the discussion of the average value at risk by considering the stepped version.

\begin{lemma}
\label{lemma_avar_stepped}
The stepped average value at risk from time $t$ to $s$ (for $0 \leq t < s \leq T$) with time $t$ parameters $\lambda^t \in \LdiF{t}$ where $\epsilon \leq \lambda^t_i < 1$ for some $\epsilon > 0$ is given by
\begin{equation*}
AV@R_{t,s}^{\lambda}(X) := \bigcap_{(\Q,w) \in \olW^{\lambda}_{t,s}} \lrparen{\EQt{-X}{t} + \Gamma_t(w)} \cap M_t
\end{equation*}
for any $X \in \LdiF{}$ where
\begin{align*}
\olW_{t,s}^{\lambda} &= \lcurly{(\Q,w) \in \olW_{t,s}: \forall Z \in \LdiF{s,+}, \; \E{\transp{w/\lambda^t - w_t^{s}(\Q,w)} Z} \geq}\\
&\quad\quad \sup\Big\{\E{\trans{w_t^{s}(\Q,w)} D}: \rcurly{D \in \LdiF{s,-} \cap \lrsquare{M_{s} + \lrparen{\Et{Z}{t}/\lambda^t - Z}}\Big\}}
\end{align*}
is the associated maximal stepped dual set.
\end{lemma}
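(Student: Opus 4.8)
The stepped average value at risk $AV@R_{t,s}^\lambda$ is, by definition, the risk measure associated with the stepped acceptance set $A_{t,s}^\lambda = \bar A_t^\lambda \cap M_s$, so the goal is to compute the dual representation of this set. By proposition~\ref{prop_avar_acceptance}, $\bar A_t^\lambda = \cl A_t^\lambda$, and the unstepped dual representation \eqref{AVARdualdef} holds with maximal dual set $\olW^\lambda_t$. First I would recall, from the stepped duality results in the appendix (corollary~\ref{cor_stepped_rep}), that a closed conditionally coherent stepped risk measure $R_{t,s}$ has the representation $R_{t,s}(X) = \bigcap_{(\Q,w)\in\olW_{t,s}^{\max}}(\EQt{-X}{t}+\Gamma_t(w))\cap M_t$, where $\olW_{t,s}^{\max} = \{(\Q,w)\in\olW_{t,s}: w_t^s(\Q,w)\in\plus{A_{t,s}}\}$. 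So the whole problem reduces to identifying $\plus{(A_{t,s}^\lambda)}$ explicitly, i.e.\ computing the positive dual cone of $A_t^\lambda\cap M_s$ expressed in terms of the pairs $(\Q,w)$ via $w_t^s(\Q,w)$.

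The key computation is therefore: for $v = w_t^s(\Q,w)\in\LdqF{s}$ (recall $q=1$ here since $p=\infty$), determine when $\E{\trans{v}X}\geq 0$ for all $X\in A_t^\lambda\cap M_s$. Using the primal description $A_t^\lambda = \{X: \exists Z\in\LdiF{+},\ X+Z\succeq \Et{Z}{t}/\lambda^t\}$, an element of $A_t^\lambda\cap M_s$ is of the form $X = D + \Et{Z}{t}/\lambda^t - Z$ where $Z\in\LdiF{+}$, $D\in\LdiF{s,-}$ (the slack from the $\succeq$), and one needs $X\in M_s$, i.e.\ $D\in M_s + (Z - \Et{Z}{t}/\lambda^t)$ intersected with $\LdiF{s,-}$. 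Then $\E{\trans{v}X} = \E{\trans{v}D} + \E{\trans{v}(\Et{Z}{t}/\lambda^t - Z)}$; using the tower property and $\Et{v}{t}$ appropriately together with the structure $v_t^s = \diag{w}\xi_{t,s}(\Q)$, the term $\E{\trans{v}\Et{Z}{t}/\lambda^t}$ rewrites as $\E{\trans{(w/\lambda^t)}\Et{Z}{t}}$ — here one must be careful that $w\in\LdoF{t}$, that $w/\lambda^t$ makes sense componentwise since $\lambda^t_i\geq\epsilon>0$, and that conditional expectations move across. Collecting terms, $\E{\trans{v}X} = \E{\trans{v}D} - \E{\transp{w/\lambda^t - w_t^s(\Q,w)}Z}$ (modulo the conditional-expectation bookkeeping), and requiring this to be $\geq 0$ for all admissible $(Z,D)$ gives exactly the stated inequality: for every $Z\in\LdiF{s,+}$,
\[
\E{\transp{w/\lambda^t - w_t^s(\Q,w)}Z} \geq \sup\lrcurly{\E{\trans{w_t^s(\Q,w)}D}: D\in\LdiF{s,-}\cap\lrsquare{M_s + \lrparen{\Et{Z}{t}/\lambda^t - Z}}}.
\]
Taking the closure of $A_t^\lambda$ does not change its dual cone, so this characterization of $\plus{(A_{t,s}^\lambda)}$ is exact, and substituting into the stepped coherent dual representation yields the claimed formula for $\olW_{t,s}^\lambda$.

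**The main obstacle.** The delicate point is the interchange of the supremum over the slack $D$ with the requirement ``for all $Z$'' and the correct handling of the intersection with $M_s$. Without eligible-asset restrictions ($M_s=\LdiF{s}$) the $M_s+(\cdots)$ term is vacuous and $D$ ranges over all of $\LdiF{s,-}$, but the lemma is stated for general $M_s$, so one must keep the constraint $X\in M_s$ throughout, which is why the $D$ lives in $\LdiF{s,-}\cap[M_s + (\Et{Z}{t}/\lambda^t - Z)]$ rather than simply in $\LdiF{s,-}$. I would verify the two inclusions of the dual-cone equality separately: ``$\subseteq$'' by plugging arbitrary $(Z,D)$ into $\E{\trans{v}X}\geq 0$ to extract the inequality, and ``$\supseteq$'' by checking that any $(\Q,w)$ satisfying the inequality annihilates every element of $A_t^\lambda\cap M_s$ nonnegatively — here one should note that for $X\in A_t^\lambda\cap M_s$ the representing $Z$ and $D$ need not be unique, but the inequality is required for all $Z\in\LdiF{s,+}$ and all valid $D$, so it applies to whatever decomposition $X$ admits. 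A secondary technical check is that $w_t^s(\Q,w)\in\plus{M_{s,+}}$ already holds for $(\Q,w)\in\olW_{t,s}$, so the only genuinely new constraint cut out is the one involving $\lambda^t$, consistent with the unstepped case $s=T$ where the inequality collapses (taking $D=0$, $Z$ arbitrary) to $w/\lambda^t - w_t^T(\Q,w)\in\LdoF{+}$, recovering \eqref{AVARdualdef}; this sanity check I would state explicitly as confirmation.
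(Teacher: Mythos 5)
Your proposal is correct and follows essentially the same route as the paper: compute the stepped acceptance set $A_{t,s}^{\lambda}=A_t^{\lambda}\cap M_{s}$, invoke corollary~\ref{cor_stepped_rep} together with $\plusp{A_{t,s}^{\lambda}}=\plusp{\bar{A}_{t,s}^{\lambda}}$ to reduce everything to identifying the positive dual cone of $A_{t,s}^{\lambda}$, and then read off the inequality from the decomposition $X=\Et{Z}{t}/\lambda^t-Z+(\text{nonnegative slack})$. The only blemish is a sign slip in your bookkeeping: the slack lies in $\LdiF{s,+}$, and the $D\in\LdiF{s,-}$ appearing in the statement is its negative (which is why a supremum, rather than an infimum, appears on the right-hand side); your final displayed inequality is nevertheless the correct one, and your closing sanity checks (the $M_s=\LdiF{s}$ simplification and the collapse to \eqref{AVARdualdef} at $s=T$) match remark~\ref{rem_avar_stepped_full}.
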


\begin{proof}
Using the definition of the acceptance set for $AV@R_t^{\lambda}$ given in proposition~\ref{prop_avar_acceptance}, we find the stepped acceptance set is given by $\bar{A}_{t,s}^{\lambda} = \cl (A_{t,s}^{\lambda})$ where
\begin{align*}
A_{t,s}^{\lambda} &= \lrcurly{X \in M_{s}: \exists Z \in \LdiF{+}, X + Z \succeq \Et{Z}{t}/\lambda^t}=  \lrcurly{X \in M_{s}: \exists Z \in \LdiF{s,+}, X + Z \succeq \Et{Z}{t}/\lambda^t}\\
&= \lrparen{\bigcup_{Z \in \LdiF{s,+}} \lrparen{\Et{Z}{t}/\lambda^t - Z} + \LdiF{s,+}} \cap M_{s}.
\end{align*}
By corollary~\ref{cor_stepped_rep} and $\plusp{A_{t,s}^{\lambda}} = \plusp{\bar{A}_{t,s}^{\lambda}}$, the maximal stepped dual set is given by \[\lrcurly{(\Q,w) \in \olW_{t,s}: w_t^{s}(\Q,w) \in \plusp{A_{t,s}^{\lambda}}}.\]

It can be seen that $X \in A_{t,s}^{\lambda}$ if and only if $X = \inEt{Z}{t}/\lambda^t - Z + D$ for some $Z \in \LdiF{s,+}$
and $D \in \LdiF{s,+} \cap \inlrsquare{M_{s} + (Z - \inEt{Z}{t}/\lambda^t)}$.  Therefore $w_t^{s}(\Q,w) \in \plusp{A_{t,s}^{\lambda}}$ if
and only if for any $Z \in \LdiF{s,+}$ and $D \in \LdiF{s,+} \cap \inlrsquare{M_{s} + (Z - \inEt{Z}{t}/\lambda^t)}$
\begin{align*}
0 &\leq \E{\trans{w_t^{s}(\Q,w)} \lrparen{\Et{Z}{t}/\lambda^t - Z + D}}= \E{\transp{w/\lambda^t - w_t^{s}(\Q,w)} Z} + \E{\trans{w_t^{s}(\Q,w)} D}.
\end{align*}
That is, for every $Z \in \LdiF{s,+}$
\begin{align*}
&\sup\lrcurly{\E{\trans{w_t^{s}(\Q,w)} D}: D \in \LdiF{s,-} \cap \lrsquare{M_{s} + \lrparen{\Et{Z}{t}/\lambda^t - Z}}}\leq \E{\transp{w/\lambda^t - w_t^{s}(\Q,w)} Z}.
\end{align*}
\qed\end{proof}

\begin{remark}
\label{rem_avar_stepped_full}
The dual representation in lemma~\ref{lemma_avar_stepped} simplifies significantly if all assets are eligible, i.e., if $M_t = \LdiF{t}$ for all times $t$. Then, the maximal dual sets for the stepped average value at risk can be equivalently given by
\[\olW_{t,s}^{\lambda} = \lrcurly{(\Q,w) \in \olW_t: 0 \preceq \diag{w}\xi_{t,s}(\Q) \preceq w/\lambda^t}\]
for all times $0 \leq t < s \leq T$, where $\olW_{t,s} = \olW_t$ by remark~\ref{rem_stepped_dual_var_equal}.  This dual representation can be interpreted as the extension of the stepped scalar representation given in~\cite{CK10}.
\end{remark}

We will now prove the dual representation of the composed, multiportfolio time consistent version of $\seq{AV@R^{\lambda}}$ given in example~\ref{ex_avar}.  As in section~\ref{sec_composition} for composed risk measures, we will now work in the discrete time setting $t \in \{0,1,...,T\}$.

\begin{proof}[Example~\ref{ex_avar}]
By corollary~\ref{cor_composed}, $\seq{\widetilde{AV@R}^{\lambda}}$ is the multiportfolio time consistent version of $\seq{AV@R^{\lambda}}$ if and only if
\begin{align*}
\widetilde{\W}^{\lambda}_{T} & = \olW^{\lambda}_{T}\\
\widetilde{\W}^{\lambda}_{t} & = H_t^{t+1}\lrparen{\widetilde{\W}^{\lambda}_{t+1}} \cap \olW^{\lambda}_{t,t+1}
\end{align*}
where $\widetilde{\W}^{\lambda}_t \neq \emptyset$ for all times $t$.  Trivially it can be seen that
$\widetilde{\W}^{\lambda}_{T} = \olW_T$.  Furthermore, $\olW^{\lambda}_{T} = \olW_T$ since $\frac{1}{\lambda^T_i} - 1 \geq 0$
for every $i = 1,...,d$ (by $\epsilon \leq \lambda^T_i < 1$) and $w_T^T(\Q,w) \in \LdoF{+}$ (by $(\Q,w) \in \olW_T$ and by
noting $w_T^T(\Q,w) = w$), and therefore the product is almost surely nonnegative.
By remark~\ref{rem_avar_stepped_full} and lemma~\ref{lemma_avar_stepped} it holds
\[\olW_{t,s}^{\lambda} = \lrcurly{(\Q,w) \in \olW_t: w/\lambda^t \succeq w_t^{s}(\Q,w)}\]
Furthermore, using lemma~\ref{lemma_Wtau_to_Wt} and $w_{t+1}^{s}(\Q^{t+1},w_t^{t+1}(\Q,w))=w_t^{s}(\Q,w)$ it follows
\begin{align*}
H_t^{t+1}\lrparen{\widetilde{\W}^{\lambda}_{t+1}} = &\Big\{(\Q,w) \in \olW_t: \forall s \in \lrcurly{t+1,...,T-1}, \frac{w_{t}^{s}(\Q,w)}{\lambda^{s}} \succeq w_t^{s+1}(\Q,w)\Big\}.
\end{align*}
Since $w_{s}^{s}(\Q,w) = w$ for any time $s$, the recursive form $\widetilde{\W}^{\lambda}_{t}  = H_t^{t+1}\inlrparen{\widetilde{\W}^{\lambda}_{t+1}} \cap \olW^{\lambda}_{t,t+1}$ is proven.
$\widetilde{\W}_t^{\lambda} \neq \emptyset$ holds since $(\P,w) \in \widetilde{\W}_t^{\lambda}$ for any $w \in \LdoF{t,+} \backslash \{0\}$.  This is because $(\P,w) \in \olW_t$ and for any $s \in \{t,...,T\}$ it follows $w_t^{s}(\P,w) = w$ and
\[\E{\transp{w/\lambda^{s} - w} Z} \geq 0\]
for every $Z \in \LdiF{+}$.

Finally, $w_{t}^{s}(\Q,w)/\lambda^{s} - w_{t}^{s+1}(\Q,w) \in \LdoF{s+1,+}$ if and only if it is componentwise nonnegative, i.e.
$w_{t}^{s}(\Q,w)_i \inlrparen{\frac{1}{\lambda^{s}_i} - \bar{\xi}_{s,s+1}(\Q_i)} \geq 0$ almost surely for every $i = 1,...,d$.
Since $(\Q,w) \in \olW_t$ we know $w_{t}^{s}(\Q,w)_i \geq 0$, therefore $(\Q,w) \in \widetilde{\W}_t^{\lambda}$ if and only if $\P\inlrparen{\bar{\xi}_{s,s+1}(\Q_i) \leq \frac{1}{\lambda^{s}_i} \text{ or } w_{t}^{s}(\Q,w)_i = 0} = 1$ for every $i \in \{1,...,d\}$.  Notice that for any $\omega \in \Omega$ we have $w_t^{s}(\Q,w)_i[\omega] = 0$ if and only if $w_i[\omega] = 0$ or $\bar{\xi}_{t,r}(\Q_i)[\omega] = 0$ for some time $r \in (t,s]$, but $\bar{\xi}_{t,r}(\Q_i)[\omega] = 0$ implies $\bar{\xi}_{s,s+1}(\Q_i)[\omega] = 1 \leq \frac{1}{\lambda^{s}_i[\omega]}$.  Thus we recover the final form
\begin{align*}
\widetilde{\W}_t^{\lambda} &= \Big\{(\Q,w) \in \olW_t: \forall s \in \lrcurly{t,...,T-1}, \P\lrparen{\bar{\xi}_{s,s+1}(\Q_i) \leq \frac{1}{\lambda^{s}_i} \text{ or } w_i = 0} = 1 \; \forall i \in \{1,...,d\}\Big\}.
\end{align*}
\qed\end{proof}

\subsection{Entropic risk measure}
\label{sec_entropic}

The set-valued entropic risk measure was studied in~\cite{AHR13} in a single period static framework.  We will present a
dynamic version of the entropic risk measure.  In
example~\ref{ex_entropic}, we presented and worked with the restrictive entropic risk measure only.

For the purposes of this section let $p
= +\infty$ and $q = 1$, and consider the weak* topology. Let $M_t = \LdiF{t}$ for all times $t$. Further, consider parameters $\lambda^t \in \LdiF{t,++}$ with
$\lambda^t_i \geq \epsilon$ for some $\epsilon > 0$ for every index $i$, and let $C_t \in
\mathcal{G}(\LdiF{t};\LdiF{t,+})$ with $0 \in C_t$ and $C_t \cap \LdiF{t,--} = \emptyset$. The set $C_t$ will model the set of acceptable expected utilities, thus $C_t = \LdiF{t,+}$ is the most restrictive (conservative) choice.

The dynamic entropic risk measure with parameters $\lambda^t$ and $C_t$ can then be defined by
\begin{equation*}
R_t^{ent}(X;\lambda^t,C_t) := \lrcurly{u \in \LdiF{t}: \Et{u_t(X + u)}{t} \in C_t }
\end{equation*}
$u_t(x) =\transp{u_{t,1}(x_1),...,u_{t,d}(x_d)}$ for any $x \in \R^d$ and $u_{t,i}(z)=\frac{1-e^{-\lambda^t_i z}}{\lambda^t_i}$ for $z\in\R$ and $i=1,...,d$.

It can be shown
in analogy to propositions~4.4 and~5.1 of~\cite{AHR13} that the entropic risk measure is conditionally convex, closed and equal to
\begin{equation}\label{entpoint}
	R_t^{ent}(X;\lambda^t,C_t) = \rho_t^{ent}(X;\lambda^t) + \tilde{C}_t(\lambda^t,C_t)
	\end{equation}
for any $X \in \LdiF{}$, where
\[\rho_t^{ent}(X;\lambda^t) = \frac{\log\lrparen{\Et{\exp(-\diag{\lambda^t}X)}{t}}}{\lambda^t}\]
and
\[\tilde{C}_t(\lambda^t,C_t) = -\frac{\log\lrsquare{\lrparen{\1 - \diag{\lambda^t}C_t} \cap \LdiF{t,++}}}{\lambda^t},\]
with $\1 = \transp{1,...,1} \in \R^d$ and the exponential and logarithm are taken componentwise for a vector and elementwise for a set, e.g. $\exp(z) =
\transp{\exp(z_1),...,\exp(z_d)}$ for any $z \in \R^d$.

The dual form of the dynamic entropic risk measure can be deduced as follows.  This is a trivial extension from the work in~\cite{AHR13}, so we will omit the proof in this paper.
\begin{lemma}
\label{lemma_entropic_dual}
Let $0 \leq t < s \leq T$.  The minimal stepped penalty functions of the stepped entropic risk measure are given by
\begin{align*}
-\alpha_{t,s}^{ent}(\Q,w;\lambda^t,C_t) & := -\frac{\hat{H}_{t,s}(\Q|\P)}{\lambda^t} + \tilde{C}_t(\lambda^t,C_t) +
\Gamma_t(w),\\
-\beta_{t,s}^{ent}(\Q,w;\lambda^t,C_t) &:= -\frac{\hat{H}_{t,s}(\Q|\P)}{\lambda^t} + \tilde{C}_t(\lambda^t,C_t) + G_t(w)
\end{align*}
for any $(\Q,w) \in \olW_t$ where \[\hat{H}_{t,s}(\Q|\P) := \EQt{\log(\xi_{t,s}(\Q))}{t}.\]

The dual representation of the entropic risk measure is given by \eqref{convex_dual} with minimal penalty function $-\beta_t^{ent} := -\beta_{t,T}^{ent}$. Note that
$\hat{H}_{t,T}(\Q|\P) = \inEQt{\log(\dQdP)}{t}$ is the conditional relative entropy.
\end{lemma}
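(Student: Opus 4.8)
The plan is to compute the stepped minimal penalty functions $-\alpha_{t,s}^{ent}$ and $-\beta_{t,s}^{ent}$ directly from their definitions~\eqref{stepped_beta} (and its conditional analogue), following the static computation of~\cite{AHR13} with all deterministic objects replaced by their $\Ft{t}$-conditional counterparts. First I would identify the stepped acceptance set. From the point representation~\eqref{entpoint}, $0 \in R_t^{ent}(X;\lambda^t,C_t)$ if and only if $-\rho_t^{ent}(X;\lambda^t) \in \tilde{C}_t(\lambda^t,C_t)$, and unwinding the definitions of $\rho_t^{ent}$ and $\tilde{C}_t$ this says exactly that $\Et{\exp(-\diag{\lambda^t}X)}{t} = \1 - \diag{\lambda^t}c$ for some $c \in C_t$ (the value being automatically strictly positive). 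Hence
\[A_{t,s}^{ent} = \lrcurly{X \in M_{s}: \exists c \in C_t, \; \Et{\exp(-\diag{\lambda^t}X)}{t} = \1 - \diag{\lambda^t}c},\]
which is $\Ft{t}$-decomposable because $R_t^{ent}$ is conditionally convex.

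Next, since every $X \in M_{s}$ is $\Ft{s}$-measurable, the tower property together with $\inEt{\bar{\xi}_{s,T}(\Q_i)}{s} = 1$ gives $\trans{w}\EQt{X}{t} = \inEt{\trans{w_t^s(\Q,w)}X}{t}$, so — exactly as in the proof of corollary~\ref{cor_conditional_dual}, using $\Ft{t}$-decomposability and \cite[theorem 1]{Y85} to interchange essential infimum and (conditional) expectation — the penalty functions reduce to $-\alpha_{t,s}^{ent}(\Q,w) = \lrcurly{u \in M_t : \trans{w}u \geq \essinf_{X \in A_{t,s}^{ent}} \trans{w_t^s(\Q,w)}X}$, and likewise $-\beta_{t,s}^{ent}$ with $\E{\cdot}$ in place of $\trans{w}(\cdot)$.

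For fixed $c \in C_t$ the remaining minimization decouples over coordinates into the scalar problems of minimizing $\inEt{w_i\,\bar{\xi}_{t,s}(\Q_i)\,X_i}{t}$ over $\Ft{s}$-measurable $X_i$ subject to $\inEt{\exp(-\lambda^t_i X_i)}{t} = 1 - \lambda^t_i c_i$. A conditional Lagrangian argument — the candidate minimizer being $X_i^{\ast} = \frac{1}{\lambda^t_i}\log\frac{\eta_i \lambda^t_i}{w_i \bar{\xi}_{t,s}(\Q_i)}$ on $\lrcurly{w_i \bar{\xi}_{t,s}(\Q_i) > 0}$ with the $\Ft{t}$-measurable multiplier $\eta_i$ pinned down by the constraint — gives the value $-\frac{w_i}{\lambda^t_i}\,\hat{H}_{t,s}(\Q|\P)_i - \frac{w_i}{\lambda^t_i}\log(1 - \lambda^t_i c_i)$, using $\inEt{\bar{\xi}_{t,s}(\Q_i)}{t} = 1$ and $\hat{H}_{t,s}(\Q|\P)_i = \inEt{\bar{\xi}_{t,s}(\Q_i)\log\bar{\xi}_{t,s}(\Q_i)}{t}$. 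Summing over $i$ and then optimizing over $c \in C_t$ — noting that $-\frac{\log(\1 - \diag{\lambda^t}c)}{\lambda^t}$ ranges exactly over $\tilde{C}_t(\lambda^t,C_t)$ — yields $\essinf_{X \in A_{t,s}^{ent}}\trans{w_t^s(\Q,w)}X = -\trans{w}\frac{\hat{H}_{t,s}(\Q|\P)}{\lambda^t} + \essinf_{y \in \tilde{C}_t(\lambda^t,C_t)}\trans{w}y$. Substituting this back and using the $\Ft{t}$-decomposability of $\tilde{C}_t$, the two half-spaces are identified with the closures of $-\frac{\hat{H}_{t,s}(\Q|\P)}{\lambda^t} + \tilde{C}_t(\lambda^t,C_t) + \Gamma_t(w)$ and $-\frac{\hat{H}_{t,s}(\Q|\P)}{\lambda^t} + \tilde{C}_t(\lambda^t,C_t) + G_t(w)$ respectively, which is the claim. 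The last assertion is then immediate: $R_t^{ent}$ is closed and convex, so theorem~\ref{thm_probability_equal} gives~\eqref{convex_dual} with $-\beta_t^{ent} = -\beta_{t,T}^{ent}$ (since $A_{t,T}^{ent} = A_t^{ent}$), and for $(\Q,w) \in \olW_t$ the constraint $\Q = \P|_{\Ft{t}}$ forces $\xi_{t,T}(\Q) = \dQdP$, so that $\hat{H}_{t,T}(\Q|\P) = \EQt{\log(\dQdP)}{t}$ is the conditional relative entropy.

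The main obstacle is the scalar conditional optimization together with its measure-theoretic bookkeeping: verifying that $X_i^{\ast}$ is admissible (in particular $\Ft{s}$-measurable and such that $\inEt{\exp(-\lambda^t_i X_i^{\ast})}{t}$ has the required form), handling the exceptional set $\lrcurly{w_i \bar{\xi}_{t,s}(\Q_i) = 0}$ on which the candidate degenerates, and dealing with the case $\hat{H}_{t,s}(\Q|\P) \notin \LdiF{t}$, in which the penalty function is simply $M_t$. All of these are precisely the points treated in the static setting in~\cite{AHR13}, so the argument here is a conditional rerun of theirs; the only genuinely new ingredients are the tower-property identities for $\xi_{t,s}(\Q)$ and the use of \cite[theorem 1]{Y85} when passing between the conditional ($\alpha$) and unconditional ($\beta$) versions.
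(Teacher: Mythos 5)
Your argument is correct and follows exactly the route the paper intends: the paper omits the proof of lemma~\ref{lemma_entropic_dual} entirely, calling it a ``trivial extension'' of the static computation in~\cite{AHR13}, and your proof is precisely the conditional rerun of that computation (identification of the stepped acceptance set from~\eqref{entpoint}, reduction via the tower property and the decomposability/interchange argument of corollary~\ref{cor_conditional_dual}, the coordinatewise conditional Lagrangian giving the value $-\frac{w_i}{\lambda^t_i}\hat{H}_{t,s}(\Q|\P)_i - \frac{w_i}{\lambda^t_i}\log(1-\lambda^t_i c_i)$, and the final optimization over $C_t$). The delicate points you flag at the end --- admissibility of $X_i^{\ast}$, the set where $w_i\bar{\xi}_{t,s}(\Q_i)=0$, and unbounded conditional entropy --- are indeed the only places requiring care, and they are resolved as you indicate.
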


Let $C_t = \LdiF{t,+}$ almost surely for all times $t$ and consider a constant risk aversion level $\lambda \in \R^d_{++}$.
It can be seen that $\tilde{C}_t(\lambda,\LdiF{t,+}) = \LdiF{t,+}$, therefore it immediately follows that
$(R^{ent}_t(\cdot;\lambda,C_t))_{t = 0}^T$ is normalized.  Therefore the results from example~\ref{ex_entropic} all follow
trivially.

%%%%%%%%%%%%%%%%%%%%%%%%%%%%%%%%%%%%%%%%
%%%%%%%%%%%%%%%%%%%%%%%%%%%%%%%%%%%%%%%%
\appendix\normalsize
\label{sec_appendix}

%%%%%%%%%%%%%%%%%%%%%%%%%%%%%%%%%%%%%%%%%%%%%%%%%%%%
\section{On the relationship of dual variables at different times}
\label{sec_dualvar}
In considering how closed convex (and coherent) risk measures relate through time we must consider how the sets of dual variables
relate.
In the following lemma we provide such a relationship between elements of $\olW_t$ and elements of $\olW_{s}$ for any times $t,s$ with
$t \leq s$.  In fact we define a mapping on $\olW_t$ which is equivalent (in the set-valued replacement for continuous linear functionals) in
$\olW_{s}$.  In the scalar framework this type of property is not needed since the set $\olW_t$ can be simplified to any $\Q \ll \P$ for any time $t$ (where $\Q = \P|_{\Ft{t}}$).

\begin{lemma}
\label{lemma_Wtau_to_Wt}
For any choice of times $t$ and $s>t$ it follows that:
\begin{enumerate}
\item $\inlrcurly{(\Q^{s},w_t^{s}(\Q,w)): (\Q,w) \in \olW_t} \subseteq \olW_{s}$,
\item for every $(\R,v) \in \olW_{s}$ there exists $(\Q,w) \in \olW_t$ such that $F_{(\R,v)}^s = F_{(\Q^{s},w_t^{s}(\Q,w))}^s$.
\end{enumerate}
\end{lemma}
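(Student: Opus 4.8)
The plan is to prove part~1 by a direct verification that $(\Q^s,w_t^s(\Q,w))$ meets each defining condition of $\olW_s$, and part~2 by an explicit construction: given $(\R,v)\in\olW_s$, I would build a measure $\Q$ that agrees with $\P$ on $\Ft{t}$ by loading the ``time-$s$ information'' of $(\R,v)$ into the density $\dQdP$, with the second dual variable obtained by conditioning. Throughout I will use the identities $\bar\xi_{t,T}(\Q_i)=\bar\xi_{t,s}(\Q_i)\bar\xi_{s,T}(\Q_i)$ and $\inEt{\bar\xi_{t,s}(\Q_i)}{t}=1$ recorded in section~\ref{subsection_dual}.

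For part~1, fix $(\Q,w)\in\olW_t$ and write $v:=w_t^s(\Q,w)=\diag{w}\xi_{t,s}(\Q)$. The key computation is
\[
v=\Et{w_t^T(\Q,w)}{s},
\]
obtained by expanding $\xi_{t,T}(\Q)=\diag{\xi_{t,s}(\Q)}\xi_{s,T}(\Q)$, pulling the $\Ft{s}$-measurable factors $w$ and $\xi_{t,s}(\Q)$ out of $\Et{\cdot}{s}$, and using $\inEt{\bar\xi_{s,T}(\Q_i)}{s}=1$. Since $w_t^T(\Q,w)\in\LdqF{+}$ this gives $v\in\LdqF{s,+}\subseteq\plus{M_{s,+}}$; and $v\notin\prp{M_s}$ because, choosing an index $j\le m$ with $\E{w_j}>0$ (which exists since $w\in\plus{M_{t,+}}\setminus\prp{M_t}$), one has $v_j\ge 0$ and $\E{v_j}=\E{w_j\inEt{\bar\xi_{t,s}(\Q_j)}{t}}=\E{w_j}>0$. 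The remaining requirements are routine: $\Q^s\in\mathcal M$ and $\Q^s=\P|_{\Ft{s}}$ both follow from $\Et{d\Q^s_i/d\P}{s}=\inEt{\bar\xi_{s,T}(\Q_i)}{s}=1$, and that same fact yields $\xi_{s,T}(\Q^s)=\xi_{s,T}(\Q)$, hence
\[
w_s^T(\Q^s,v)=\diag{v}\xi_{s,T}(\Q)=\diag{w}\xi_{t,T}(\Q)=w_t^T(\Q,w)\in\LdqF{+}.
\]
The last displayed identity $w_s^T(\Q^s,w_t^s(\Q,w))=w_t^T(\Q,w)$ (valid for every $(\Q,w)\in\olW_t$, and already used in the proof of theorem~\ref{thm_mptc_stable}) will be reused in part~2.

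For part~2, fix $(\R,v)\in\olW_s$ and set $Y:=w_s^T(\R,v)=\diag{v}\xi_{s,T}(\R)\in\LdqF{+}$. Unwinding the conditional expectation gives $F_{(\R,v)}^s[X]=\lrcurly{u\in M_s:\E{\trans{Y}X}\le\E{\trans{v}u}}$, so $F_{(\R,v)}^s$ depends on $(\R,v)$ only through $Y$, and moreover $v=\Et{Y}{s}$. Thus it suffices to find $(\Q,w)\in\olW_t$ with $w_t^s(\Q,w)=v$ and $w_t^T(\Q,w)=Y$, for then $w_s^T(\Q^s,w_t^s(\Q,w))=w_t^T(\Q,w)=Y=w_s^T(\R,v)$ by the part~1 identity, whence $F_{(\Q^s,w_t^s(\Q,w))}^s=F_{(\R,v)}^s$. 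I would set $w:=\Et{Y}{t}\in\LdqF{t,+}$ and define $\Q$ componentwise by $\dQidP:=Y_i/w_i$ on $\{w_i>0\}$ and $\dQidP:=1$ on $\{w_i=0\}$ (unambiguous, since $Y_i\ge 0$ and $\inEt{Y_i}{t}=w_i=0$ force $Y_i=0$ almost surely on $\{w_i=0\}$). One checks $\Et{\dQidP}{t}=1$, which at once gives $\Q\in\mathcal M$, $\Q=\P|_{\Ft{t}}$, $\bar\xi_{t,T}(\Q_i)=\dQidP$, and $\bar\xi_{t,s}(\Q_i)=\Et{\dQidP}{s}$; consequently $w_t^T(\Q,w)_i=w_i\dQidP=Y_i$ and $w_t^s(\Q,w)_i=\Et{w_i\dQidP}{s}=\Et{Y_i}{s}=v_i$, as desired, and in particular $w_t^T(\Q,w)=Y\in\LdqF{+}$. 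Finally $w\notin\prp{M_t}$ since $\E{w_j}=\E{Y_j}=\E{v_j}>0$ for a suitable $j\le m$ (using $v\notin\prp{M_s}$), so $(\Q,w)\in\olW_t$ and the construction is complete.

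The step I expect to be the main obstacle is the construction in part~2, and the reason the naive choice $\Q=\R$ fails: the second dual variable $w$ must be $\Ft{t}$-measurable whereas $v$ is only $\Ft{s}$-measurable, so $(\R,v)$ cannot simply be reused. The device that removes the difficulty is to compress the whole functional $F_{(\R,v)}^s$ into the single nonnegative $\Ft{}$-measurable vector $Y=w_s^T(\R,v)$, to let $w=\Et{Y}{t}$ absorb the $\Ft{t}$-measurable part, and to let the density $Y/w$ of $\Q$ carry the remainder; the only delicate point — division by a $w_i$ that may vanish — is harmless because $Y_i$ vanishes wherever $w_i$ does.
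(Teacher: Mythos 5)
Your proof is correct and takes essentially the same route as the paper's: part~1 is the same direct verification (organized around the identity $w_{s}^{T}(\Q^{s},w_t^{s}(\Q,w))=w_t^{T}(\Q,w)$), and part~2 builds exactly the same pair, namely $w=\Et{Y}{t}$ and the density $Y_i/\Et{Y_i}{t}$ (set to $1$ where the denominator vanishes) with $Y=w_{s}^{T}(\R,v)$. The only cosmetic difference is that you re-derive in place the reduction of $F_{(\R,v)}^{s}$ to the single vector $Y$, which the paper obtains by citing lemma~4.5 of \cite{FR12}.
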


\begin{proof}
\begin{proofenum}
\item $\inlrcurly{(\Q^{s},w_t^{s}(\Q,w)): (\Q,w) \in \olW_t} \subseteq \olW_{s}$ if and only if for every pair $(\Q,w) \in \olW_t$ it follows that $w_t^{s}(\Q,w) \in \plus{M_{s,+}} \backslash \prp{M_{s}}$ and $w_{s}^T(\Q^{s},w_t^{s}(\Q,w)) \in \LdqF{+}$.
    \begin{proofenum}
    \item Let $(\Q,w) \in \olW_t$. Show $w_t^{s}(\Q,w) \in \plus{M_{s,+}} \backslash \prp{M_{s}}$:
        \begin{proofenum}
        \item Let $m_{s} \in M_{s,+}$, then \[\E{\trans{w_t^{s}(\Q,w)} m_{s}} = \E{\trans{w} \EQt{m_{s}}{t}} \geq 0\] since $\inEQt{m_{s}}{t} \in M_{t,+}$ by $M_t \supseteq M_{s} \cap \LdpF{t}$ and $M_s = \LdpK{p}{s}{M}$.
        \item Since  $(\Q,w) \in \olW_t$, in particular since $w\notin\prp{M_{t}}$ there exists $m_t \in M_t\subseteq M_{s}$ such that
$\inE{\trans{w} m_t} \neq 0$. Then, \[\E{\trans{w_t^{s}(\Q,w)} m_t} = \E{\trans{w} \EQt{m_t}{t}} = \E{\trans{w} m_t} \neq 0.\]
        \end{proofenum}
    \item $w_{s}^T(\Q^{s},w_t^{s}(\Q,w)) = w_t^T(\Q,w) \in \LdqF{+}$ by $(\Q,w) \in \olW_t$.
    \end{proofenum}

\item By lemma 4.5 in~\cite{FR12}, for every $(\R,v) \in \olW_{s}$ there exists a $(Y,\bar{v})$ such that $Y
\in \LdqF{+}$, $\bar{v} \in \inlrparen{\inEt{Y}{s} + \prp{M_{s}}} \backslash \prp{M_{s}}$ such that
$F_{(\R,v)}^s = \tilde{F}_{(Y,\bar{v})}^s$.  And for every $(Y,\bar{v})$ with $Y \in
\LdqF{+}$ and $\bar{v} \in \inlrparen{\inEt{Y}{s} + \prp{M_{s}}} \backslash \prp{M_{s}}$ there
exists $(\hat{\Q},w_{s})\in \olW_{s}$ such that $\tilde{F}_{(Y,\bar{v})}^{s} =
F_{(\hat{\Q},w_{s})}^{s}$ by setting $w_{s} = \inEt{Y}{s}$ and
\[\bar{\xi}_{r,s}^i[\omega] = \begin{cases}\frac{\Et{Y_i}{s}(\omega)}{\Et{Y_i}{r}(\omega)} & \text{if
} \Et{Y_i}{r}(\omega) > 0\\ 1 & \text{else} \end{cases}\]
for every $\omega \in \Omega$, and $\frac{d\hat{\Q}_i}{d\P} = \bar{\xi}_{s,T}^i$.  Define $\Q \in \mathcal{M}$ by $\dQidP = \bar{\xi}_{t,T}^i$, thus $\Q^{s} = \hat{\Q}$.
    Therefore it remains to show that there exists a $w_t \in \LdqF{t}$ such that $w_{s} = w_t^{s}(\Q,w_t)$ and $(\Q,w_t) \in \olW_t$.
Let $w_t := \inEt{w_{s}}{t}=\inEt{Y}{t}$.
    \begin{proofenum}
    \item Show $w_{s} = w_t^{s}(\Q,w_t)$, i.e.
        show $(w_{s})_i(\omega) = (w_t)_i(\omega) \bar{\xi}_{t,s}^i(\omega)$ for every index $i = 1,...,d$ and almost every $\omega \in
\Omega$.  We know if $\inEt{Y_i}{t}(\omega) = 0$ then $(w_t)_i(\omega) = 0$ and $(w_{s})_i(\omega) = 0$ and thus $(w_{s})_i(\omega) = (w_t^{s}(\Q,w_t))_i[\omega]$.
        If $\inEt{Y_i}{t}(\omega) > 0$ then
        \begin{align*}
        w_t^{s}(\Q,w_t)_i[\omega] &= \Et{Y_i}{t}(\omega) \frac{\Et{Y_i}{s}(\omega)}{\Et{Y_i}{t}(\omega)}
        = \Et{Y_i}{s}(\omega) = (w_{s})_i(\omega).
        \end{align*}
    \item Show $(\Q,w_t) \in \olW_t$
        \begin{proofenum}
        \item Show $w_t \in \plus{M_{t,+}} \backslash \prp{M_t}$.
            \begin{proofenum}
            \item Let $m_t \in M_{t,+}$, then $\inE{\trans{w_t} m_t} = \inE{\trans{\inEt{w_{s}}{t}} m_t} = \inE{\trans{w_{s}} m_t} \geq 0$ by the tower property, $M_{t,+} \subseteq M_{s,+}$ and $w_{s}\in \plus{M_{s,+}} $.
            \item Since  $(\Q^{s},w_{s}) \in \olW_{s}$, in particular since $w_{s}\notin\prp{M_{s}}$ there exists $m_{s} \in M_{s}$
such that $\inE{\trans{w_{s}} m_{s}} \neq 0$.  Then $\inEQt{m_{s}}{t} \in M_t$ by $M_t \supseteq M_{s} \cap \LdpF{t}$ and $M_s = \LdpK{p}{s}{M}$.             Therefore, $\inE{\trans{w_t} \inEQt{m_{s}}{t}} = \inE{\trans{w_t^{s}(\Q,w_t)} m_{s}} = \inE{\trans{w_{s}} m_{s}} \neq 0$.
            \end{proofenum}
        \item $w_t^{T}(\Q,w_t) = w_{s}^T(\Q^{s},w_t^{s}(\Q,w_t)) = w_{s}^T(\hat{\Q},w_{s}) \in \LdqF{+}$.
        \end{proofenum}
    \end{proofenum}
\end{proofenum}
\qed\end{proof}

The following corollary of lemma~\ref{lemma_Wtau_to_Wt} uses the above result applied to penalty functions instead of the functionals $ F_{(\cdot,\cdot)}[\cdot]$.

\begin{corollary}
\label{cor_Wtau_to_Wt_penalty}
For any $(\R,v) \in \olW_{s}$ there exists $(\Q,w) \in \olW_t$ such that
\[
-\beta_{s}^{\min}(\R,v) = -\beta_{s}^{\min}(\Q^{s},w_t^{s}(\Q,w))
\]
for any times $0 \leq t < s \leq T$.
\end{corollary}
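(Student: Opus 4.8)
The plan is to reduce the statement to Lemma~\ref{lemma_Wtau_to_Wt}, exploiting the observation that the minimal stepped penalty function depends on its dual argument only through the associated set-valued functional. Explicitly, by \eqref{min penalty},
\[
-\beta_{s}^{\min}(\R,v) = \cl\bigcup_{Z \in A_{s}} \lrparen{\ERt{Z}{s} + G_s(v)} \cap M_{s} = \cl\bigcup_{Z \in A_{s}} F_{(\R,v)}^{s}[Z],
\]
so $-\beta_{s}^{\min}(\R,v)$ is completely determined by the restriction of $F_{(\R,v)}^{s}$ to $A_{s}$ (equivalently, as the computation in the proof of Theorem~\ref{thm_probability_equal} shows, by the single vector $w_{s}^{T}(\R,v) = \diag{v}\xi_{s,T}(\R)$).

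First I would invoke part~(2) of Lemma~\ref{lemma_Wtau_to_Wt}: for the given $(\R,v) \in \olW_{s}$ there is a pair $(\Q,w) \in \olW_{t}$ with $F_{(\R,v)}^{s} = F_{(\Q^{s},w_{t}^{s}(\Q,w))}^{s}$ as set-valued maps on $\LdpF{}$. By part~(1) of the same lemma, $(\Q^{s},w_{t}^{s}(\Q,w)) \in \olW_{s}$, so $-\beta_{s}^{\min}(\Q^{s},w_{t}^{s}(\Q,w))$ is well defined and is given by the same formula as above with $(\R,v)$ replaced by $(\Q^{s},w_{t}^{s}(\Q,w))$. Since the two functionals agree pointwise on all of $\LdpF{}$, in particular on every $Z \in A_{s}$, the sets $\bigcup_{Z \in A_{s}} F_{(\R,v)}^{s}[Z]$ and $\bigcup_{Z \in A_{s}} F_{(\Q^{s},w_{t}^{s}(\Q,w))}^{s}[Z]$ coincide, hence so do their closures, which gives $-\beta_{s}^{\min}(\R,v) = -\beta_{s}^{\min}(\Q^{s},w_{t}^{s}(\Q,w))$, as claimed.

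I do not expect any genuine obstacle here: all the work is carried by Lemma~\ref{lemma_Wtau_to_Wt}. The only point deserving a moment's care is that the equality of the two $F$-functionals must be used as an equality of functions on the whole space $\LdpF{}$ (which is precisely what Lemma~\ref{lemma_Wtau_to_Wt}(2) supplies), after which restricting to $A_{s}$ and applying $\cl\bigcup$ is immediate; one should also note that $(\Q^{s},w_{t}^{s}(\Q,w)) \in \olW_{s}$, so that the right-hand side makes sense, which is Lemma~\ref{lemma_Wtau_to_Wt}(1). The same argument, with $\Gamma_s$ in place of $G_s$ and the conditional functional in place of $F_{(\cdot,\cdot)}^{s}$, would establish the corresponding identity for $-\alpha_{s}^{\min}$.
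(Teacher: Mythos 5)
Your argument is correct and is essentially identical to the paper's own proof: both rewrite $-\beta_{s}^{\min}(\R,v)$ as $\cl\bigcup_{Z \in A_{s}} F_{(\R,v)}^{s}[Z]$ and then substitute $F_{(\Q^{s},w_t^{s}(\Q,w))}^{s}$ for $F_{(\R,v)}^{s}$ via Lemma~\ref{lemma_Wtau_to_Wt}(2). Your additional remark that Lemma~\ref{lemma_Wtau_to_Wt}(1) guarantees $(\Q^{s},w_t^{s}(\Q,w)) \in \olW_{s}$, so that the right-hand side is well defined, is a point the paper leaves implicit but is entirely in the spirit of its proof.
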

\begin{proof}
\begin{align*}
 -\beta_{s}^{\min}(\R,v) = \cl \bigcup_{Z \in A_{s}} F_{(\R,v)}^{s}[Z]
  = \cl \bigcup_{Z \in A_{s}} F_{(\Q^{s},w_t^{s}(\Q,w))}^{s}[Z] = -\beta_{s}^{\min}(\Q^{s},w_t^{s}(\Q,w)),
\end{align*}
where the second equation is a result of lemma~\ref{lemma_Wtau_to_Wt}.
\qed\end{proof}

Lemma~\ref{lemma_Wtau_to_Wt} and corollary~\ref{cor_Wtau_to_Wt_penalty} show that for any times $t \leq s$ and for a given penalty
function $-\beta_{s}^{\min}$ the set of dual variables $\inlrcurly{(\Q^{s},w_t^{s}(\Q,w)): (\Q,w) \in \olW_t}$ defines the same closed and convex risk measure at time $s$  as the set of dual variables $\olW_{s}$, that is
\begin{align*}
R_{s}(X)& = \bigcap_{(\Q,w) \in \olW_t} \lrsquare{-\beta_{s}^{\min}(\Q^{s},w_t^{s}(\Q,w)) + \lrparen{\EQt{-X}{s} +
G_{s}\lrparen{w_t^{s}(\Q,w)}}} \cap M_{s}.
\end{align*}

The following lemma, about the expectation of minimal penalty functions, is an extension of lemma~2.6 in~\cite{FP06}.  As a set-valued
operation, this theorem gives a set-valued version of when the conditional expectation of an infimum is equivalent to the infimum of
the conditional expectation.
The proof of the lemma is a simplified version of the proof of lemma 2.6 in~\cite{FP06} since the sets $\{\inEQt{X}{t} + G_t(w)\}$ are
shifted half-spaces for any $X \in A_{t}$ and a fixed $(\Q,w) \in \olW_t$ and thus are completely ordered, in contrast to the scalar
case, where the points $\inEQt{X}{t} $ under consideration are not completely ordered.

\begin{lemma}
\label{lemma_exp_penalty}
For any times $0 \leq t < s \leq T$, and if $R_t$ is a closed convex risk measure, then for any $(\Q,w) \in \olW_t$, it follows that
\[\EQt{-\beta_{s}^{\min}(\Q^{s},w_t^{s}(\Q,w))}{t} = \cl \bigcup_{X \in A_{s}} \lrparen{\EQt{X}{t} + G_t(w)} \cap M_t.\]
\end{lemma}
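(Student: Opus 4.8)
The plan is to show that both sides of the asserted identity equal the closed half-space $\lrcurly{m \in M_t : \gamma \le \E{\trans{w}m}}$ of $M_t$, where $\gamma := \inf_{X \in A_{s}} \E{\trans{w}\EQt{X}{t}} \in [-\infty,\infty)$ is finite from above because $A_{s} \ne \emptyset$. Throughout put $v := w_t^{s}(\Q,w) = \diag{w}\xi_{t,s}(\Q)$. By lemma~\ref{lemma_Wtau_to_Wt} one has $(\Q^{s},v) \in \olW_{s}$; in particular $v \notin \prp{M_{s}}$, so $-\beta_{s}^{\min}(\Q^{s},v)$ is a well-defined element of $\mathcal{G}(M_{s};M_{s,+})$, and since $\bar{\xi}_{s,T}(\Q^{s}_i) = \bar{\xi}_{s,T}(\Q_i)$ the $\Q^{s}$-conditional expectation given $\Ft{s}$ coincides with $\EQt{\cdot}{s}$.

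First I would reduce the stepped penalty function. For each $X \in A_{s}$,
\[\lrparen{\EQt{X}{s} + G_{s}(v)} \cap M_{s} = \lrcurly{u \in M_{s} : \E{\trans{v}\EQt{X}{s}} \le \E{\trans{v}u}},\]
so the sets in the union defining $-\beta_{s}^{\min}(\Q^{s},v)$ are half-spaces of $M_{s}$ with the single normal direction $v$; being nested, their union is a half-space up to its boundary, and because $v \notin \prp{M_{s}}$ the closure only fills in the bounding hyperplane, giving
\[-\beta_{s}^{\min}(\Q^{s},v) = \lrcurly{u \in M_{s} : \inf_{X \in A_{s}}\E{\trans{v}\EQt{X}{s}} \le \E{\trans{v}u}}.\]
Using $v = \diag{w}\xi_{t,s}(\Q)$, $\bar{\xi}_{t,T}(\Q_i) = \bar{\xi}_{t,s}(\Q_i)\bar{\xi}_{s,T}(\Q_i)$, and $\inEt{\bar{\xi}_{s,T}(\Q_i)}{s} = 1$, the tower property yields $\E{\trans{v}\EQt{X}{s}} = \E{\trans{w}\EQt{X}{t}}$ for $X \in \LdpF{}$ and $\E{\trans{v}u} = \E{\trans{w}\EQt{u}{t}}$ for every $\Ft{s}$-measurable $u \in \LdpF{s}$. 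Hence $-\beta_{s}^{\min}(\Q^{s},v) = \lrcurly{u \in M_{s} : \gamma \le \E{\trans{w}\EQt{u}{t}}}$.

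Next I would take the image of this set under $\EQt{\cdot}{t}$. Since $\EQt{\cdot}{t}$ maps $M_{s}$ into $M_t$ (conditional expectation preserves the vanishing of the coordinates $m+1,\ldots,d$), we have $\EQt{-\beta_{s}^{\min}(\Q^{s},v)}{t} \subseteq \lrcurly{m \in M_t : \gamma \le \E{\trans{w}m}}$. Conversely, $\Ft{t} \subseteq \Ft{s}$ gives $M_t \subseteq M_{s}$ and $\EQt{m}{t} = m$ for every $m \in M_t$; so any $m \in M_t$ with $\gamma \le \E{\trans{w}m}$ already belongs to $-\beta_{s}^{\min}(\Q^{s},v)$ and equals its own conditional expectation, hence lies in the image. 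Therefore $\EQt{-\beta_{s}^{\min}(\Q^{s},v)}{t} = \lrcurly{m \in M_t : \gamma \le \E{\trans{w}m}}$. Finally, the right-hand side of the claim is handled exactly as in the first step: $\lrparen{\EQt{X}{t} + G_t(w)} \cap M_t = \lrcurly{u \in M_t : \E{\trans{w}\EQt{X}{t}} \le \E{\trans{w}u}}$ are nested half-spaces of $M_t$ with common normal $w \notin \prp{M_t}$, so $\cl \bigcup_{X \in A_{s}} \lrparen{\EQt{X}{t} + G_t(w)} \cap M_t = \lrcurly{u \in M_t : \gamma \le \E{\trans{w}u}}$, which is exactly the set found above.

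The main obstacle is justifying the interchange of the conditional expectation with the closed union that defines $-\beta_{s}^{\min}$. It rests on two features special to this set-valued setting: the sets being unioned are totally ordered half-spaces with one fixed normal, so the union collapses to a single half-space recorded by the scalar $\gamma$; and $M_t \subseteq M_{s}$ with $\EQt{\cdot}{t}$ acting as the identity on $M_t$, so the linear image of that half-space is precisely the corresponding half-space of $M_t$, losing and gaining nothing. The manipulations with the density ratios $\bar{\xi}$ and the tower property are then routine bookkeeping.
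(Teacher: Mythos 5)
Your proof is correct and follows essentially the same route as the paper's: rewrite $-\beta_{s}^{\min}(\Q^{s},w_t^{s}(\Q,w))$ as the single closed half-space of $M_{s}$ determined by the scalar $\inf_{X\in A_{s}}\E{\trans{w}\EQt{X}{t}}$ via the tower property, then observe that its image under $\EQt{\cdot}{t}$ is exactly the corresponding half-space of $M_t$ (using $M_t\subseteq M_{s}$ and that the conditional expectation fixes $M_t$), which coincides with the right-hand side. Your write-up merely makes explicit the total-ordering/closure step that the paper leaves implicit.
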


\begin{proof}
Let $(\Q,w) \in \olW_t$. Then, by lemma~\ref{lemma_Wtau_to_Wt}, $(\Q^{s},w_t^{s}(\Q,w))\in \olW_{s}$. It holds
\begin{align*}
-\beta_{s}^{\min}(\Q^{s},w_t^{s}(\Q,w)) &= \cl \bigcup_{X \in A_{s}} \lrparen{\EQt{X}{s} + G_{s}(w_t^{s}(\Q,w))} \cap M_{s}\\
&= \cl \bigcup_{X \in A_{s}} \lrcurly{u \in M_{s}: \E{\trans{w_t^{s}(\Q,w)}\EQt{X}{s}} \leq \E{\trans{w_t^{s}(\Q,w)}u}}\\
&= \cl \bigcup_{X \in A_{s}} \lrcurly{u \in M_{s}: \E{\trans{w}\EQt{X}{t}} \leq \E{\trans{w}\EQt{u}{t}}}\\
&= \lrcurly{u \in M_{s}: \inf_{X \in A_{s}} \E{\trans{w}\EQt{X}{t}} \leq \E{\trans{w}\EQt{u}{t}}}.
\end{align*}
Taking the conditional expectation on both sides yields
\begin{align*}
\EQt{-\beta_{s}^{\min}(\Q^{s},w_t^{s}(\Q,w))}{t}
&= \Big\{\EQt{u}{t}: u \in M_{s}, \inf_{X \in A_{s}} \E{\trans{w}\EQt{X}{t}} \leq \E{\trans{w}\EQt{u}{t}}\Big\}\\
&= \lrcurly{u \in M_t:\inf_{X \in A_{s}} \E{\trans{w}\EQt{X}{t}} \leq  \E{\trans{w}u}}\\
&= \cl \bigcup_{X \in A_{s}} \lrparen{\EQt{X}{t} + G_t(w)} \cap M_t.
\end{align*}
\qed\end{proof}

One can now show that the $\Q$-conditional expectation (at time $t$) of the positive half-space defined by
$w_t^{s}(\Q,w)$ is given by the positive half-space defined by $w$.

\begin{corollary}
\label{prop_Gw}
Let $0 \leq t < s \leq T$, $\Q \in \mathcal{M}$ where $\Q = \P|_{\Ft{t}}$ and $w \in \LdqF{t}$. Then, \[\EQt{G_{s}(w_t^{s}(\Q,w))}{t} = G_t(w).\]
\end{corollary}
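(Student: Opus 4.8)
The plan is to reduce the set identity to two scalar facts and then check each inclusion by a one-line computation; no separation, minimax, or closure arguments are needed here. The two facts are: (i) for $X \in \LdpF{s}$ one has $\EQt{X}{t} = \Et{\diag{\xi_{t,s}(\Q)}X}{t}$, which follows from $\bar{\xi}_{t,T}(\Q_i) = \bar{\xi}_{t,s}(\Q_i)\bar{\xi}_{s,T}(\Q_i)$ and the tower property together with $\inEt{\bar{\xi}_{s,T}(\Q_i)}{s} = 1$; and (ii) for $v \in \LdpF{t} \subseteq \LdpF{s}$ one has $\EQt{v}{t} = v$, since $v$ is $\Ft{t}$-measurable and $\inEt{\bar{\xi}_{t,s}(\Q_i)}{t} = 1$ (here the hypothesis $\Q = \P|_{\Ft{t}}$ is in fact not even strictly needed, but it is the situation in which the corollary is applied).

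The key computation is that, for any $u \in \LdpF{s}$, since $w_t^{s}(\Q,w) = \diag{w}\xi_{t,s}(\Q)$ and $w$ is $\Ft{t}$-measurable,
\[
\E{\trans{w_t^{s}(\Q,w)}u} = \E{\Et{\trans{w}\diag{\xi_{t,s}(\Q)}u}{t}} = \E{\trans{w}\Et{\diag{\xi_{t,s}(\Q)}u}{t}} = \E{\trans{w}\EQt{u}{t}},
\]
where the middle step pulls the $\Ft{t}$-measurable vector $w$ out of the conditional expectation and the last step is fact (i).

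Given this, both inclusions are immediate. For ``$\subseteq$'': if $v = \EQt{u}{t}$ for some $u \in G_{s}(w_t^{s}(\Q,w))$, then $v \in \LdpF{t}$ (the $\Q$-conditional expectation maps into $\LdpF{t}$) and $\E{\trans{w}v} = \E{\trans{w_t^{s}(\Q,w)}u} \geq 0$, so $v \in G_t(w)$. For ``$\supseteq$'': given $v \in G_t(w)$, set $u := v \in \LdpF{t} \subseteq \LdpF{s}$; by fact (ii) $\EQt{u}{t} = v$, and $\E{\trans{w_t^{s}(\Q,w)}u} = \E{\trans{w}v} \geq 0$, so $u \in G_{s}(w_t^{s}(\Q,w))$ and hence $v = \EQt{u}{t} \in \EQt{G_{s}(w_t^{s}(\Q,w))}{t}$. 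The only point requiring a little care is the componentwise bookkeeping with the density ratios $\bar{\xi}_{t,s}(\Q_i)$ and the measurability used to pull $w$ through the conditional expectation; neither presents a genuine obstacle.
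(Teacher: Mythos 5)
Your proof is correct and rests on exactly the two facts the paper uses: the adjoint identity $\inE{\trans{w_t^{s}(\Q,w)}u} = \inE{\trans{w}\inEQt{u}{t}}$ and the surjectivity of $\inEQt{\cdot}{t}$ from $\LdpF{s}$ onto $\LdpF{t}$ via $\inEQt{v}{t}=v$ for $\Ft{t}$-measurable $v$. The paper obtains the corollary in one line by specializing lemma~\ref{lemma_exp_penalty} (with $M=\R^d$, $A_s=\LdpF{+}$), whose proof is the same computation you carried out directly, so this is essentially the same argument, just unfolded and self-contained.
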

\begin{proof}
This is a special case of lemma~\ref{lemma_exp_penalty} obtained by setting $M = \R^d$ and $A_s = \LdpF{+}$.
\qed\end{proof}

We conclude our discussion on how dual variables across time are related by considering the conditional expectations of the $\alpha_s^{\min}$ and $\Gamma_s$ functions used in the dual representation of conditionally convex risk measures (see corollary~\ref{cor_conditional_dual}).
\begin{lemma}
\label{lemma_exp_cond_penalty}
For any times $0 \leq t < s \leq T$ and if $R_t$ is a closed conditionally convex risk measure, then for any $(\Q,w) \in \olW_t$ with $\Q \in \mathcal{M}^e$, it follows that
\[\cl \EQt{-\alpha_s^{\min}(\Q^s,w_t^s(\Q,w))}{t} = \cl\bigcup_{Z \in A_s} \lrparen{\EQt{Z}{t} + \Gamma_t(w)} \cap M_t.\]
\end{lemma}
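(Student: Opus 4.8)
The plan is to run the argument of lemma~\ref{lemma_exp_penalty} with the conditional half-spaces $\Gamma$ in place of the half-spaces $G$ and essential infima in place of infima of expectations. The new feature --- and the reason the closure is forced onto the left-hand side --- is that under $\EQt{\cdot}{t}$ a conditional half-space in $M_s$ is carried only \emph{into}, not onto, a conditional half-space in $M_t$, so that the naive choice ``$u=v$'' that works in the reverse inclusion in lemma~\ref{lemma_exp_penalty} no longer does.

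First I would assemble the ingredients. By lemma~\ref{lemma_Wtau_to_Wt}, $(\Q^s,w_t^s(\Q,w))\in\olW_s$, and the $\Q^s$-conditional expectation at time $s$ coincides with $\EQt{\cdot}{s}$ because $\xi_{s,T}(\Q^s)=\xi_{s,T}(\Q)$; hence, by the reformulation of the conditional penalty function from corollary~\ref{cor_conditional_dual},
\[-\alpha_s^{\min}(\Q^s,w_t^s(\Q,w))=\lrcurly{u\in M_s:\ \essinf_{X\in A_s}\trans{w_t^s(\Q,w)}\EQt{X}{s}\leq\trans{w_t^s(\Q,w)}u\;\Pas},\]
while the same reformulation (the acceptance set $A_s$ is $\Ft{s}$-decomposable, hence $\Ft{t}$-decomposable, by conditional convexity) identifies the right-hand side of the claim with $\inlrcurly{v\in M_t:\ \essinf_{X\in A_s}\trans{w}\EQt{X}{t}\leq\trans{w}v\;\Pas}$. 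From $w_t^s(\Q,w)=\diag{w}\xi_{t,s}(\Q)$, the tower property and $\xi_{t,s}(\Q)\,\xi_{s,T}(\Q)=\xi_{t,T}(\Q)$ one obtains the identity $\trans{w}\EQt{Z}{t}=\Et{\trans{w_t^s(\Q,w)}Z}{t}$ for every $\Ft{s}$-measurable $Z$ (in particular $\EQt{Z}{t}\in M_t$ whenever $Z\in M_s$); combining it with \cite[theorem 1]{Y85}, applied to the $\Ft{t}$-decomposable family $\inlrcurly{\trans{w_t^s(\Q,w)}\EQt{X}{s}:X\in A_s}$ to exchange $\Et{\cdot}{t}$ with the essential infimum, yields $\Et{\essinf_{X\in A_s}\trans{w_t^s(\Q,w)}\EQt{X}{s}}{t}=\essinf_{X\in A_s}\trans{w}\EQt{X}{t}$.

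The inclusion ``$\subseteq$'' is then immediate: if $u\in-\alpha_s^{\min}(\Q^s,w_t^s(\Q,w))$ then $\EQt{u}{t}\in M_t$ and $\trans{w}\EQt{u}{t}=\Et{\trans{w_t^s(\Q,w)}u}{t}\geq\Et{\essinf_{X}\trans{w_t^s(\Q,w)}\EQt{X}{s}}{t}=\essinf_{X}\trans{w}\EQt{X}{t}$, so $\EQt{u}{t}$ lies in the right-hand side; since the latter is closed (a conditional half-space intersected with the closed set $M_t$), it also contains $\cl\EQt{-\alpha_s^{\min}(\Q^s,w_t^s(\Q,w))}{t}$.

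The inclusion ``$\supseteq$'' is the main obstacle. Given $v\in M_t$ with $\trans{w}v\geq\essinf_{X}\trans{w}\EQt{X}{t}\;\Pas$ --- a relation holding only at the coarser time $t$ --- I would construct $u\in M_s$ satisfying the finer relation $\trans{w_t^s(\Q,w)}u\geq\essinf_{X}\trans{w_t^s(\Q,w)}\EQt{X}{s}\;\Pas$ with $\EQt{u}{t}=v$, by modifying $v$ only in a strictly positive eligible direction --- available since $w\notin\prp{M_t}$ and, because $\Q\in\mathcal{M}^e$, $\bar{\xi}_{t,s}(\Q_i)>0\;\Pas$ --- and choosing the size of the modification so that $\EQt{\cdot}{t}$ returns exactly $v$ while the constraint is met. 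This modification is explicit but need not lie in $L^p$, which is precisely where the closure on the left enters: truncating $u$ at level $N$, i.e. replacing it by a fixed element of $-\alpha_s^{\min}(\Q^s,w_t^s(\Q,w))$ on the set (shrinking to a null set as $N\to\infty$) where $u$ is too large, produces $u_N\in M_s\cap\bigl(-\alpha_s^{\min}(\Q^s,w_t^s(\Q,w))\bigr)$ with $\EQt{u_N}{t}\to v$, so $v\in\cl\EQt{-\alpha_s^{\min}(\Q^s,w_t^s(\Q,w))}{t}$. The truncation step, and the degenerate case in which the essential infimum equals $-\infty$, require the most care when $p=+\infty$ with the weak* topology. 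Together the two inclusions give the lemma.
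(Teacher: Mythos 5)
Your ``$\subseteq$'' inclusion is exactly the paper's: condition down to time $t$, exchange the essential infimum with the conditional expectation using decomposability of $A_s$, and invoke closedness of the right-hand side. For the hard inclusion ``$\supseteq$'' you take a genuinely different route. The paper constructs no preimages at all: it takes $u$ in the right-hand side, supposes $u \not\in \cl\inEQt{-\alpha_s^{\min}(\Q^s,w_t^s(\Q,w))}{t}$, separates by some $v \in \LdqF{t}$, and then shows that $\essinf_{z_s \in -\alpha_s^{\min}(\cdot)} \trans{w_t^s(\Q,v)}z_s = -\infty$ off the set where $v(\omega) = \lambda(\omega)w(\omega)$ for some $\lambda \in L^0_t(\R_{++})$ --- this is where $\Q \in \mathcal{M}^e$ enters --- so that the separation inequality collapses to $\inE{\lambda\trans{w}u} < \inE{\lambda\essinf_{Z \in A_s}\trans{w}\inEQt{Z}{t}}$, contradicting membership in the right-hand side. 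You instead use $\Q \in \mathcal{M}^e$ to divide by $\bar{\xi}_{t,s}(\Q_i)$ and build an explicit lift $u \in M_s$ of a given $v \in M_t$; in the generic one-dimensional situation ($w>0$, $f_t := \essinf_{Z\in A_s}\trans{w}\inEQt{Z}{t}$ finite) this does work, e.g.\ $u = f_s/(w\bar{\xi}_{t,s}(\Q)) + v - f_t/w$ with $f_s$ the time-$s$ essential infimum, and it explains transparently why the closure must sit on the left.

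The gap is that the construction is only asserted at exactly the points where it is delicate. First, $w \not\in \prp{M_t}$ does not provide a strictly positive eligible direction almost surely: on the $\Ft{t}$-measurable set where $w_i = 0$ for every eligible index $i \leq m$ one has $\trans{w_t^s(\Q,w)}u = 0$ for \emph{all} $u \in M_s$, so no modification inside $M_s$ affects the time-$s$ constraint, and you must argue separately that the constraint is there either automatic or renders both sides empty; elsewhere the choice of a coordinate with $w_i > 0$ must be made $\Ft{t}$-measurably. Second, when $f_t = -\infty$ on a set of positive measure the recentering term of your correction is not finite, so $f_s$ must first be truncated from below at an $\Ft{t}$-measurably chosen level, and this interacts with the outer truncation you already plan. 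Third, for $p=+\infty$ the closure is weak*, and $\inEQt{u_N}{t} \to v$ must be verified against $\LdoF{t}$ test functions, which requires an integrability statement for $\diag{\xi_{t,s}(\Q)}u$ that the sketch does not supply (you flag this but do not resolve it). None of these seems fatal, but each is precisely what the paper's separation argument sidesteps, and until they are written out the ``$\supseteq$'' inclusion is not proved.
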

\begin{proof}
\begin{proofenum}
\item[$"\subseteq"$]
    \begin{align*}
    &\EQt{-\alpha_s^{\min}(\Q^s,w_t^s(\Q,w))}{t}= \lrcurly{\EQt{u_s}{t}: u_s \in M_s, \trans{w_t^s(\Q,w)}u_s \geq \essinf_{Z \in A_s} \trans{w_t^s(\Q,w)}\EQt{Z}{s} \Pas}\\
    &\quad\quad \subseteq \lrcurly{\EQt{u_s}{t}: u_s \in M_s, \Et{\trans{w_t^s(\Q,w)}u_s}{t} \geq \Et{\essinf_{Z \in A_s} \trans{w_t^s(\Q,w)}\EQt{Z}{s}}{t} \Pas}\\
    &\quad\quad = \lrcurly{u_t \in M_t: \trans{w}u_t \geq \essinf_{Z \in A_s} \trans{w}\EQt{Z}{t} \Pas} = \cl\bigcup_{Z \in A_s} \lrparen{\EQt{Z}{t} + \Gamma_t(w)} \cap M_t.
    \end{align*}
    And since $\cl\bigcup_{Z \in A_s} \inlrparen{\inEQt{Z}{t} + \Gamma_t(w)} \cap M_t$ is closed, this direction is shown.
\item[$"\supseteq"$] Consider a point $u \in \cl\bigcup_{Z \in A_s} \inlrparen{\inEQt{Z}{t} + \Gamma_t(w)} \cap M_t$ and, further, assume $u \not\in
\cl\inEQt{-\alpha_s^{\min}(\Q^s,w_t^s(\Q,w))}{t}$.  Since $\cl \inEQt{-\alpha_s^{\min}(\Q^s,w_t^s(\Q,w))}{t}$ is closed and convex, we
can separate $\{u\}$ and $\cl\inEQt{-\alpha_s^{\min}(\Q^s,w_t^s(\Q,w))}{t}$ by some $v \in \LdqF{t}$, i.e. let $v \in \LdqF{t}$ such that
    \begin{align*}
    \E{\trans{v}u} &< \inf_{z_t \in \cl\EQt{-\alpha_s^{\min}(\Q^s,w_t^s(\Q,w))}{t}} \E{\trans{v}z_t}
    = \inf_{z_s \in -\alpha_s^{\min}(\Q^s,w_t^s(\Q,w))} \E{\trans{w_t^s(\Q,v)}z_s}\\
    &= \E{\essinf_{z_s \in -\alpha_s^{\min}(\Q^s,w_t^s(\Q,w))} \trans{w_t^s(\Q,v)}z_s}.
    \end{align*}
    Note that in the last equality above we can interchange the expectation and infimum since $-\alpha_s^{\min}(\Q^s,w_t^s(\Q,w))$ is decomposable.
    By construction
    \begin{align*}
    \essinf_{z_s \in -\alpha_s^{\min}(\Q^s,w_t^s(\Q,w))} \trans{w_t^s(\Q,v)}z_s = \begin{cases} \essinf_{Z \in A_s} \trans{w_t^s(\Q,v)}\EQt{Z}{s} &\text{on } D\\ -\infty &\text{on } D^c \end{cases}
    \end{align*}
    where $D = \inlrcurly{\omega \in \Omega: G_0(w_t^s(\Q,v)[\omega]) = G_0(w_t^s(\Q,w)[\omega])}$.
    Since $\Q \in \mathcal{M}^e$, one has that $G_0(w_t^s(\Q,v)[\omega]) = G_0(w_t^s(\Q,w)[\omega])$ if and only if $v(\omega) = \lambda(\omega)
w(\omega)$ for some $\lambda \in L^0_t(\R_{++})$ (such that $\lambda w \in \LdqF{t}$).  Thus, it holds $\inE{\essinf_{z_s \in -\alpha_s^{\min}(\Q^s,w_t^s(\Q,w))} \trans{w_t^s(\Q,v)}z_s} > -\infty$ if and only if
    \begin{align*}
    \E{\essinf_{z_s \in -\alpha_s^{\min}(\Q^s,w_t^s(\Q,w))} \trans{w_t^s(\Q,v)}z_s} &= \E{\lambda \essinf_{z_s \in -\alpha_s^{\min}(\Q^s,w_t^s(\Q,w))} \trans{w_t^s(\Q,w)}z_s}= \E{\lambda \essinf_{Z \in A_s} \trans{w}\EQt{Z}{t}}.
    \end{align*}
    But this implies $\inE{\lambda \trans{w}u} < \inE{\lambda \essinf_{Z \in A_s} \trans{w}\inEQt{Z}{t}}$, which is a contradiction to
$u \in \cl\bigcup_{Z \in A_s} \inlrparen{\inEQt{Z}{t} + \Gamma_t(w)} \cap M_t$.
\end{proofenum}
\qed\end{proof}

\begin{corollary}
\label{prop_Gamma}
Let $0 \leq t < s \leq T$, $(\Q,w) \in \olW_t$ with $\Q \in \mathcal{M}^e$. It follows that $\cl\EQt{\Gamma_{s}(w_t^{s}(\Q,w))}{t} = \Gamma_t(w)$.
\end{corollary}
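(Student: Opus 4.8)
The plan is to obtain this corollary from Lemma~\ref{lemma_exp_cond_penalty} in exactly the way Corollary~\ref{prop_Gw} was obtained from Lemma~\ref{lemma_exp_penalty}, namely by specializing the underlying data to $M = \R^d$ and $A_s = \LdpF{+}$. With $M = \R^d$ one has $M_r = \LdpF{r}$ and $\plus{M_{r,+}} = \LdqF{r,+}$ for every time $r$, and I would take as dynamic risk measure $R_r(X) = \{u \in \LdpF{r}: X + u \in \LdpF{+}\}$, whose acceptance set is $\LdpF{+}$ at each time. Since $\LdpF{+}$ is closed, convex, and $\Ft{r}$-decomposable, each $R_r$ is a closed conditionally convex risk measure which is finite at zero ($R_r(0) = \LdpF{r,+}$), so the hypotheses of Lemma~\ref{lemma_exp_cond_penalty} are met for any $(\Q,w) \in \olW_t$ with $\Q \in \mathcal{M}^e$.

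It then remains only to identify both sides of the conclusion of Lemma~\ref{lemma_exp_cond_penalty} for this choice. First I would check that $-\alpha_s^{\min}(\Q^s, w_t^s(\Q,w)) = \Gamma_s(w_t^s(\Q,w))$: the inclusion $\supseteq$ is immediate from $0 \in \LdpF{+}$, and $\subseteq$ holds because for $Z \in \LdpF{+}$ the vector $\EQt{Z}{s}$ lies in $\LdpF{s,+}$ (being a conditional expectation under the probability measure $\Q^s$ of a nonnegative random vector), and since $w_t^s(\Q,w) = \diag{w}\xi_{t,s}(\Q) \succeq 0$ (using $w \in \LdqF{t,+}$ and $\xi_{t,s}(\Q) \succeq 0$) we get $\trans{w_t^s(\Q,w)}\EQt{Z}{s} \geq 0$ almost surely, hence $\EQt{Z}{s} + \Gamma_s(w_t^s(\Q,w)) \subseteq \Gamma_s(w_t^s(\Q,w))$; taking the closed union over $Z$ reproduces $\Gamma_s(w_t^s(\Q,w))$, which is already closed. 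The same reasoning at time $t$ gives $\cl\bigcup_{Z \in \LdpF{+}}(\EQt{Z}{t} + \Gamma_t(w)) \cap M_t = \Gamma_t(w)$. Substituting these two identities into the conclusion of Lemma~\ref{lemma_exp_cond_penalty} yields $\cl\EQt{\Gamma_s(w_t^s(\Q,w))}{t} = \Gamma_t(w)$, which is the claim.

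The genuine difficulty is thus already absorbed into Lemma~\ref{lemma_exp_cond_penalty}; for the corollary itself the only thing that needs care is that the proof must follow the specialization route above rather than a direct computation. The naive attempt -- taking $u_t \in \Gamma_t(w)$ and trying $u_s := u_t$ as a preimage in $\Gamma_s(w_t^s(\Q,w))$ under $\EQt{\cdot}{t}$ -- fails, since $\trans{w_t^s(\Q,w)}u_t = \sum_i w_i (u_t)_i \bar{\xi}_{t,s}(\Q_i)$ need not be nonnegative even though $\sum_i w_i (u_t)_i = \trans{w}u_t$ is (the per-component density ratios $\bar{\xi}_{t,s}(\Q_i)$ differ). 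Recovering the $\supseteq$ inclusion therefore really does require the separation-and-decomposability argument of Lemma~\ref{lemma_exp_cond_penalty}, and in particular the hypothesis $\Q \in \mathcal{M}^e$ (needed so that two conditional half-spaces agree on a set precisely when their normals are positively proportional there); this is why, unlike in Corollary~\ref{prop_Gw}, equivalence of $\Q$ with $\P$ cannot be dropped here.
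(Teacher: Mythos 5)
Your proposal is correct and follows exactly the paper's route: the paper also obtains corollary~\ref{prop_Gamma} as the special case of lemma~\ref{lemma_exp_cond_penalty} with $M = \R^d$ and $A_s = \LdpF{+}$, and your verification that both sides of the lemma's conclusion collapse to $\Gamma_s(w_t^s(\Q,w))$ and $\Gamma_t(w)$ respectively is the (implicit) content of that specialization. Your closing remark on why the naive preimage $u_s := u_t$ fails and why $\Q \in \mathcal{M}^e$ is genuinely needed is a correct and worthwhile observation, though not required for the proof itself.
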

\begin{proof}
This is a special case of lemma~\ref{lemma_exp_cond_penalty} obtained by setting $M = \R^d$ and $A_s = \LdpF{+}$.
\qed\end{proof}

%%%%%%%%%%%%%%%%%%%%%%%%%%%%%%%%%%%%%%%%%%
\section{On the sum of closed acceptance sets and convex upper continuity}
\label{sec_sum_acceptance}
When considering multiportfolio time consistency for closed risk measures we need to guarantee that the composed risk measures are
closed, or else the recursive form would fail to hold.  In particular, this would be true if the sum of acceptance sets are themselves
closed. We will demonstrate the closedness of the sum of convex acceptance sets when the associated dynamic risk measure is convex upper continuous.

Recall that a function $F: X \to \mathcal{P}(Y;C)$ is convex upper continuous (c.u.c.) if $F^{-1}(D) := \inlrcurly{x \in X: F(x) \cap D \neq \emptyset}$
is closed for any closed set $D \in \mathcal{G}(Y;-C)$.

\begin{proposition}
\label{prop_comp_usc}
Let $F: X \to \mathcal{P}(Y;C_Y)$ and $G: Y \to \mathcal{P}(Z;C_Z)$.  If $F,G$ are c.u.c. and $G$ is convex and $-C_Y$-monotone,
then $H: X \to \mathcal{P}(Z;C_Z)$ defined by the composition $H(x) := \bigcup_{y \in F(x)} G(y)$ for any $x
\in X$ is c.u.c.
\end{proposition}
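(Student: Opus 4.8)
The plan is to reduce the statement to the defining property of convex upper continuity for $F$, via the identity $H^{-1}(D) = F^{-1}\bigl(G^{-1}(D)\bigr)$. Fix a set $D \in \mathcal{G}(Z;-C_Z)$; by definition $D = \cl\co(D + (-C_Z))$, so in particular $D$ is closed and convex. Unwinding definitions, $x \in H^{-1}(D)$ iff $H(x) \cap D \neq \emptyset$, iff there is some $y \in F(x)$ with $G(y) \cap D \neq \emptyset$, iff $F(x) \cap G^{-1}(D) \neq \emptyset$, iff $x \in F^{-1}\bigl(G^{-1}(D)\bigr)$. Thus $H^{-1}(D) = F^{-1}\bigl(G^{-1}(D)\bigr)$, and it suffices to show $G^{-1}(D) \in \mathcal{G}(Y;-C_Y)$, since then c.u.c.\ of $F$ makes $F^{-1}\bigl(G^{-1}(D)\bigr)$ closed. (That $H$ maps into $\mathcal{P}(Z;C_Z)$ is immediate: $H(x) + C_Z = \bigcup_{y \in F(x)}(G(y)+C_Z) = \bigcup_{y \in F(x)}G(y) = H(x)$.)

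To verify $G^{-1}(D) \in \mathcal{G}(Y;-C_Y)$, i.e.\ $G^{-1}(D) = \cl\co\bigl(G^{-1}(D) + (-C_Y)\bigr)$, I would check three things. First, $G^{-1}(D)$ is closed: this is precisely convex upper continuity of $G$ applied to $D \in \mathcal{G}(Z;-C_Z)$. Second, $G^{-1}(D)$ is convex: given $y_1,y_2 \in G^{-1}(D)$ and $\lambda \in [0,1]$, pick $z_i \in G(y_i) \cap D$; convexity of $G$ gives $\lambda z_1 + (1-\lambda)z_2 \in \lambda G(y_1) + (1-\lambda)G(y_2) \subseteq G(\lambda y_1 + (1-\lambda)y_2)$, while convexity of $D$ gives $\lambda z_1 + (1-\lambda)z_2 \in D$, so $\lambda y_1 + (1-\lambda)y_2 \in G^{-1}(D)$. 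Third, $G^{-1}(D) + (-C_Y) \subseteq G^{-1}(D)$: if $y \in G^{-1}(D)$ and $c \in C_Y$, then $(y-c)-y = -c \in -C_Y$, so $-C_Y$-monotonicity of $G$ yields $G(y-c) \supseteq G(y)$, hence $G(y-c) \cap D \supseteq G(y) \cap D \neq \emptyset$ and $y - c \in G^{-1}(D)$. Combining the three (and using $0 \in C_Y$, which gives $G^{-1}(D) \subseteq G^{-1}(D) + (-C_Y)$), we obtain that $G^{-1}(D) = G^{-1}(D) + (-C_Y)$ is a closed convex set, so $\cl\co$ of it is itself; that is, $G^{-1}(D) \in \mathcal{G}(Y;-C_Y)$.

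Finally, since $D \in \mathcal{G}(Z;-C_Z)$ was arbitrary and $H^{-1}(D) = F^{-1}\bigl(G^{-1}(D)\bigr)$ is closed by c.u.c.\ of $F$, the function $H$ is c.u.c. The only substantive point is the verification that $G^{-1}(D)$ belongs to $\mathcal{G}(Y;-C_Y)$ — this is where all three hypotheses on $G$ (convex upper continuity, convexity, and $-C_Y$-monotonicity) are genuinely used, since c.u.c.\ of $F$ is only available for preimages of sets in that class; the remaining steps are purely formal manipulations of preimages and unions.
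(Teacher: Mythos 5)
Your proof is correct and follows essentially the same route as the paper's: both establish the identity $H^{-1}(D)=F^{-1}\bigl(G^{-1}(D)\bigr)$ and then verify $G^{-1}(D)\in\mathcal{G}(Y;-C_Y)$ via closedness (c.u.c.\ of $G$), convexity (convexity of $G$ together with that of $D$), and the lower-set property ($-C_Y$-monotonicity of $G$). Your write-up merely spells out these three checks in more detail than the paper does.
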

\begin{proof}
For any $D \in 2^Z$, then
\begin{align*}
H^{-1}(D) &= \lrcurly{x \in X: H(x) \cap D \neq \emptyset} = \lrcurly{x \in X: \bigcup_{y \in F(x)} G(y) \cap D \neq \emptyset}\\
&= \lrcurly{x \in X: \exists y \in F(x): G(y) \cap D \neq \emptyset} = \lrcurly{x \in X: F(x) \cap G^{-1}(D) \neq \emptyset}= F^{-1}(G^{-1}(D)).
\end{align*}
Additionally, if $D \in \mathcal{G}(Z;-C_Z)$ then $G^{-1}(D)$ is closed, if $x,y \in G^{-1}(D)$ and $\lambda \in [0,1]$ then $G(\lambda x + (1-\lambda)y) \cap D \neq \emptyset$, and if $x,y \in Y$ such that $x - y \in C_Y$ with $x \in
G^{-1}(D)$ then $y \in G^{-1}(D)$.  This implies that $G^{-1}(D) \in \mathcal{G}(Y,-C_Y)$, and thus $F^{-1}(G^{-1}(D))$ is closed for
any $D \in \mathcal{G}(Z;-C_Z)$.
\qed\end{proof}

\begin{lemma}
\label{lemma_closed_sum_acceptance}
Let $M_t$ ($M_{s}$) be the set of eligible portfolios at time $t$ ($s$) (a closed linear subspace of $\LdpF{t}$ ($\LdpF{s}$)).  Let $R_{t,s}$ be a c.u.c. convex stepped risk measure from $t$ to $s$ and $R_{s}$ be a c.u.c. risk measure at time $s$.  Then, $A_{t,s} + A_{s}$ is closed.
\end{lemma}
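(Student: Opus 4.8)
The plan is to realize $A_{t,s}+A_{s}$ as the acceptance set of a composed risk measure and then read off its closedness from convex upper continuity via Proposition~\ref{prop_comp_usc}. First I would set $\tilde{R}_t\colon\LdpF{}\to\mathcal{P}(M_t;M_{t,+})$, $\tilde{R}_t(X):=\bigcup_{Z\in R_{s}(X)}R_{t,s}(-Z)$. One checks quickly that $\tilde{R}_t$ is $M_t$-translative (using $M_t\subseteq M_{s}$ together with translativity of $R_{s}$ and $R_{t,s}$) and that its values are $M_{t,+}$-upward, so it really maps into $\mathcal{P}(M_t;M_{t,+})$; the crucial point is that its acceptance set is exactly $A_{t,s}+A_{s}$. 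Indeed, $0\in\tilde{R}_t(X)$ holds iff there is $Z\in R_{s}(X)$ (equivalently $X+Z\in A_{s}$) with $-Z\in A_{t,s}$, hence iff $X=(X+Z)+(-Z)\in A_{s}+A_{t,s}$; conversely, given $X=a+b$ with $a\in A_{s}$ and $b\in A_{t,s}$, the choice $Z:=a-X$ yields $X+Z=a\in A_{s}$ and $-Z=b\in A_{t,s}$, so $0\in\tilde{R}_t(X)$.

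Next I would apply Proposition~\ref{prop_comp_usc} with $F:=R_{s}$ and $G\colon M_{s}\to\mathcal{P}(M_t;M_{t,+})$, $G(Z):=R_{t,s}(-Z)$, so that $\tilde{R}_t=H$ in the notation of that proposition. Here $F$ is c.u.c. by hypothesis; $G$ is c.u.c. because $G^{-1}(D)=-R_{t,s}^{-1}(D)$ and $Z\mapsto -Z$ is a homeomorphism; $G$ is convex because $R_{t,s}$ is; and $G$ is $-M_{s,+}$-monotone because the stepped risk measure $R_{t,s}$ is $M_{s,+}$-monotone (if $Z_1\preceq Z_2$ then $-Z_1\succeq -Z_2$, whence $G(Z_1)=R_{t,s}(-Z_1)\supseteq R_{t,s}(-Z_2)=G(Z_2)$). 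Proposition~\ref{prop_comp_usc} then yields that $\tilde{R}_t$ is c.u.c. Finally, since $M_{t,-}=-(M_t\cap\LdpF{t,+})$ is closed and convex with $M_{t,-}+M_{t,-}=M_{t,-}$, it belongs to $\mathcal{G}(M_t;M_{t,-})$; and since $0\in\tilde{R}_t(X)$ is equivalent to $\tilde{R}_t(X)\cap M_{t,-}\neq\emptyset$, we get $A_{t,s}+A_{s}=\tilde{R}_t^{-1}(M_{t,-})$, which is closed because $\tilde{R}_t$ is c.u.c.

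The step that needs care is keeping the cone/sign conventions straight: the flip $G(Z)=R_{t,s}(-Z)$ is exactly why Proposition~\ref{prop_comp_usc} has to be invoked with $-M_{s,+}$-monotonicity rather than plain monotonicity, and one should double-check that the stepped acceptance set $A_{t,s}=A_t\cap M_{s}$ inherits monotonicity (it does, since $A_t$ is monotone and $M_{s}$ is a linear subspace) and that $M_{t,-}\in\mathcal{G}(M_t;M_{t,-})$. No finiteness-at-zero hypothesis on $\tilde{R}_t$ is needed, since neither Proposition~\ref{prop_comp_usc} nor the description of c.u.c. via preimages of sets in $\mathcal{G}(M_t;M_{t,-})$ depends on it.
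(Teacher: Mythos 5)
Your proposal is correct and follows essentially the same route as the paper: define $\tilde{R}_t(X)=\bigcup_{Z\in R_{s}(X)}R_{t,s}(-Z)$, identify $A_{t,s}+A_{s}$ with $\tilde{R}_t^{-1}(M_{t,-})$, and conclude closedness from Proposition~\ref{prop_comp_usc}. You merely spell out details the paper leaves implicit (the verification of the hypotheses of Proposition~\ref{prop_comp_usc} for $G(Z)=R_{t,s}(-Z)$ and the fact that $M_{t,-}\in\mathcal{G}(M_t;M_{t,-})$), all of which check out.
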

\begin{proof}
By lemma 3.6(i)~in~\cite{FR12}, $A_{t,s} + A_{s} = \inlrcurly{X \in \LdpF{}: 0 \in \bigcup_{Z \in R_{s}(X)}
R_{t,s}(-Z)}$.  Indeed, \begin{align*}
X\in A_{t,s} + A_{s} &\Leftrightarrow -R_{s}(X)\cap A_t\neq\emptyset\Leftrightarrow\exists Z\in R_{s}(X) \mbox{ s.t. } -Z\in A_t ~(\mbox{ i.e. } 0\in R_t(-Z) =R_{t,s}(-Z))\\
&\Leftrightarrow 0 \in \bigcup_{Z \in R_{s}(X)}R_{t,s}(-Z).
\end{align*}
Let $\tilde{R}_t(X) := \bigcup_{Z \in R_{s}(X)} R_{t,s}(-Z)$ then $A_{t,s} + A_{s} =
\tilde{R}_t^{-1}(M_{t,-})$. By proposition~\ref{prop_comp_usc}, $\tilde{R}_t$ is c.u.c., and thus $\tilde{R}_t^{-1}(M_{t,-})$ is closed.
\qed\end{proof}

\begin{remark}
\label{rem_stepped_usc}
Let $R_t$ be a conditional risk measure at time $t$ and $R_{t,s} := R_t|_{M_{s}}$ be the stepped risk measure from $t$ to $s$ associated with $R_t$.  If $R_t$ is c.u.c. then, trivially, $R_{t,s}$ is c.u.c.
\end{remark}

Moreover, when applying lemma~\ref{separation} to the proof of theorem~\ref{thm_mptc_penalty_wo_uc} and corollary~\ref{thm_mptc_penalty} we need not only the sum of closed convex acceptance sets to be closed, but also to be a (closed) convex acceptance set itself.  This is given in the following lemma.

\begin{lemma}
\label{lemma_sum_acceptance}
Let $\seq{A}$ be a sequence of closed convex normalized acceptance sets.
Assume $A_{t,t+1} + A_{t+1} \subseteq A_t$, then $A_{t,t+1} + A_{t+1}$ is a convex acceptance set at time $t$. Furthermore, if $\seq{A}$ is c.u.c., then $A_{t,t+1} + A_{t+1}$ is closed.
\end{lemma}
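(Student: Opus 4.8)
The plan is to verify, for $A_{t,t+1}+A_{t+1}$, the three defining properties of a convex conditional acceptance set at time $t$, and then to derive the closedness assertion from the c.u.c. hypothesis using the material already in this appendix. The closedness part is the quickest: if $\seq{A}$ is c.u.c., then the associated risk measure $R_t$ is c.u.c.\ and convex, so by remark~\ref{rem_stepped_usc} the stepped risk measure $R_{t,t+1}=R_t|_{M_{t+1}}$ is c.u.c., and it is convex because $A_{t,t+1}=A_t\cap M_{t+1}$ is convex; since $R_{t+1}$ is also c.u.c.\ and $M_t,M_{t+1}$ are closed linear subspaces, lemma~\ref{lemma_closed_sum_acceptance} applies and yields that $A_{t,t+1}+A_{t+1}$ is closed.

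For the acceptance-set part I would argue as follows. \emph{Convexity} is immediate: $A_{t,t+1}=A_t\cap M_{t+1}$ is convex as the intersection of the convex set $A_t$ with a subspace, hence so is the Minkowski sum $A_{t,t+1}+A_{t+1}$. \emph{Monotonicity}: writing $X=X'+X''$ with $X'\in A_{t,t+1}$ and $X''\in A_{t+1}$, any $Y\succeq X$ satisfies $Y=X'+(X''+(Y-X))$ with $X''+(Y-X)\succeq X''$, so $X''+(Y-X)\in A_{t+1}$ by monotonicity of $A_{t+1}$, whence $Y\in A_{t,t+1}+A_{t+1}$. \emph{Finiteness at zero}, i.e.\ $(A_{t,t+1}+A_{t+1})\cap M_t\notin\{\emptyset,M_t\}$: from the hypothesis $A_{t,t+1}+A_{t+1}\subseteq A_t$ we get $(A_{t,t+1}+A_{t+1})\cap M_t\subseteq A_t\cap M_t\subsetneq M_t$, which handles the ``$\neq M_t$'' part; for non-emptiness I would use that $M_t\subseteq M_{t+1}$ (since $\Ft{t}\subseteq\Ft{t+1}$) together with $0\in A_{t+1}$, so that any $u\in A_t\cap M_t$ (this set is nonempty because $A_t$ is finite at zero) lies in $A_t\cap M_{t+1}=A_{t,t+1}$ and hence $u=u+0\in(A_{t,t+1}+A_{t+1})\cap M_t$.

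The only point here that is not routine set bookkeeping is the assertion $0\in A_{t+1}$, and this is where normalization and closedness enter. One may simply invoke the fact, recorded in the discussion following definition~\ref{defn_conditional}, that for a normalized closed risk measure the zero portfolio compensates the zero payoff, i.e.\ $0\in R_{t+1}(0)\subseteq A_{t+1}$. If one prefers a self-contained argument: $A_{t+1}\cap M_{t+1}=R_{t+1}(0)$ is a nonempty closed convex set satisfying $R_{t+1}(0)=R_{t+1}(0)+R_{t+1}(0)$ (normalization at $X=0$), and a nonempty closed convex set equal to its own Minkowski sum must contain the origin, since otherwise a continuous linear functional strictly separating $0$ from it would be bounded below on it by some $\gamma>0$, hence by $2^n\gamma$ after $n$-fold iteration of the sum decomposition, which is impossible. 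Assembling convexity, monotonicity, finiteness at zero, and (under the extra hypothesis) closedness then completes the proof; I expect this non-emptiness step, resting on $0\in A_{t+1}$, to be the only genuine obstacle, everything else being a direct manipulation of the definitions and of lemmas~\ref{lemma_closed_sum_acceptance} and remark~\ref{rem_stepped_usc}.
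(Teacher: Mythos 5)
Your proof is correct and follows essentially the same route as the paper's: check the acceptance-set axioms directly (using $0\in A_{t+1}$ and $A_{t,t+1}+A_{t+1}\subseteq A_t$ for finiteness at zero), note convexity of the Minkowski sum, and obtain closedness from convex upper continuity via lemma~\ref{lemma_closed_sum_acceptance}. The only difference is that you spell out the separation argument for $0\in A_{t+1}$, which the paper simply asserts from closedness and normalization.
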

\begin{proof} Let us check the properties of acceptance sets (see definition~\ref{defn_acceptance}).
\begin{proofenum}
\item $A_{t,t+1} + A_{t+1} \subseteq \LdpF{}$ trivially.
\item $M_t \cap \inlrparen{A_{t,t+1} + A_{t+1}} \supseteq M_t \cap M_{t+1} \cap A_t \neq \emptyset$ since $0 \in A_{t+1}$ (by $A_{t+1}$ closed and normalized), $M_t \cap A_t \neq \emptyset$, and $M_t \cap M_{t+1} = M_t$.
\item $M_t \cap \inlrparen{\LdpF{} \backslash \inlrcurly{A_{t,t+1} + A_{t+1}}} \supseteq M_t \cap \inlrparen{\LdpF{} \backslash A_t} \neq \emptyset$ by $A_{t,t+1} + A_{t+1} \subseteq A_t$.
\item $A_{t,t+1} + A_{t+1} + \LdpF{+} \subseteq A_{t,t+1} + A_{t+1}$ trivially.
\end{proofenum}
$A_{t,t+1} + A_{t+1}$ is convex since both $A_{t,t+1}$ and $A_{t+1}$ are convex. $A_{t,t+1} + A_{t+1}$ is closed by lemma~\ref{lemma_closed_sum_acceptance} if $\seq{A}$ is c.u.c.
\qed\end{proof}

We will finish this section by considering a class of risk measures which are point plus cone and show that these risk measures will be c.u.c. under $p = +\infty$ and the weak* topology.
\begin{proposition}
\label{prop_uc_scalar}
Consider the full eligible space $M_t = \LdiF{t}$ and let $p = +\infty$.  Let $R_t(X) := \rho_t(X) + \LdiF{t,+}$ for some vector $\rho_t$ of
scalar conditional risk measures, i.e. $\rho_t(X) := \transp{(\rho_t)_1(X_1),...,(\rho_t)_d(X_d)}$.  If $\rho_t$ is (componentwise) lower semicontinuous and convex then $R_t$ is c.u.c.
\end{proposition}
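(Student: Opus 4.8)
The plan is to check the definition of convex upper continuity directly, i.e.\ to show that for every closed $D\in\mathcal{G}(M_t;M_{t,-})$ the set $R_t^{-1}(D)=\{X\in\LdiF{}:R_t(X)\cap D\neq\emptyset\}$ is weak* closed. Since $M_t=\LdiF{t}$, such a $D$ is weak* closed, convex, and satisfies $D-\LdiF{t,+}\subseteq D$. The first step is the set-theoretic identification $R_t^{-1}(D)=\{X\in\LdiF{}:\rho_t(X)\in D\}=:G$: if $v\in R_t(X)\cap D$, writing $v=\rho_t(X)+u$ with $u\in\LdiF{t,+}$ gives $\rho_t(X)=v-u\in D-\LdiF{t,+}\subseteq D$, and conversely $\rho_t(X)\in R_t(X)\cap D$ whenever $\rho_t(X)\in D$. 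The set $G$ is convex because each $(\rho_t)_i$ is convex and $D$ is convex with $D-\LdiF{t,+}\subseteq D$, so $\rho_t(\lambda X+(1-\lambda)Y)\preceq\lambda\rho_t(X)+(1-\lambda)\rho_t(Y)\in D$ forces $\rho_t(\lambda X+(1-\lambda)Y)\in D$.

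Next, because $p=+\infty$ and the relevant topology is the weak* topology, I would invoke the Krein--\v{S}mulian theorem: a convex subset of $\LdiF{}$ is weak* closed as soon as its intersection with each ball $r\bar B$ is weak* closed, where $\bar B$ is the closed unit ball. As $\bar B$ is weak* compact by Banach--Alaoglu, it suffices to prove that $G\cap r\bar B$ is weak* compact, which I would do with a net argument. Let $(X^{\alpha})$ be a net in $G\cap r\bar B$ with $X^{\alpha}\to X$ weak*. The bound $\|X^{\alpha}\|_{\infty}\le r$ makes each component $X^{\alpha}_i$ uniformly bounded in $\LiF{}$, so by monotonicity and translativity of the scalar conditional risk measures $(\rho_t)_i$ the random variables $(\rho_t)_i(X^{\alpha}_i)$ are uniformly bounded in $\LiF{t}$; hence $(\rho_t(X^{\alpha}))$ stays in a weak* compact subset of $\LdiF{t}$ and, passing to a subnet, $\rho_t(X^{\alpha})\to V$ weak* in $\LdiF{t}$ while still $X^{\alpha}\to X$. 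Then $V\in D$ since $D$ is weak* closed, and $\rho_t(X)\preceq V$: for each $i$, lower semicontinuity of $(\rho_t)_i$ means its epigraph $\{(Y,Z)\in\LiF{}\times\LiF{t}:Z\succeq(\rho_t)_i(Y)\}$ is weak* closed, and it contains the net $(X^{\alpha}_i,(\rho_t)_i(X^{\alpha}_i))\to(X_i,V_i)$, so $(\rho_t)_i(X_i)\le V_i$. Finally $\rho_t(X)\in V-\LdiF{t,+}\subseteq D-\LdiF{t,+}\subseteq D$, i.e.\ $X\in G$. Thus $G\cap r\bar B$ is weak* closed for every $r>0$, Krein--\v{S}mulian yields that $R_t^{-1}(D)=G$ is weak* closed, and therefore $R_t$ is c.u.c.

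The main obstacle is the one typical of the $L^{\infty}$ setting: weak* closedness is not a sequential notion, so one cannot argue naively with a.s.- or norm-convergent sequences, and the Krein--\v{S}mulian reduction to bounded balls -- on which the weak* topology is compact and nets are well behaved, and on which the risk-measure axioms force $(\rho_t)_i(X^{\alpha}_i)$ to be bounded so that a convergent subnet can be extracted -- is what makes the argument go through. The second point to be careful about is the meaning of ``lower semicontinuous'' for the $\LiF{t}$-valued maps $(\rho_t)_i$: the pertinent notion here is weak* closedness of the epigraph, equivalently the Fatou property, which is exactly what is used when passing to the limit along the subnet and which is the standard property enjoyed by the scalar average value at risk and the scalar entropic risk measure appearing in Examples~\ref{ex_avar} and~\ref{ex_entropic}. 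The remaining ingredients -- the identification of $R_t^{-1}(D)$, the convexity of $G$, and the boundedness estimate -- are routine.
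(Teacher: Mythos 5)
Your proof is correct, and its opening steps---the identification $R_t^{-1}(D)=\rho_t^{-1}(D)$ using that every $D\in\mathcal{G}(M_t;M_{t,-})$ is a lower set, the convexity of this preimage, and the Krein--\v{S}mulian reduction to norm balls---coincide with the paper's. Where you diverge is in the closedness-on-balls argument. The paper uses proposition~5.5.1 of \cite{KS09} in the form ``a convex subset of $\LdiF{}$ is weak* closed iff its intersection with every ball is closed in probability,'' takes a sequence converging in probability, extracts an a.s.\ convergent subsequence, notes that the running infima $\inf_{\hat m\geq m}\rho_t(Z_{n_{\hat m}})$ stay in $D$ because $D$ is a lower set, passes to the $\liminf$ using closedness in probability of $D$ intersected with a ball, and finishes with the Fatou-type inequality $\rho_t(\bar Z)\preceq\liminf_{m}\rho_t(Z_{n_m})$; its notion of lower semicontinuity is thus lsc along bounded a.s.\ convergent sequences. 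You instead work directly with weak* convergent nets, extract a weak* convergent subnet of the uniformly bounded values $\rho_t(X^{\alpha})$ via Banach--Alaoglu, and close the argument with weak* closedness of the componentwise epigraphs. Both routes are valid; yours is topologically cleaner and avoids the $\liminf$ bookkeeping, but it shifts the weight onto the equivalence (for convex, monotone, translative conditional maps on $L^{\infty}$) between the Fatou property and weak* closedness of the epigraph, which you correctly flag but do not prove---the paper's sequential argument is in effect a hands-on proof of exactly that implication in the situation at hand. To make your version self-contained under the paper's reading of ``lower semicontinuous,'' it suffices to add that translativity identifies the epigraph of $(\rho_t)_i$ with the preimage of the acceptance set $\{Y:(\rho_t)_i(Y)\leq 0\}$ under the weak*-continuous map $(Y,Z)\mapsto Y+Z$, and that this acceptance set is weak* closed by the same Krein--\v{S}mulian plus convergence-in-probability argument the paper applies to $D$.
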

\begin{proof}
Recall from the scalar literature that $\rho_t(X) \in \LdiF{t}$ for any $X \in \LdiF{}$.  Consider any set $D \in \mathcal{G}(\LdiF{t};\LdiF{t,-})$.  It follows that
\begin{align*}
R_t^{-1}(D) &= \lrcurly{X \in \LdiF{}: R_t(X) \cap D \neq \emptyset}= \lrcurly{X \in \LdiF{}: \exists \hat{d} \in D, \rho_t(X) \preceq \hat{d}}\\
&= \lrcurly{X \in \LdiF{}: \exists \hat{d} \in D, \rho_t(X) = \hat{d} \Pas}= \lrcurly{X \in \LdiF{}: \rho_t(X) \in D} = \rho_t^{-1}(D).
\end{align*}
Therefore we wish to show that $\rho_t^{-1}(D)$ is weak* closed.  From $\rho_t$ convex, it immediately follows that $\rho_t^{-1}(D)$
is convex, therefore $\rho_t^{-1}(D)$ is weak* closed if and only if $\rho_t^{-1}(D) \cap \inlrcurly{Z \in \LdiF{}: \|Z\|_{\infty} \leq
k}$ is closed in probability for every $k$ by~\cite[proposition 5.5.1]{KS09}.  Pick any $k \geq 0$ and consider
\[(Z_n)_{n \in \N} \subseteq \rho_t^{-1}(D) \cap \lrcurly{Z \in \LdiF{}: \|Z\|_{\infty} \leq k}\]
with $Z_n \to \bar Z$ in probability (and
thus $\bar Z \in \inlrcurly{Z \in \LdiF{}: \|Z\|_{\infty} \leq k}$).
Note that convergence in probability implies there exists a subsequence
which converges almost surely, we will denote this subsequence by $(Z_{n_m})_{m \in \N} \to \bar Z$.  For any sequence of random
vectors $(Y_n) \subseteq \LdiF{}$, define $\liminf_{n \to \infty} \rho_t(Y_n) = \lim_{n \to \infty} \inf_{m \geq n} \rho_t(Y_m)$ where \[\inf_{m \geq n} \rho_t(Y_m) = \lrparen{\begin{array}{c} \inf_{m \geq n} (\rho_t)_1((Y_m)_1) \\ \vdots \\ \inf_{m \geq n} (\rho_t)_d((Y_m)_d) \end{array}}.\]
Since $D$ is a lower set and $\inf_{\hat{m} \geq m} \rho_t(Z_{n_{\hat{m}}}) \preceq \rho_t(Z_{n_m})$ (and $\rho_t(Z_{n_m}) \in D$) for
any $m \in
\N$, then it follows that $\inf_{\hat{m} \geq m} \rho_t(Z_{n_{\hat{m}}}) \in D$ for any $m \in \N$. Note that $\|\inf_{\hat{m}
\geq m} \rho_t(Z_{n_{\hat{m}}})\|_{\infty}
\leq \max(\|\rho_t(0)+k\|_{\infty},\|\rho_t(0)-k\|_{\infty}) =: \hat{k}$ by $\|Z_{n_{\hat{m}}}\|_{\infty} \leq k$ for every $\hat{m}
\in \N$.  Since $D \cap \inlrcurly{u \in \LdiF{t}:
\|u\|_{\infty} \leq \hat{k}}$ is closed in probability (by~\cite[proposition 5.5.1]{KS09}) it must contain all almost sure limit
points, therefore we have that $\liminf_{m \to \infty}
\rho_t(Z_{n_m}) \in D \cap \inlrcurly{u \in \LdiF{t}: \|u\|_{\infty} \leq \hat{k}}$.  Finally from componentwise lower semicontinuity we
have $\liminf_{m \to \infty} \rho_t(Z_{n_m}) \succeq \rho_t(Z)$, therefore by $D$ a lower set it follows that $\rho_t(\bar Z) \in
D$, i.e. $\bar Z \in \rho_t^{-1}(D)$.
\qed\end{proof}

%%%%%%%%%%%%%%%%%%%%%%%%%%%%%%%%%%%%
\section{Stepped risk measures}
\label{sec_stepped}
In this section, we consider the dual representation of closed convex and coherent stepped risk measures $R_{t,s}: M_{s} \to \mathcal{P}(M_t;M_{t,+})$. This is used in sections~\ref{sec_convex} and \ref{sec_coherent} as the stepped penalty functions and stepped sets of dual variables play a role when discussing equivalent characterizations of multiportfolio time consistency.  For the dual representation we will use set-valued duality defined in~\cite{H09} analogously as for conditional risk measures in section 4 of~\cite{FR12}.

Given a risk measure $R_t: \LdpF{} \to \mathcal{P}(M_t;M_{t,+})$,
a stepped risk measure is the restriction of $R_t$ to $M_{s}$, i.e. $R_{t,s}=R_t|_{M_{s}}$.
The primal representation can immediately be seen, that is $R_{t,s}(X) := \{u \in M_t: X + u \in A_{t,s}\}$ for $X \in M_{s}$.
Therefore, if $R_t$ is closed convex (coherent) then $R_{t,s}$ is closed convex (coherent). Furthermore, if $R_t$ is $\LdpF{+}$-monotone, then $R_{t,s}$ is $M_{s,+}$-monotone.

\begin{lemma}
\label{lemma_stepped_dual}
Let $R_t$ be a closed convex risk measure.
The set of dual variables for
$R_{t,s}: M_{s} \to \mathcal{P}(M_t;M_{t,+})$ with $t < s$ is given by
\[\olW_{t,s} = \lrcurly{(\Q,w) \in \mathcal{M} \times \lrparen{\plus{M_{t,+}} \backslash M_t^{\perp}}: w_t^{s}(\Q,w) \in \plus{M_{s,+}}, \Q = \P|_{\Ft{t}}}\]
\end{lemma}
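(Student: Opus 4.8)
The plan is to apply the general dual representation results already established—Theorem~\ref{thm_probability_equal} together with the set-valued duality machinery of~\cite{FR12,H09}—to the restricted risk measure $R_{t,s} := R_t|_{M_s}$, viewed as a closed convex map from the space $M_s$ (a closed linear subspace of $\LdpF{s}$) into $\mathcal{P}(M_t;M_{t,+})$. The key point is that $R_{t,s}$ is again a closed convex normalized risk measure, but now on the smaller domain $M_s$, so its acceptance set is precisely $A_{t,s} = A_t \cap M_s$, and the dual variables must be adapted to this domain: instead of pairing an eligible-direction functional $w$ with the full-space direction from $\LdqF{}$, one pairs $w \in \plus{M_{t,+}}\backslash\prp{M_t}$ with the condition that the induced functional $w_t^s(\Q,w) = \diag{w}\xi_{t,s}(\Q)$ be nonnegative as a functional \emph{on $M_{s,+}$} rather than on all of $\LdpF{+}$.

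First I would recall (as in the paragraph preceding Lemma~\ref{lemma_stepped_dual}) that $R_{t,s}$ inherits closedness, convexity, normalization, and $M_{s,+}$-monotonicity from $R_t$. Then I would invoke the set-valued Fenchel–Moreau-type duality of~\cite{H09}, exactly as in~\cite[Section 4]{FR12} but with the underlying dual pair $(M_s,\,(M_s)^*)$: the biconjugate of $R_{t,s}$ equals $R_{t,s}$, and the dual variables are pairs $(\Q,w)$ for which the functional $\FQwblank$ used in the representation is well-defined and monotone on $M_s$. The monotonicity on $M_{s,+}$ forces $w_t^s(\Q,w) \in \plus{M_{s,+}}$ (if $w_t^s(\Q,w)\notin\plus{M_{s,+}}$ then the infimum over the acceptance set $A_{t,s} = A_t\cap M_s \supseteq M_{s,+}$-translates is $-\infty$, giving the trivial image $M_t$, so such pairs contribute nothing). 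The requirement $w\in\plus{M_{t,+}}\backslash\prp{M_t}$ is unchanged because the image space is still $\mathcal{P}(M_t;M_{t,+})$ and $w$ must be a nontrivial normal direction to $M_t$. Finally, the normalization $\Q = \P|_{\Ft{t}}$ follows exactly as in the proof of Theorem~\ref{thm_probability_equal}: since the representation only involves $w_t^s(\Q,w) = \diag{w}\xi_{t,s}(\Q)$, which depends only on $\xi_{t,s}(\Q)$ and not on the behavior of $\Q$ before time $t$, one may replace any $\Q$ by the measure $\R$ with $\dRdP = \xi_{t,T}(\Q)$—or more precisely $\xi_{t,s}(\Q)\,\xi_{s,T}(\cdot)$—without changing $w_t^s$, and such $\R$ satisfies $\R = \P|_{\Ft{t}}$.

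The main obstacle I anticipate is making precise the assertion that the restricted functional $w_t^s(\Q,w)$ being nonnegative \emph{on $M_{s,+}$} (rather than on all of $\LdpF{s,+}$) is exactly the right side condition—i.e. verifying that no information is lost by only testing against $M_{s,+}$. This requires the same kind of argument as in the conditional-dual corollaries: one needs that $\plus{M_{s,+}}$, when intersected with the relevant quotient, produces precisely the functionals arising from the $\xi_{t,s}(\Q)$ together with $w$, and that the $M_s$-translativity of $R_{t,s}$ (inherited from $M_t$-translativity of $R_t$, noting $M_t \subseteq M_s$) pins down the $w$-component correctly. This is essentially a bookkeeping exercise given Lemma~4.5 of~\cite{FR12} and Equation~\eqref{smallerW}, but care is needed to track which subspace each object lives in; once that is set up, the rest is a direct transcription of the proof of Theorem~\ref{thm_probability_equal}.
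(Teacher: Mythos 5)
Your plan is essentially the paper's proof: restrict $R_t$ to $M_{s}$, apply the set-valued duality of~\cite{H09} as in proposition~4.4 of~\cite{FR12} to obtain abstract dual pairs $(Y,v)$ with $Y\in\plus{M_{s,+}}$ and $v\in\lrparen{\Et{Y}{t}+\prp{M_t}}\backslash\prp{M_t}$, and then identify these with pairs $(\Q,w)\in\olW_{t,s}$ inducing the same functional $\FQwblank$ on $M_{s}$. The obstacle you flag is resolved exactly as you anticipate: given $Y\in\plus{M_{s,+}}$ one shifts by some $m^{\perp}\in\prp{M_{s}}$ so that $Y+m^{\perp}\in\LdqF{s,+}$, normalizes its components into a vector probability density (which automatically yields $\Q=\P|_{\Ft{t}}$), and sets $w=\Et{Y+m^{\perp}}{t}$, so that $w_t^{s}(\Q,w)=Y+m^{\perp}$ agrees with $Y$ on $M_{s}$.
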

\begin{proof}
By logic of proposition 4.4 in~\cite{FR12} the set of (classical) stepped dual variables are given
by $\inlrcurly{(Y,v): Y \in \plus{M_{s,+}}, v \in \inlrparen{\inEt{Y}{t} + M_t^{\perp}} \backslash
M_t^{\perp}}$.  Then it remains to show that for any dual pair $(Y,v)$ there exists a
$(\Q,w) \in \olW_{t,s}$ such that $\FYv{X} = \FQw{X}$ for any $X \in M_{s}$, and vice versa,
where $\FYv{X} := \inlrcurly{u \in M_t: \inE{\trans{X}Y} \leq \inE{\trans{v}u}}$.
\begin{proofenum}
    \item Let $(\Q,w) \in \olW_{t,s}$. Then, we will show that there exists a dual pair \[(Y,v) \in \lrcurly{(Y,v): Y \in
\plus{M_{s,+}}, v \in \lrparen{\Et{Y}{t} + M_t^{\perp}} \backslash M_t^{\perp}}\] such that $\FYv{X} = \FQw{X}$ for any $X \in M_{s}$.
Let $Y = w_t^{s}(\Q,w) \in \plus{M_{s,+}}$ (by remark~\ref{rem_stepped_dual_var} and lemma~\ref{lemma_Wtau_to_Wt}~(i)), thus
$\inE{\trans{X}Y} = \inE{\trans{w_t^{s}(\Q,w)}X} = \inE{\trans{w}\inEQt{X}{t}}$ and $\inEt{Y}{t}=w$.  From $w \in \plus{M_{t,+}}
\backslash M_t^{\perp}$ we can rewrite $w = w_{\plus{M_{t,+}}} + w_{M_t^{\perp}}$.  Thus $v = w_{\plus{M_{t,+}}} = w - w_{M_t^{\perp}}
\in \inEt{Y}{t} + M_t^{\perp}$.  Finally, $w \not\in M_t^{\perp}$ implies $v \not\in M_t^{\perp}$, and $\inE{\trans{w}u} = \inE{\trans{v}u}$ for every $u \in M_t$ since $w \in v + M_t^{\perp}$.

    \item Let $Y \in \plus{M_{s,+}}$ and $v \in \inlrparen{\inEt{Y}{t} + M_t^{\perp}} \backslash M_t^{\perp}$.
    We want to show there exists a $(\Q^t,w) \in \olW_{t,s}$ such that $\FYv{X} = \FQw{X}$ for any $X \in M_{s}$.  First we will
let $w \in \inEt{(Y + M_{s}^{\perp}) \cap \LdqF{s,+}}{t} $
    (which is nonempty), i.e., $w = \inEt{Y + m^{\perp}}{t}$ for some $m^{\perp} \in M_{s}^{\perp}$ and  $Y + m^{\perp} \in
\LdqF{s,+}$.  Then it can easily be seen that $w \in v + M_t^{\perp}$ for $v \in \inlrparen{\inEt{Y}{t} + M_t^{\perp}} \backslash
M_t^{\perp} \subseteq \plus{M_{t,+}}$.  Thus $w \in \plus{M_{t,+}} + M_t^{\perp}$ and with $v \not\in M_t^{\perp}$ this implies $w \in
\plus{M_{t,+}} \backslash M_t^{\perp}$.  From $w \in v + M_t^{\perp}$ it follows that $\inE{\trans{w}u} = \inE{\trans{v}u}$ for every $u \in M_t$.

        Additionally, choose $\Q\in\mathcal{M} $ such that $\dQidP = \bar{\xi}_{0,s}(\Q_i)$ where
        \[
        \bar{\xi}_{r,s}(\Q_i)[\omega] = \begin{cases}\frac{\Et{Y_i + m_i^{\perp}}{s}(\omega)}{\Et{Y_i + m_i^{\perp}}{r}(\omega)} &\text{if } \Et{Y_i + m_i^{\perp}}{r}(\omega) > 0\\ 1 &\text{else}\end{cases}
        \] for any $0 \leq r \leq s$ and almost every $\omega \in \Omega$. Define the measure $\Q^t\in \mathcal{M}$  by its density $\frac{d\Q^t_i}{d\P} = \bar{\xi}_{t,s}(\Q_i)$.
        Then $w_t^{s}(\Q^t,w) = w_t^{s}(\Q,w) = Y + m^{\perp} \in \plus{M_{s,+}} + M_{s}^{\perp} \subseteq \plus{M_{s,+}}$.  Therefore,
$\inE{\trans{w}\inEQt{X}{t}} = \inE{\trans{w_t^{s}(\Q^t,w)}X} = \inE{\trans{Y}X}$ for every $X \in M_{s}$.
\end{proofenum}
\qed\end{proof}

\begin{remark}
\label{rem_stepped_dual_var}
For any choice of eligible portfolios $M_t$, it follows that $\olW_{t,s} \supseteq \olW_t$ for any $t < s$.
\end{remark}

\begin{remark}
\label{rem_stepped_dual_var_equal}
If we consider the case when $M_t = \LdpF{t}$ for all times $t$, then an inspection of the proof of lemma 4.5 from~\cite{FR12} shows that $\olW_{t,s} = \olW_t$.
\end{remark}

The lemma below gives a dual representation for closed convex stepped risk measures.  In particular, it demonstrates that the minimal stepped penalty function as defined in \eqref{stepped_beta} can be used in a  dual representation  to define a closed convex stepped risk measures.

\begin{lemma}
\label{lemma_stepped_rep}
The dual representation for any closed convex stepped risk measure $R_{t,s}: M_{s} \to \mathcal{G}(M_t;M_{t,+})$ with $t < s$ is given by
\[R_{t,s}(X) = \bigcap_{(\Q,w) \in \olW_{t,s}} \lrsquare{-\beta_{t,s}^{\min}(\Q,w) + \lrparen{\EQt{-X}{t} + G_t(w)} \cap M_t}\]
for any $X \in M_{s}$ where
\[-\beta_{t,s}^{\min}(\Q,w) = \cl \bigcup_{X \in A_{t,s}} \lrparen{\EQt{X}{t} + G_t(w)} \cap M_t.\]
\end{lemma}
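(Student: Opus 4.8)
The plan is to obtain the representation by specializing the set-valued Fenchel--Moreau duality to the restricted domain $M_{s}$ and then transporting the resulting intersection onto the pairs of $\olW_{t,s}$ by means of lemma~\ref{lemma_stepped_dual}.

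First I would record the structural facts about $R_{t,s} = R_t|_{M_{s}}$: because $\Ft{t} \subseteq \Ft{s}$ we have $M_t \subseteq M_{s}$, so $R_{t,s}$ is a closed convex $M_t$-translative, $M_{s,+}$-monotone map of $M_{s}$ into $\mathcal{G}(M_t;M_{t,+})$ which is finite at zero (since $A_{t,s} \cap M_t = A_t \cap M_t$). The set-valued biconjugation theorem then applies --- as in proposition~4.4, theorem~4.7 and corollary~4.8 of~\cite{FR12}, which rest on the set-valued duality of~\cite{H09} --- and yields a dual representation of $R_{t,s}$ over the classical stepped dual pairs $(Y,v)$ with $Y \in \plus{M_{s,+}}$ and $v \in \inlrparen{\inEt{Y}{t}+\prp{M_t}}\backslash \prp{M_t}$, in which the only dual-variable-dependent ingredients are the set $\FYv{-X}$ (for $X \in M_{s}$) and the minimal penalty $\cl\bigcup_{Z \in A_{t,s}} \FYv{Z}$, where $\FYv{Z} := \inlrcurly{u \in M_t:\, \inE{\trans{Y}Z} \leq \inE{\trans{v}u}}$; that the conjugate of $-R_{t,s}$ is expressed through the stepped acceptance set $A_{t,s}$ is exactly the acceptance-set/risk-measure correspondence (remark~2 and proposition~2.11 in~\cite{FR12}).

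Next I would pass to pairs from $\olW_{t,s}$. By lemma~\ref{lemma_stepped_dual}, for every classical stepped dual pair $(Y,v)$ there is $(\Q,w) \in \olW_{t,s}$ with $\FYv{Z} = \FQw{Z}$ for all $Z \in M_{s}$, and conversely; furthermore
\[
\FQw{Z} = \inlrcurly{u \in M_t:\, \E{\trans{w}\EQt{Z}{t}} \leq \E{\trans{w}u}} = \lrparen{\EQt{Z}{t}+G_t(w)} \cap M_t ,
\]
so the representation depends on a pair only through the function $\FQw{\cdot}$ on $M_{s}$ --- equivalently only through $w_t^{s}(\Q,w)$, since $w = \inEt{w_t^{s}(\Q,w)}{t}$. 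Hence both $\FQw{-X}$ for $X \in M_{s}$ and the penalty $\cl\bigcup_{Z \in A_{t,s}} \FQw{Z}$ transfer verbatim, and since the two collections of functions on $M_{s}$ coincide the intersection may be reindexed over $\olW_{t,s}$ without altering it. This produces exactly
\[
R_{t,s}(X) = \bigcap_{(\Q,w) \in \olW_{t,s}} \lrsquare{-\beta_{t,s}^{\min}(\Q,w) + \lrparen{\EQt{-X}{t}+G_t(w)} \cap M_t}
\]
with $-\beta_{t,s}^{\min}(\Q,w) = \cl\bigcup_{X \in A_{t,s}}\lrparen{\EQt{X}{t}+G_t(w)} \cap M_t$, which is the asserted formula. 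The converse --- that any $R_{t,s}$ given by this formula is a closed convex stepped risk measure --- I would dispatch as in theorem~\ref{thm_probability_equal}: each summand lies in $\mathcal{G}(M_t;M_{t,+})$, an arbitrary intersection of such sets remains in $\mathcal{G}(M_t;M_{t,+})$, and $M_t$-translativity together with $M_{s,+}$-monotonicity are read off the formula directly.

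The main obstacle will be the bookkeeping in the second step, namely verifying that nothing in the biconjugate representation of $R_{t,s}$ sees more of a dual pair than the restriction of $\FYv{\cdot}$ to $M_{s}$ --- in particular that the minimal penalty $\cl\bigcup_{Z \in A_{t,s}} \FYv{Z}$ uses only $A_{t,s} = A_t \cap M_{s} \subseteq M_{s}$ --- so that the identification of $\inlrcurly{\FYv{\cdot}|_{M_{s}}}$ with $\inlrcurly{\FQw{\cdot}|_{M_{s}}:\, (\Q,w) \in \olW_{t,s}}$ supplied by lemma~\ref{lemma_stepped_dual} is enough to carry out the reindexing without enlarging or shrinking the intersection. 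The reduction of $\olW_{t,s}$ to measures with $\Q = \P|_{\Ft{t}}$ is already built into lemma~\ref{lemma_stepped_dual} and only needs to be invoked; unlike the cocycle theorem, no minimax argument enters here, so the difficulty is purely the change of index set.
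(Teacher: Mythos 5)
Your proposal is correct and follows essentially the same route as the paper, whose proof consists precisely of the one-line remark that the result is an adaptation of theorem~\ref{thm_probability_equal} to stepped risk measures via lemma~\ref{lemma_stepped_dual}; you have simply spelled out the details of that adaptation (biconjugation over the classical stepped dual pairs, then reindexing the intersection through the identification $\FYv{\cdot}|_{M_{s}} = \FQw{\cdot}|_{M_{s}}$ supplied by lemma~\ref{lemma_stepped_dual}). No gap.
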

\begin{proof}
This is an adaption of theorem~\ref{thm_probability_equal} to stepped risk measures using lemma~\ref{lemma_stepped_dual}.
\qed\end{proof}

We will use the above results to give a dual representation for closed coherent stepped risk measures.

\begin{corollary}
\label{cor_stepped_rep}
The dual representation for any closed coherent stepped risk measure $R_{t,s}: M_{s} \to \mathcal{G}(M_t;M_{t,+})$ with $t < s$ is given by
\[R_{t,s}(X) = \bigcap_{(\Q,w) \in \olW_{t,s}^{\max}} \lrparen{\EQt{-X}{t} + G_t(w)} \cap M_t\]
for any $X \in M_{s}$ where
\[\olW_{t,s}^{\max} = \lrcurly{(\Q,w) \in \olW_{t,s}: w_t^{s}(\Q,w) \in \plus{A_{t,s}}}.\]
\end{corollary}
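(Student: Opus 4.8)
The plan is to mirror the coherent part of theorem~\ref{thm_probability_equal}, replacing $R_t$, $A_t$, $\olW_t$, and $-\beta_t^{\min}$ by their stepped counterparts $R_{t,s}$, $A_{t,s}$, $\olW_{t,s}$, and $-\beta_{t,s}^{\min}$, and starting from the convex dual representation already established in lemma~\ref{lemma_stepped_rep}. Since $R_t$ is closed coherent, the restriction $R_{t,s} = R_t|_{M_s}$ is a closed coherent stepped risk measure (see the discussion preceding lemma~\ref{lemma_stepped_dual}), so its stepped acceptance set $A_{t,s}$ is a closed convex cone with $0 \in A_{t,s}$.

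First I would compute $-\beta_{t,s}^{\min}(\Q,w)$ explicitly for $(\Q,w) \in \olW_{t,s}$. Unfolding \eqref{stepped_beta} and using $\E{\trans{w}\EQt{X}{t}} = \E{\trans{w_t^s(\Q,w)}X}$ together with $w \notin \prp{M_t}$ gives
\[
-\beta_{t,s}^{\min}(\Q,w) = \lrcurly{u \in M_t : \inf_{X \in A_{t,s}} \E{\trans{w_t^s(\Q,w)}X} \le \E{\trans{w}u}}.
\]
Because $A_{t,s}$ is a cone, the infimum is either $0$, attained at $X = 0$, precisely when $w_t^s(\Q,w) \in \plus{A_{t,s}}$, or $-\infty$, obtained by rescaling any selector $X \in A_{t,s}$ with $\E{\trans{w_t^s(\Q,w)}X} < 0$, when $w_t^s(\Q,w) \notin \plus{A_{t,s}}$. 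Hence
\[
-\beta_{t,s}^{\min}(\Q,w) = \begin{cases} G_t(w) \cap M_t & \text{if } (\Q,w) \in \olW_{t,s}^{\max},\\ M_t & \text{else,}\end{cases}
\]
and both sets are already closed, so the closure in \eqref{stepped_beta} is harmless.

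Substituting this dichotomy into the representation of lemma~\ref{lemma_stepped_rep}, every term with $(\Q,w) \notin \olW_{t,s}^{\max}$ equals $-\beta_{t,s}^{\min}(\Q,w) + \lrparen{\EQt{-X}{t} + G_t(w)} \cap M_t = M_t$ and drops out of the intersection, while every term with $(\Q,w) \in \olW_{t,s}^{\max}$ reduces to $\lrparen{\EQt{-X}{t} + G_t(w)} \cap M_t$ (adding the cone $G_t(w) \cap M_t$ to a shifted half-space in $M_t$ leaves it unchanged); this yields the claimed formula. To see the right-hand side is a genuine intersection and not vacuously $M_t$, I would invoke remark~\ref{rem_stepped_dual_max} to get $\olW_t^{\max} \subseteq \olW_{t,s}^{\max}$, together with finiteness at zero of $R_t$ ensuring $\olW_t^{\max} \neq \emptyset$. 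The only mildly delicate step is the cone argument showing the infimum collapses to $0$ or $-\infty$, i.e. that $\olW_{t,s}^{\max}$ is exactly the set of dual variables whose penalty term is a half-space; everything else is bookkeeping carried over from the proof of theorem~\ref{thm_probability_equal}.
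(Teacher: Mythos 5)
Your proposal is correct and follows essentially the same route as the paper: both arguments reduce to the dichotomy that, for the conic stepped acceptance set $A_{t,s}$, the minimal stepped penalty function $-\beta_{t,s}^{\min}(\Q,w)$ equals $G_t(w)\cap M_t$ precisely when $w_t^{s}(\Q,w)\in\plus{A_{t,s}}$ (and equals $M_t$ otherwise), after which the claim follows by substituting into the convex stepped representation of lemma~\ref{lemma_stepped_rep}. Your write-up is in fact somewhat more explicit than the paper's about the final substitution step and about why the non-maximal terms drop out of the intersection.
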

\begin{proof}
Note that $-\beta_{t,s}^{\min}(\Q,w) = \cl \bigcup_{X \in A_{t,s}} \inlrparen{\inEQt{X}{t} + G_t(w)} \cap M_t = G_t(w) \cap M_t$ if and only if for every $X \in A_{t,s}$ we have
\[\E{\trans{w}\EQt{X}{t}} = \E{\trans{w_t^{s}(\Q,w)}X} \geq 0,\]
i.e. $w_t^{s}(\Q,w) \in \plus{A_{t,s}}$.
Thus, for a $M_{s,+}$-monotone closed coherent stepped risk measure $R_{t,s}$ with  $0 \leq t < s \leq T$ it holds that for any $(\Q,w) \in \olW_{t,s}$
\[-\beta_{t,s}^{\min}(\Q,w) = G_t(w) \cap M_t \Leftrightarrow w_t^{s}(\Q,w) \in \plus{A_{t,s}}.\]
An application of lemma~\ref{lemma_stepped_dual} provides the desired result.
\qed\end{proof}

Finally, we will use the above duality results to extend corollary~\ref{cor_conditional_dual} to the stepped risk measures.
\begin{corollary}
\label{cor_conditional_stepped_rep}
The dual representation for any closed conditionally convex stepped risk measure $R_{t,s}: M_{s} \to \mathcal{G}(M_t;M_{t,+})$ with $t < s$ is given by
\[R_{t,s}(X) = \bigcap_{(\Q,w) \in \olW_{t,s}} \lrsquare{-\alpha_{t,s}^{\min}(\Q,w) + \lrparen{\EQt{-X}{t} + \Gamma_t(w)} \cap M_t}\]
for any $X \in M_{s}$ where
\[-\alpha_{t,s}^{\min}(\Q,w) = \cl \bigcup_{X \in A_{t,s}} \lrparen{\EQt{X}{t} + \Gamma_t(w)} \cap M_t.\]
If $R_{t,s}$ is additionally conditionally coherent then
\[R_{t,s}(X) = \bigcap_{(\Q,w) \in \olW_{t,s}^{\max}} \lrparen{\EQt{-X}{t} + \Gamma_t(w)} \cap M_t.\]
\end{corollary}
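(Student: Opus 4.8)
The plan is to mirror the proof of corollary~\ref{cor_conditional_dual}, with theorem~\ref{thm_probability_equal} replaced by lemma~\ref{lemma_stepped_rep}, corollary~\ref{cor_stepped_rep} in place of the coherent part of theorem~\ref{thm_probability_equal}, and $A_t,\olW_t,\olW_t^{\max}$ replaced throughout by the stepped objects $A_{t,s},\olW_{t,s},\olW_{t,s}^{\max}$. First I would record two preliminary facts. Since $R_t$ is conditionally convex, its restriction $R_{t,s}=R_t|_{M_s}$ is conditionally convex as well, because $M_s=\LdpK{p}{s}{M}$ is stable under multiplication by scalars in $\LiF{t}\subseteq\LiF{s}$. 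Moreover the stepped acceptance set $A_{t,s}=A_t\cap M_s$ is $\Ft{t}$-decomposable, i.e.\ $1_D A_{t,s}+1_{D^c}A_{t,s}\subseteq A_{t,s}$ for every $D\in\Ft{t}$: this holds because $A_t$ has this property (a consequence of conditional convexity, as used in the proof of corollary~\ref{cor_conditional_dual}) and because $1_D x+1_{D^c}y\in M_s$ whenever $x,y\in M_s$ and $D\in\Ft{t}\subseteq\Ft{s}$.

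Next I would rewrite the conditional stepped penalty function as
\[
-\alpha_{t,s}^{\min}(\Q,w)=\lrcurly{u\in M_t:\ \essinf_{Z\in A_{t,s}}\trans{w}\EQt{Z}{t}\leq\trans{w}u},
\]
so that, using $w\notin\prp{M_t}$, for $(\Q,w)\in\olW_{t,s}$ and $X\in M_s$
\[
-\alpha_{t,s}^{\min}(\Q,w)+\lrparen{\EQt{-X}{t}+\Gamma_t(w)}\cap M_t=\lrcurly{u\in M_t:\ \essinf_{Z\in A_{t,s}}\trans{w}\EQt{Z-X}{t}\leq\trans{w}u}.
\]
Then I would show that the right-hand side of the asserted representation, call it $\bar R_{t,s}(X)$, equals $R_{t,s}(X)$ as given by lemma~\ref{lemma_stepped_rep}. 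For $\bar R_{t,s}(X)\subseteq R_{t,s}(X)$: taking expectations in $\trans{w}u\geq\essinf_{Z\in A_{t,s}}\trans{w}\EQt{Z-X}{t}$ and using the $\Ft{t}$-decomposability of $A_{t,s}$ together with \cite[theorem 1]{Y85} to interchange expectation and essential infimum recovers the defining inequality of lemma~\ref{lemma_stepped_rep}. For the reverse inclusion: if $u\in R_{t,s}(X)\setminus\bar R_{t,s}(X)$ there is a $(\Q,w)\in\olW_{t,s}$ with $D:=\{\trans{w}u<\essinf_{Z\in A_{t,s}}\trans{w}\EQt{Z-X}{t}\}\in\Ft{t}$ of positive measure; localizing on $D$ and again interchanging expectation and infimum gives $\E{1_D\trans{w}u}<\inf_{Z\in A_{t,s}}\E{1_D\trans{w}\EQt{Z-X}{t}}$, and since $(\Q,1_D w)\in\olW_{t,s}$ (using $\prp{M_t}=\{v\in\LdqF{t}:\trans{v}u=0\text{ a.s.\ }\forall u\in M_t\}$ and $w_t^{s}(\Q,1_D w)=1_D w_t^{s}(\Q,w)\in\plus{M_{s,+}}$) this contradicts $u\in R_{t,s}(X)$. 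Finally I would verify, by the same chain of inclusions as in corollary~\ref{cor_conditional_dual}, that the risk measure defined by the claimed representation is conditionally convex.

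For the conditionally coherent case I would argue exactly as in corollary~\ref{cor_conditional_dual}: positive homogeneity yields $0\in A_t$, hence $0\in A_{t,s}=A_t\cap M_s$, so $-\alpha_{t,s}^{\min}(\Q,w)\supseteq\Gamma_t(w)\cap M_t$ and in fact $-\alpha_{t,s}^{\min}(\Q,w)=1_D\Gamma_t(w)\cap M_t+1_{D^c}M_t$ with $D=\{\essinf_{Z\in A_{t,s}}\trans{w}\EQt{Z}{t}\geq 0\}$; comparing with $-\hat\alpha_{t,s}$ (which equals $\Gamma_t(w)\cap M_t$ when $\essinf_{Z\in A_{t,s}}\trans{w}\EQt{Z}{t}=0$ a.s.\ and $M_t$ otherwise) gives $R_{t,s}(X)\subseteq\bigcap_{(\Q,w)\in\olW_{t,s}^{\max}}\lrparen{\EQt{-X}{t}+\Gamma_t(w)}\cap M_t$, while the converse follows from corollary~\ref{cor_stepped_rep}, which gives $u\in R_{t,s}(X)$ iff $u\in M_t$ and $0\leq\inf_{(\Q,w)\in\olW_{t,s}^{\max}}\E{\trans{w}\EQt{X+u}{t}}$. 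Conditional coherence of the dual-defined $R_{t,s}$ then follows as in the conditionally convex case. The main obstacle I anticipate is the bookkeeping needed to confirm the hypotheses of Yan's theorem in the stepped setting — the $\Ft{t}$-decomposability of $A_{t,s}$ and the membership $(\Q,1_D w)\in\olW_{t,s}$ under $\Ft{t}$-localization of the second dual variable; the rest is a faithful transcription of the proof of corollary~\ref{cor_conditional_dual}.
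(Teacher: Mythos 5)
Your proposal is correct and takes essentially the same route as the paper: the paper's own proof is a one-line remark that the result is an adaptation of corollary~\ref{cor_conditional_dual} using lemma~\ref{lemma_stepped_rep} and corollary~\ref{cor_stepped_rep}, which is precisely the transcription you carry out. The extra bookkeeping you supply (the $\Ft{t}$-decomposability of $A_{t,s}$ and the membership $(\Q,1_D w)\in\olW_{t,s}$) is exactly what makes that adaptation go through, so nothing is missing.
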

\begin{proof}
This is an adaption of corollary~\ref{cor_conditional_dual} to stepped risk measures using the results of lemma~\ref{lemma_stepped_rep} and corollary~\ref{cor_stepped_rep}.
\qed\end{proof}

\bibliographystyle{spmpsci}
\bibliography{biblio-FS}

\begin{thebibliography}{10}
\providecommand{\url}[1]{{#1}}
\providecommand{\urlprefix}{URL }
\expandafter\ifx\csname urlstyle\endcsname\relax
  \providecommand{\doi}[1]{DOI~\discretionary{}{}{}#1}\else
  \providecommand{\doi}{DOI~\discretionary{}{}{}\begingroup
  \urlstyle{rm}\Url}\fi

\bibitem{AP10}
Acciaio, B., Penner, I.: Dynamic risk measures.
\newblock In: G.~Di~Nunno, B.~\"{O}ksendal (eds.) Advanced Mathematical Methods
  for Finance, pp. 1--34. Springer (2011)

\bibitem{AHR13}
Ararat, C., Hamel, A.H., Rudloff, B.: Set-valued shortfall and divergence risk
  measures.
\newblock Submitted for publication  (2014).
\newblock \urlprefix\url{http://arxiv.org/pdf/1405.4905}

\bibitem{AD97}
Artzner, P., Delbaen, F., Eber, J.M., Heath, D.: Thinking coherently.
\newblock Risk \textbf{10}, 68--71 (1997)

\bibitem{AD99}
Artzner, P., Delbaen, F., Eber, J.M., Heath, D.: Coherent measures of risk.
\newblock Mathematical Finance \textbf{9}(3), 203--228 (1999)

\bibitem{AD07}
Artzner, P., Delbaen, F., Eber, J.M., Heath, D., Ku, H.: Coherent multiperiod
  risk adjusted values and {Bellman's} principle.
\newblock Annals OR \textbf{152}(1), 5--22 (2007)

\bibitem{AF90}
Aubin, J.P., Frankowska, H.: {Set-valued analysis}.
\newblock Systems \& control. Birkh{\"a}user (1990)

\bibitem{TL12}
Ben~Tahar, I., L\'{e}pinette, E.: Vector-valued coherent risk measure
  processes.
\newblock International Journal of Theoretical and Applied Finance
  \textbf{17}(2), 1450,011 (2014)

\bibitem{BN04}
Bion-Nadal, J.: Conditional risk measures and robust representation of convex
  risk measures.
\newblock Ecole Polytechnique, CMAP, preprint no. 557  (2004).
\newblock
  \urlprefix\url{http://www.cmap.polytechnique.fr/preprint/repository/557.pdf}

\bibitem{BN08}
Bion-Nadal, J.: Dynamic risk measures: Time consistency and risk measures from
  bmo martingales.
\newblock Finance and Stochastics \textbf{12}(2), 219--244 (2008)

\bibitem{BN09}
Bion-Nadal, J.: Time consistent dynamic risk processes.
\newblock Stochastic Processes and their Applications \textbf{119}(2), 633 --
  654 (2009)

\bibitem{CDK06}
Cheridito, P., Delbaen, F., Kupper, M.: Dynamic monetary risk measures for
  bounded discrete-time processes.
\newblock Electronic Journal of Probability \textbf{11}(3), 57--106 (2006)

\bibitem{CK10}
Cheridito, P., Kupper, M.: Composition of time-consistent dynamic monetary risk
  measures in discrete time.
\newblock International Journal of Theoretical and Applied Finance
  \textbf{14}(1), 137--162 (2011)

\bibitem{CS09}
Cheridito, P., Stadje, M.: Time-inconsistency of var and time-consistent
  alternatives.
\newblock Finance Research Letters \textbf{6}(1), 40--46 (2009)

\bibitem{D06}
Delbaen, F.: The structure of m-stable sets and in particular of the set of
  risk neutral measures.
\newblock In: M.~\'{E}mery, M.~Yor (eds.) In Memoriam Paul-Andr\'{e} Meyer,
  \emph{Lecture Notes in Mathematics}, vol. 1874, pp. 215--258. Springer Berlin
  / Heidelberg (2006)

\bibitem{DPRG10}
Delbaen, F., Peng, S., Gianin, E.R.: Representation of the penalty term of
  dynamic concave utilities.
\newblock Finance and Stochastics \textbf{14}(3), 449--472 (2010)

\bibitem{DS05}
Detlefsen, K., Scandolo, G.: Conditional and dynamic convex risk measures.
\newblock Finance and Stochastics \textbf{9}(4), 539--561 (2005)

\bibitem{FR12}
Feinstein, Z., Rudloff, B.: Time consistency of dynamic risk measures in
  markets with transaction costs.
\newblock Quantitative Finance \textbf{13}(9), 1473--1489 (2013)

\bibitem{FR13-survey}
Feinstein, Z., Rudloff, B.: A comparison of techniques for dynamic multivariate
  risk measures.
\newblock In: A.~Hamel, F.~Heyde, A.~L{\"o}hne, B.~Rudloff, C.~Schrage (eds.)
  Set Optimization and Applications in Finance - The State of the Art, Springer
  Proceedings in Mathematics \& Statistics. Springer (Forthcoming in 2014).
\newblock \urlprefix\url{http://arxiv.org/pdf/1305.2151.pdf}

\bibitem{FP06}
F\"{o}llmer, H., Penner, I.: Convex risk measures and the dynamics of their
  penalty functions.
\newblock Statistics and decisions \textbf{24}(1), 61--96 (2006)

\bibitem{FS02}
F{\"o}llmer, H., Schied, A.: Convex measures of risk and trading constraints.
\newblock Finance and Stochastics \textbf{6}(4), 429--447 (2002)

\bibitem{FS04}
F{\"o}llmer, H., Schied, A.: Stochastic Finance: An Introduction in Discrete
  Time, third edn.
\newblock De Gruyter Studies in Mathematics. Walter de Gruyter \& Co. (2011)

\bibitem{FG02}
Frittelli, M., Rosazza~Gianin, E.: Putting order in risk measures.
\newblock Journal of Banking \& Finance \textbf{26}(7), 1473--1486 (2002)

\bibitem{FG04}
Frittelli, M., Rosazza~Gianin, E.: Dynamic convex risk measures.
\newblock In: G.~Szeg\"{o} (ed.) New Risk Measures for the 21th Century, pp.
  227--248. John Wiley \& Sons (2004)

\bibitem{GRTZ03}
G\"{o}pfert, A., Riahi, H., Tammer, C., Z\u{a}linescu, C.: Variational Methods
  in Partially Ordered Spaces.
\newblock CMS Books in Mathematics. Springer (2003)

\bibitem{H09}
Hamel, A.H.: A duality theory for set-valued functions {I}: Fenchel conjugation
  theory.
\newblock Set-Valued and Variational Analysis \textbf{17}(2), 153--182 (2009)

\bibitem{HH10}
Hamel, A.H., Heyde, F.: Duality for set-valued measures of risk.
\newblock SIAM J. on Financial Mathematics \textbf{1}(1), 66--95 (2010)

\bibitem{HHR10}
Hamel, A.H., Heyde, F., Rudloff, B.: Set-valued risk measures for conical
  market models.
\newblock Mathematics and Financial Economics \textbf{5}(1), 1--28 (2011)

\bibitem{HR08}
Hamel, A.H., Rudloff, B.: Continuity and finite-valuedness of set-valued risk
  measures.
\newblock In: C.~Tammer, F.~Heyde (eds.) Festschrift in Celebration of Prof.
  Dr. Wilfried Grecksch�s 60th Birthday, pp. 46--64. Shaker Verlag (2008)

\bibitem{HRY12}
Hamel, A.H., Rudloff, B., Yankova, M.: Set-valued average value at risk and its
  computation.
\newblock Mathematics and Financial Economics \textbf{7}(2), 229--246 (2013)

\bibitem{HS12-uc}
Heyde, F., Schrage, C.: Continuity concepts for set-valued functions and a
  fundamental duality formula for set-valued optimization.
\newblock Journal of Mathematical Analysis and Applications \textbf{397}(2),
  772--784 (2013)

\bibitem{HP97}
Hu, S., Papageorgiou, N.S.: Handbook of Multivalued Analysis: Volume I: Theory.
\newblock Mathematics and Its Applications. Springer (1997)

\bibitem{JMT04}
Jouini, E., Meddeb, M., Touzi, N.: Vector-valued coherent risk measures.
\newblock Finance and Stochastics \textbf{8}(4), 531--552 (2004)

\bibitem{K99}
Kabanov, Y.M.: Hedging and liquidation under transaction costs in currency
  markets.
\newblock Finance and Stochastics \textbf{3}(2), 237--248 (1999)

\bibitem{KS09}
Kabanov, Y.M., Safarian, M.: {Markets with Transaction Costs: Mathematical
  Theory}.
\newblock Springer Finance. Springer (2009)

\bibitem{KS07}
Kl\"{o}ppel, S., Schweizer, M.: Dynamic indifference valuation via convex risk
  measures.
\newblock Mathematical Finance \textbf{17}(4), 599--627 (2007)

\bibitem{L11}
L{\"o}hne, A.: Vector Optimization With Infimum and Supremum.
\newblock Vector Optimization. Springer (2011)

\bibitem{LR11}
{L{\"o}hne}, A., Rudloff, B.: {An algorithm for calculating the set of
  superhedging portfolios in markets with transaction costs}.
\newblock International Journal of Theoretical and Applied Finance
  \textbf{17}(2), 1450,012 (2014)

\bibitem{PP10}
Pennanen, T., Penner, I.: Hedging of claims with physical delivery under convex
  transaction costs.
\newblock SIAM J. on Financial Mathematics \textbf{1}(1), 158--178 (2010)

\bibitem{R04}
Riedel, F.: {Dynamic coherent risk measures}.
\newblock Stochastic Processes and Their Applications \textbf{112}(2), 185--200
  (2004)

\bibitem{RS05}
Ruszczynski, A., Shapiro, A.: Conditional risk mappings.
\newblock Mathematics of Operations Research \textbf{31}(3), 544--561 (2006)

\bibitem{S04}
Schachermayer, W.: The fundamental theorem of asset pricing under proportional
  transaction costs in finite discrete time.
\newblock Mathematical Finance \textbf{14}(1), 19--48 (2004)

\bibitem{Y85}
Yan, J.: On the commutability of essential infimum and conditional expectation
  operations.
\newblock Kexue Tongbao \textbf{30}(8), 1013--1018 (1985)

\end{thebibliography}
\end{document}